\documentclass{article}
\pdfoutput=1

\usepackage[utf8]{inputenc}
\usepackage[margin=1in]{geometry}
\usepackage{color, colortbl}
\usepackage[dvipsnames]{xcolor}
\usepackage{graphicx,epstopdf}
\usepackage{amsmath,amssymb,amsthm}
\usepackage{algorithm}
\usepackage{algorithmic}
\usepackage{bm}
\usepackage[caption=false]{subfig}
\usepackage{appendix}
\usepackage{multirow}
\usepackage{mathtools}
\usepackage{braket}
\usepackage{csquotes}
\usepackage[colorlinks=true,linkcolor=blue,bookmarks=true,breaklinks=true]{hyperref}
\usepackage[numbers,comma,sort&compress]{natbib}
\usepackage[english]{babel}
\usepackage[qm]{qcircuit}
\usepackage[capitalize,nameinlink]{cleveref}
\usepackage{tikz}
\usetikzlibrary{shapes.geometric}
\usetikzlibrary{arrows.meta}
\usetikzlibrary{positioning}
\usetikzlibrary{calc}
\usepackage{array}
\usepackage[normalem]{ulem}
\usepackage{makecell}
\usepackage{authblk}

\definecolor{color1}{HTML}{FFDAB9}
\definecolor{color2}{HTML}{FFA07A}
\definecolor{color3}{HTML}{FF7F50}
\definecolor{color4}{HTML}{FF6347}

\definecolor{amber}{rgb}{1.0, 0.75, 0.0}

\newtheorem{thm}{\protect\theoremname}
  \theoremstyle{plain}
  \newtheorem{lem}[thm]{\protect\lemmaname}
  \theoremstyle{remark}
  
  \theoremstyle{plain}
  \newtheorem*{lem*}{\protect\lemmaname}
  \theoremstyle{plain}
  \newtheorem{prop}[thm]{\protect\propositionname}
  \theoremstyle{plain}
  \newtheorem{cor}[thm]{\protect\corollaryname}
  
  \newtheorem{defn}[thm]{Definition}

  \newtheorem{result}[thm]{Result}
  
  \newcolumntype{x}[1]{>{\centering\let\newline\\\arraybackslash\hspace{0pt}}p{#1}}
  
\PassOptionsToPackage{USenglish}{babel}
\usepackage[USenglish]{babel}
  \providecommand{\corollaryname}{Corollary}
  \providecommand{\lemmaname}{Lemma}
  \providecommand{\propositionname}{Proposition}
  \providecommand{\remarkname}{Remark}
\providecommand{\theoremname}{Theorem}

\newcommand{\abs}[1]{\left\lvert#1\right\rvert}
\newcommand{\norm}[1]{\left\lVert#1\right\rVert}

\newcommand{\ud}{\,\mathrm{d}}
\newcommand{\Or}{\mathcal{O}}

\DeclareMathOperator{\poly}{poly}

\renewcommand{\Re}{\mathop{\mathrm{Re}}}
\renewcommand{\Im}{\mathop{\mathrm{Im}}}
\newcommand{\sgn}{\mathop{\mathrm{sgn}}}

\newcommand{\REV}[1]{#1}

\let\originalleft\left
\let\originalright\right
\renewcommand{\left}{\mathopen{}\mathclose\bgroup\originalleft}
\renewcommand{\right}{\aftergroup\egroup\originalright}

\begin{document}

\title{Quantum algorithm for linear non-unitary dynamics with near-optimal dependence on all parameters}

\author[1]{Dong An}
\author[1,2]{Andrew M. Childs}
\author[3,4,5]{Lin Lin}
\affil[1]{Joint Center for Quantum Information and Computer Science \authorcr University of Maryland, College Park, MD 20742, USA}
\affil[2]{Department of Computer Science and
Institute for Advanced Computer Studies \authorcr University of Maryland, College Park, MD 20742, USA}
\affil[3]{Department of Mathematics, University of California, Berkeley, CA 94720, USA}
\affil[4]{Applied Mathematics and Computational Research Division \authorcr Lawrence Berkeley National Laboratory, Berkeley, CA 94720, USA}
\affil[5]{Challenge Institute for Quantum Computation, University of California, Berkeley, CA 94720, USA}

\date{}

\maketitle

\begin{abstract}
We introduce a family of identities that express general linear non-unitary evolution operators as a linear combination of unitary evolution operators, each solving a Hamiltonian simulation problem. This formulation can exponentially enhance the accuracy of the recently introduced linear combination of Hamiltonian simulation (LCHS) method [An, Liu, and Lin, Physical Review Letters, 2023]. For the first time, this approach enables quantum algorithms to solve linear differential equations with both optimal state preparation cost and near-optimal scaling in matrix queries on all parameters. 
\end{abstract}

\tableofcontents


\section{Introduction}

Accurate and efficient simulation of large-scale differential equations presents significant challenges across various computational scientific domains, from physics and chemistry to biology and economics.
Consider a system of linear ordinary differential equations (ODE) in the form
\begin{equation}\label{eqn:ODE}
    \frac{\ud u(t)}{\ud t} = -A(t) u(t) + b(t), \quad u(0) = u_0. 
\end{equation}
Here $A(t) \in \mathbb{C}^{N\times N}$, and $u(t), b(t) \in \mathbb{C}^{N}$. 
The solution of~\cref{eqn:ODE} can be represented as  
\begin{equation}\label{eqn:ODE_solu}
    u(t) = \mathcal{T}e^{-\int_0^t A(s) \ud s} u_0 + \int_0^t \mathcal{T}e^{-\int_s^t A(s') \ud s'} b(s) \ud s,
\end{equation}
where $\mathcal{T}$ denotes the time-ordering operator. 

In practice, the dimension $N$ can be very large, especially for discretized differential equations in high-dimensional spaces. This poses significant computational challenges for classical algorithms. 
 Recent advancements in quantum algorithms offer promising new avenues for substantially improving the simulation of differential equations.
Quantum algorithms for~\cref{eqn:ODE} aim at preparing an $\epsilon$-approximation of the quantum state $\ket{u(T)} = u(T)/\|u(T)\|$, encoding the normalized solution in its amplitudes. 

Among all ODEs, one special case is the Schr\"odinger equation, in which the coefficient matrix $A(t) = iH(t)$ is anti-Hermitian and $b(t) \equiv 0$. 
Simulating the Schr\"odinger equation on a quantum computer, also known as the Hamiltonian simulation problem, is among the most significant applications of quantum computers, and may be one of the first to achieve practical quantum advantage. 
In recent years, there has been significant progress on designing and analyzing efficient quantum algorithms for Hamiltonian simulation in both the time-independent~\cite{BerryAhokasCleveEtAl2007,BerryChilds2012,BerryCleveGharibian2014,BerryChildsCleveEtAl2014,BerryChildsCleveEtAl2015,BerryChildsKothari2015,LowChuang2017,ChildsMaslovNamEtAl2018,ChildsOstranderSu2019,Campbell2019,Low2019,ChildsSu2019,ChildsSuTranEtAl2020,ChenHuangKuengEtAl2020,SahinogluSomma2020} and time-dependent~\cite{HuyghebaertDeRaedt1990,WiebeBerryHoyerEtAl2010,PoulinQarrySommaEtAl2011,WeckerHastingsWiebeEtAl2015,LowWiebe2019,BerryChildsSuEtAl2020,AnFangLin2021,AnFangLin2022} cases. 

In this work, we study quantum algorithms for general ODEs beyond Hamiltonian simulation with general coefficient matrix $A(t)$ and inhomogeneous term $b(t)$. 
As the Hamiltonian simulation problem can readily be efficiently solved on quantum computers, a natural approach to general ODEs is to reduce them to the case of Hamiltonian simulation. 
Such an idea has been realized in recent work~\cite{AnLiuLin2023}, which starts from the Cartesian decomposition~\cite[Chapter I]{Bhatia1997}
\begin{equation}\label{eqn:A_cartesian_1}
A(t) = L(t) + iH(t).
\end{equation}
Here the Hermitian matrices
\begin{equation}\label{eqn:A_cartesian_2}
L(t) = \frac{A(t)+A(t)^{\dagger}}{2}, \quad H(t) = \frac{A(t)-A(t)^{\dagger}}{2i}
\end{equation}
are called the \emph{real} and \emph{imaginary} part of $A(t)$, respectively. 
Throughout the paper we assume $L(t)$ is positive semi-definite for all $t$, which guarantees the asymptotic stability of the dynamics (see~\cref{app:stability} for details). 
Under this assumption, Ref.~\cite{AnLiuLin2023} establishes an identity, expressing any non-unitary evolution operator as a linear combination of Hamiltonian simulation (LCHS) problems:
\begin{equation}\label{eqn:LCHS_original}
    \mathcal{T} e^{-\int_0^t A(s) \ud s} = \int_{\mathbb{R}} \frac{1}{\pi (1+k^2)} \mathcal{T} e^{-i \int_0^t (kL(s)+H(s)) \ud s} \ud k. 
\end{equation}
Notice that $\mathcal{T} e^{-i \int_0^t (kL(s)+H(s)) \ud s}$ is the unitary time-evolution operator for the Hamiltonian $kL(s)+H(s)$, and thus~\cref{eqn:LCHS_original} can be viewed as a weighted continuous summation of unitary operators. 
An efficient quantum algorithm results from a suitable discretization of~\cref{eqn:LCHS_original} and the quantum linear combination of unitaries (LCU) subroutine~\cite{BerryChildsCleveEtAl2014,GilyenSuLowEtAl2019}. 

Prior to LCHS, substantial efforts have been made in addressing general ODEs~\cite{Berry2014,BerryChildsOstranderEtAl2017,ChildsLiu2020,Krovi2022,BerryCosta2022,FangLinTong2022,JinLiuYu2022}. 
The mechanism of LCHS stands distinct from the methodologies of these previous approaches, providing two notable advantages. 
First, the LCHS formula~\cref{eqn:LCHS_original} does not rely on the spectral mapping theorem, \REV{which states that the spectrum of a matrix polynomial is the spectrum of the matrix transformed by the same polynomial}. Compared to transformation-based methods such as quantum singular value transformation (QSVT)~\cite{GilyenSuLowEtAl2019}, the applicable regime of LCHS is greatly broadened to ODEs with general time-dependent matrix $A(t)$. 
Second, the LCHS method does not rely on converting the ODE system into a dilated linear system of equations. As a result, the state preparation cost (i.e., the number of queries to prepare the initial state $u_0$) is significantly reduced, and LCHS achieves optimal state preparation cost. 
(We present a more comprehensive review of related works below, in~\cref{sec:intro_comp}.)

The main drawback of the original LCHS method based on~\cref{eqn:LCHS_original} is the slow decay rate of the integrand as a function of $k$, resulting in non-optimal scaling with precision of the number of queries to the matrices. 
Specifically, to numerically implement~\cref{eqn:LCHS_original}, we truncate the integral to a finite interval $[-K,K]$ and use numerical quadrature. 
As $\frac{1}{\pi(1+k^2)} =\Theta(k^{-2})$ only quadratically decays, we should choose the truncation parameter $K =\Or(1/\epsilon)$ in order to obtain an $\epsilon$-approximation.  
This directly introduces a computational overhead of  $\Or(1/\epsilon)$ in the quantum algorithm, as the Hamiltonian $kL(s)+H(s)$ may have spectral norm as large as $K\norm{L(s)}$ and the complexity of generic Hamiltonian simulation algorithms scales at least linearly in the spectral norm (and thus at least linearly in $1/\epsilon$).

In this study, we substantially broaden the scope of the original LCHS method by introducing a family of LCHS formulae. The best formula in this family has an integrand with an exponential decay rate. Consequently, in comparison to~\cref{eqn:LCHS_original}, the precision dependence of the LCHS algorithm is exponentially improved. For the first time, our new LCHS algorithm achieves simultaneously both optimal state preparation cost and near-optimal scaling\footnote{In this paper, with some slight abuse of terminology, ``near-optimal scaling'' means that the dependence on each individual parameter  matches its query lower bound up to poly-logarithmic factors. Similarly, ``exponential improvement in precision'' refers to the improvement from a polynomial to a poly-logarithmic dependence on $1/\epsilon$.} of the number of queries to matrices as a function of all parameters. 
We further improve our LCHS algorithm with even better precision dependence for ODEs with time-independent coefficient matrix, and design an efficient Gibbs state preparation algorithm based on it. 
We also discuss a hybrid implementation of the improved LCHS formula, which may be suitable for early fault-tolerant quantum devices.

\subsection{Main results}

The main mathematical result in this work is a generalization of the original LCHS formula~\cref{eqn:LCHS_original}. 
Surprisingly, the integrand $\frac{1}{\pi(1+k^2)}$ in~\cref{eqn:LCHS_original} can be replaced by a large family of functions in the form of $\frac{f(k)}{1-ik}$ where $f(k)$, which we call the kernel function, satisfies only mild assumptions. 

\begin{result} [Informal version of~\cref{thm:LCHS_improved}]\label{result:LCHS} 
Assuming that $L(s) \succeq 0$ for all $s\in [0,t]$, and that the function $f(z)$ is analytic on the lower half plane with mild decay and normalization conditions, the following formula holds:
    \begin{equation}\label{eqn:LCHS_intro}
        \mathcal{T} e^{-\int_0^t A(s) \ud s} = \int_{\mathbb{R}} \frac{f(k)}{ 1-ik} \mathcal{T} e^{-i \int_0^t (kL(s)+H(s)) \ud s} \ud k.
    \end{equation}
\end{result}

\Cref{sec:LCHS} discusses~\cref{result:LCHS} in detail. 
Notice that the original LCHS formula~\cref{eqn:LCHS_original} is the special case of the generalized formula with $f(z) = \frac{1}{\pi(1+iz)}$. 
The flexibility in the generalized LCHS formula suggests the possibility of other kernel functions with superior decay properties.
In particular, this paper focuses on the family of kernel functions
\begin{equation}\label{eqn:kernel_intro}
    f(z) = \frac{1}{2\pi e^{-2^\beta} e^{(1+iz)^{\beta}} }, \quad \beta \in (0,1).
\end{equation}
Observe that as $z = k \in \mathbb{R}$ and $k \rightarrow \infty$, this new kernel function decays  at a near-exponential rate of $e^{-c |k|^{\beta}}$. Consequently, we no longer need to truncate the interval at $K = \Or\left(\frac{1}{\epsilon}\right)$, and can instead use the much smaller cutoff $K = \mathcal{O}((\log(1/\epsilon))^{1/\beta})$.\footnote{Throughout the paper \REV{(except in~\cref{sec:adaptive})}, $\beta$ should be viewed as a fixed constant, and the constant factors in the big-O notation may depend on $\beta$. \REV{In~\cref{sec:adaptive} we discuss how to find a good choice of $\beta$ to further reduce the total complexity.}} This leads to an exponential reduction in the Hamiltonian simulation time with respect to $\epsilon$ when compared to the original LCHS formula.

Furthermore, in~\cref{sec:kernel_nogo}, we prove that for kernel functions $f(z)$ which meet the analyticity and decay criteria in \cref{result:LCHS}, even a slight improvement from near exponential decay as specified in \cref{eqn:kernel_intro} to an exponential decay rate, i.e., $e^{-c|k|}$ as $|k| \to \infty$ along the real axis, leads to the kernel function becoming identically zero in the complex plane (\cref{prop:prop_non_existence}). Consequently, $f(z)$ cannot satisfy the normalization condition, and \cref{eqn:LCHS_intro} fails to hold. 
Therefore, the choice of the kernel function given in~\cref{eqn:kernel_intro} as $\beta \rightarrow 1$ is asymptotically near-optimal within the current LCHS framework. 

Based on the improved LCHS formula, we propose an efficient quantum algorithm for solving linear ODEs in the form of~\cref{eqn:ODE}. 
By plugging the improved LCHS formula into the solution representation in~\cref{eqn:ODE_solu}, we can write the solution $u(T)$ as a weighted continuous summation of unitary operators acting on quantum states. 
We then use composite Gaussian quadrature (see~\cref{sec:discertization}) to discretize the integrals, resulting in an approximation of the solution $u(T)$ as a linear combination of Hamiltonian simulation problems. 
To achieve the best asymptotic scaling, we implement each Hamiltonian evolution operator by the truncated Dyson series method~\cite{LowWiebe2019}, and perform the linear combination by the LCU subroutine~\cite{GilyenSuLowEtAl2019} (see~\cref{app:LCU} for an introduction of LCU). 
This improved LCHS algorithm has the following query complexity (see~\cref{sec:quantum_complexity} for details). 

\begin{result}[Informal version of~\cref{thm:complexity_inhomo}]\label{result:complexity}
    The improved LCHS algorithm can solve the linear ODE~\cref{eqn:ODE} at time $T$ with error at most $\epsilon$ using $ \widetilde{\mathcal{O}}\left( \frac{\|u_0\|+\|b\|_{L^1}}{\|u(T)\|} \alpha_A T \left(\log\left(\frac{ 1 }{ \epsilon}\right)\right)^{1+1/\beta}  \right)$ queries to the matrix input oracle\footnote{\REV{Here the matrix input oracle refers to the HAM-T oracle for $A(t)$, whose formal definition is given later in~\cref{eqn:oracle_A}. The HAM-T oracle is a standard matrix input model for time-dependent Hamiltonian simulation algorithms~\cite{LowWiebe2019} and quantum algorithms for differential equations with time-dependent coefficient matrices~\cite{FangLinTong2022,BerryCosta2022}. } } for $A(t)$, and $ \mathcal{O}\left( \frac{\|u_0\|+\|b\|_{L^1}}{\|u(T)\|}  \right)$ queries to the state preparation oracles for $u_0$ and $b(t)$. 
    Here $\alpha_A \geq \max_t \|A(t)\|$, $\|b\|_{L^1} = \int_0^T \|b(s)\| \ud s$, and $\beta \in (0,1)$. 
\end{result}

The improved LCHS method inherits the optimal state preparation cost from the original LCHS method. 
Specifically, we only need one copy of the input state in a single run of the algorithm.  
The factor $\frac{\|u_0\|+\|b\|_{L^1}}{\|u(T)\|}$ is due to the success probability in the post-selection step after LCU. 
The dominant factor in the matrix query complexity arises from the Hamiltonian simulation of $kL(t)+H(t)$, which is executed via the truncated Dyson series method. This results in linear scaling with respect to both $\alpha_A$ and $T$.
The exponent of $\log(1/\epsilon)$ is $1+1/\beta$, which has two main contributing factors.
One factor of $\log(1/\epsilon)$ is tied to the precision requirement in the truncated Dyson series method. 
The additional factor $(\log(1/\epsilon))^{1/\beta}$ is a result of setting $K = \mathcal{O}((\log(1/\epsilon))^{1/\beta})$ to control the truncation error effectively. Consequently, the spectral norm $\norm{kL(t)+H(t)}$ is bounded by $K\norm{L(t)}+\norm{H(t)} = \mathcal{O}((\log(1/\epsilon))^{1/\beta})\norm{L(t)}+\norm{H(t)}$. This bound reflects the multiplicative influence of the factor $\mathcal{O}((\log(1/\epsilon))^{1/\beta})$ on the spectral norm, particularly when the first term is the dominant contributor.

\subsection{Related works}\label{sec:intro_comp}

Berry proposed the first quantum linear ODE algorithm, based on a multi-step method~\cite{Berry2014}. 
The algorithm converts the linear ODE problem into a dilated system of linear equations and then applies a quantum linear system algorithm. 
Specifically, the algorithm first applies standard multi-step time discretization. 
The discretized ODE is formulated as a linear system of equations, whose solution is the entire history of the evolution. 
Then the HHL algorithm~\cite{HarrowHassidimLloyd2009} is applied to obtain a quantum state encoding the evolution trajectory, and \REV{amplitude amplification and post-selection are performed to produce the state encoding only the solution $u(T)$ at the final time. }

Quantum multi-step methods are applicable to inhomogeneous ODEs with time-independent $A$ and $b$ assuming that $A$ can be diagonalized and satisfies certain stability criteria. 
These methods have query complexity that is polynomial in the norm of $A$, evolution time $T$, precision $\epsilon$, and certain norms of the solutions. 
Several subsequent works have provided better asymptotic scaling and broader applicability. 
Most of these improved algorithms rely on the same strategy of converting the ODE problem into a linear system of equations but apply more advanced time discretization approaches and/or quantum linear system algorithms~\cite{Ambainis2012,ChildsKothariSomma2017,SubasiSommaOrsucci2019,LinTong2020,AnLin2022,CostaAnYuvalEtAl2022}. 
Works along this direction include quantum algorithms based on truncated Taylor series~\cite{BerryChildsOstranderEtAl2017,Krovi2022}, spectral methods~\cite{ChildsLiu2020}, and truncated Dyson series~\cite{BerryCosta2022}. 
The most recent linear-system-based algorithm is the truncated Dyson series method\footnote{In this paper, the ``truncated Dyson series method'' may refer to two different methods, depending on the context. For Hamiltonian simulation, it means the algorithm proposed in~\cite{BerryChildsCleveEtAl2015} and subsequently analyzed in~\cite{LowWiebe2019}. For generic linear ODEs, it means the algorithm in~\cite{BerryCosta2022}. }~\cite{BerryCosta2022}, which scales linearly in the norm of $A$ and the evolution time $T$, and poly-logarithmically in the error $\epsilon$.
The truncated Dyson series method is applicable to time-dependent $A(t)$ and $b(t)$ such that the real part $L(t)$ is positive semi-definite. 

From the perspective of computational complexity, the main drawback of the algorithms based on linear system solvers is that they use a large number of queries to the state preparation oracle. 
Specifically, even for the optimal quantum linear system algorithm~\cite{CostaAnYuvalEtAl2022}, the query complexity to the state preparation oracle for $y$ in the system $Mx = y$ is still $\mathcal{O}(\kappa_M \log(1/\epsilon))$, where $\kappa_M$ is the condition number of the matrix $M$. 
When applied to designing quantum ODE algorithms, the matrix of the dilated linear system has condition number linear in $T$ and the spectral norm of $A(t)$ even with high-order time discretization~\cite{ChildsLiu2020,BerryCosta2022}, and the construction of the right-hand side in the linear system involves the initial condition $u_0$ of the ODE as well as the inhomogeneous term $b(t)$. 
As a result, the state preparation cost in these quantum ODE algorithms still depends linearly on $T$ and $\|A(t)\|$, and logarithmically on $1/\epsilon$. 
On the contrary, Hamiltonian simulation algorithms only need a single copy of the initial state. The best known lower bound on the state preparation cost for generic linear ODEs is $\Omega(\|u_0\|/\|u(T)\|)$~\cite{AnLiuWangEtAl2022}, which does not have explicit dependence on $T$, $\|A(t)\|$, or $\epsilon$. 

QSVT~\cite{GilyenSuLowEtAl2019} can implement a matrix function $f(A)$ where the transformation is defined over the singular values of $A$. 
When $A$ is time-independent and either Hermitian or anti-Hermitian, the time-evolution operator $e^{-AT}$ is closely related to the singular value transformation. 
In particular, thanks to the spectral mapping theorem, QSVT can implement $e^{-AT}$ with time-independent Hermitian or anti-Hermitian $A$. Combined with amplitude amplification, the state preparation cost is optimal. 
However, such an argument does not generalize to cases where $A$ has both Hermitian and anti-Hermitian parts or $A(t)$ is time-dependent.

The quantum time-marching method~\cite{FangLinTong2022} is the first quantum ODE algorithm with optimal state preparation cost for general homogeneous linear ODEs. 
This algorithm avoids using the quantum linear system algorithm. 
Instead, it mimics the classical time-marching strategy for ODEs by dividing the entire time interval into short segments and sequentially applying short-time numerical integrators to the initial state. 
Since numerical integrators for general ODEs are not inherently unitary, each utilization of these operators introduces a probability of failure. 
To prevent the success probability from exponentially decaying, the algorithm applies a uniform amplitude amplification procedure~\cite{LowChuang2017a,GilyenSuLowEtAl2019} based on QSVT at each evolution step. 
The quantum time-marching algorithm achieves optimal state preparation cost, but the dependence on $T$ and $\|A(t)\|$ becomes quadratic due to the extra uniform amplitude amplification subroutines. 

The original LCHS method~\cite{AnLiuLin2023} is the first quantum ODE algorithm that achieves optimal state preparation cost for both homogeneous and inhomogeneous linear differential equations. As discussed earlier, the main drawback of the original LCHS is that it is only a first-order method, so the matrix query complexity depends linearly on $1/\epsilon$. LCHS has been applied to address non-unitary quantum dynamics in the context of complex absorbing potentials (CAP). Here the Hermitian component is time-independent and fast-forwardable. By leveraging interaction picture Hamiltonian simulation~\cite{LowWiebe2019} and fast-forwarding techniques, the constraints of first-order accuracy can be surpassed, and the resulting complexity is near-optimal across all parameters for this application~\cite{AnLiuLin2023}.
The original LCHS method is closely related to the  Schr\"odingerization~\cite{JinLiuYu2022} approach, which converts a non-unitary differential equation into a dilated Schr\"odinger equation with an additional momentum dimension, subject to specific initial conditions. This technique has been successfully applied across various studies~\cite{JinLiuLiEtAl2023,HuJinZhang2023,JinLiLiuEtAl2023,JinLiuMa2023,HuJinLiuEtAl2023}. 
Schr\"odingerization is also a first-order method, and its matrix query complexity depends linearly on $1/\epsilon$.

\begin{table}[t]
    \renewcommand{\arraystretch}{2}
    \centering
    \begin{tabular}{c|c|c}\hline\hline
        \textbf{Method} & \textbf{Queries to $A(t)$} & \textbf{Queries to $\ket{u_0}$} \\\hline 
        Spectral method~\cite{ChildsLiu2020} & $\widetilde{\mathcal{O}}\left( \frac{\|u_0\|}{\|u(T)\|} \kappa_V \alpha_A T \text{~poly}\left(\log\left(\frac{1}{\epsilon}\right)\right)\right)$  & $\widetilde{\mathcal{O}}\left( \frac{\|u_0\|}{\|u(T)\|} \kappa_V \alpha_A T \text{~poly}\left(\log\left(\frac{1}{\epsilon}\right)\right)\right)$ \\\hline
        Truncated Dyson~\cite{BerryCosta2022}  & $\widetilde{\mathcal{O}}\left( \frac{\|u_0\|}{\|u(T)\|} \alpha_A T \left(\log\left(\frac{1}{\epsilon}\right)\right)^2\right)$ & $\mathcal{O}\left( \frac{\|u_0\|}{\|u(T)\|} \alpha_A T \log\left(\frac{1}{\epsilon}\right) \right)$ \\\hline
        Time-marching~\cite{FangLinTong2022} & $\widetilde{\mathcal{O}}\left( \frac{\|u_0\|}{\|u(T)\|} \alpha_A^2 T^2 \log\left(\frac{1}{\epsilon}\right)\right)$ & $\mathcal{O}\left( \frac{\|u_0\|}{\|u(T)\|} \right)$  \\\hline
        Original LCHS~\cite{AnLiuLin2023} & $\widetilde{\mathcal{O}}\left( \left(\frac{\|u_0\|}{\|u(T)\|}\right)^2 \alpha_A T /\epsilon \right)$ & $\mathcal{O}\left( \frac{\|u_0\|}{\|u(T)\|} \right)$ \\\hline
        \makecell{Improved LCHS \\(this work, time-dependent)} & $\widetilde{\mathcal{O}}\left( \frac{\|u_0\|}{\|u(T)\|} \alpha_A T \left(\log\left(\frac{1}{\epsilon}\right)\right)^{1+1/\beta} \right)$  & $\mathcal{O}\left( \frac{\|u_0\|}{\|u(T)\|} \right)$  \\\hline
        \makecell{Improved LCHS \\(this work, time-independent)} & $\widetilde{\mathcal{O}}\left( \frac{\|u_0\|}{\|u(T)\|} \alpha_A T \left(\log\left(\frac{1}{\epsilon}\right)\right)^{1/\beta} \right)$  & $\mathcal{O}\left( \frac{\|u_0\|}{\|u(T)\|} \right)$  \\\hline\hline
    \end{tabular}
    \caption{Comparison among improved LCHS and previous methods for homogeneous ODEs.
    All but the last line are complexities for an ODE with time-dependent coefficient matrix $A(t)$, while the last line assumes $A(t) \equiv A$ is time-independent. Here $\alpha_A \geq \|A(t)\|$, $T$ is the evolution time, $\epsilon$ is the error, and $\beta \in (0,1)$ is the parameter in the kernel function of the improved LCHS. 
    All but the spectral method assume the real part of $A(t)$ to be positive semi-definite, while in the spectral method $A(t)$ is assumed to be diagonalizable with matrix $V(t)$ such that $\kappa_V \geq \|V(t)\|\|V(t)^{-1}\|$ and all the eigenvalues of $A(t)$ have non-negative real parts. }
    \label{tab:comparison}
\end{table}
 
Compared to previous quantum ODE algorithms, our improved LCHS algorithm simultaneously achieves optimal state preparation cost and near-optimal dependence of the number of queries to the matrix on all parameters, without relying on fast-forwarding assumptions. 
\cref{tab:comparison} compares our improved LCHS algorithm and previous approaches in the homogeneous case.  
The previous algorithms shown in the table include two linear-system-based algorithms (the spectral method~\cite{ChildsLiu2020} and the truncated Dyson series method~\cite{BerryCosta2022}), the time-marching method~\cite{FangLinTong2022}, and the original LCHS method~\cite{AnLiuLin2023}. 
Compared to the spectral method and the truncated Dyson series method, our improved LCHS achieves lower state preparation cost while at least matching their complexity in queries to the matrix. 
Furthermore, our state preparation cost is $\mathcal{O}(\|u_0\|/\|u(T)\|)$, which is optimal and exactly matches the lower bound $\Omega(\|u_0\|/\|u(T)\|)$~\cite{AnLiuWangEtAl2022}. 
Compared to the time-marching and the original LCHS method, all of these three methods achieve optimal state preparation cost.
However, our improved LCHS method generally has better dependence of the query complexity to the matrix on the norm $\|A(t)\|$, the evolution time $T$, and the inverse precision $1/\epsilon$. 
The only exception is that our approach has an additional term $\left(\log\left(\frac{1}{\epsilon}\right)\right)^{1/\beta}$ in the matrix query complexity compared to that of the time-marching method.

\subsection{Additional results} 

\paragraph{Time-independent simulation:}
\Cref{result:complexity} describes the overall query complexity of our algorithm for general linear ODEs. 
In~\cref{sec:special_applications}, we obtain even better query complexity under some specific circumstances. 
We first consider the time-independent homogeneous case, where the coefficient matrix $A(t) \equiv A$ is time-independent and $b(t) \equiv 0$. 

\begin{result}[Informal version of~\cref{cor:complexity_homo_time_independent}]\label{result:complexity_homo_time_independent}
    Consider the time-independent homogeneous case of~\cref{eqn:ODE} where $A(t) \equiv A$ and $b(t) \equiv 0$. 
    Then the improved LCHS algorithm can prepare the normalized solution at time $T$ with error at most $\epsilon$ using $ \widetilde{\mathcal{O}}\left( \frac{\|u_0\|}{\|u(T)\|} \alpha_A T \left(\log\left(\frac{ 1 }{ \epsilon}\right)\right)^{1/\beta}  \right)$ queries to the block-encoding of $A$ and $ \mathcal{O}\left( \frac{\|u_0\|}{\|u(T)\|}  \right)$ queries to the state preparation oracles for $u_0$, 
    where $\alpha_A \geq \|A\|$ and $\beta \in (0,1)$. 
\end{result}

This improves the dependence on precision from $\widetilde{\mathcal{O}}((\log(1/\epsilon))^{1+1/\beta})$ to $\widetilde{\mathcal{O}}((\log(1/\epsilon))^{1/\beta})$, almost matching the lower bound $\widetilde{\Omega}(\log(1/\epsilon))$~\cite{BerryChildsCleveEtAl2014} as $\beta \rightarrow 1$. 
This improvement results from using a better Hamiltonian simulation algorithm in the time-independent case. 
Specifically, as $L$ and $H$ are time-independent, instead of the truncated Dyson series method, we can use quantum signal processing (QSP)~\cite{LowChuang2017,LowChuang2019} or QSVT~\cite{GilyenSuLowEtAl2019} to simulate $e^{-iT(kL+H)}$, so that the query complexity of this step is only $\mathcal{O}(\|kL+H\| T + \log(1/\epsilon))$. 
Even if $k$ is as large as $\mathcal{O}((\log(1/\epsilon))^{1/\beta})$, the overall matrix query complexity is still only $\mathcal{O}(\|A\| T (\log(1/\epsilon))^{1/\beta} + \log(1/\epsilon))$, which is almost linear in $\log(1/\epsilon)$. 
The key aspect here is the additive scaling of QSP/QSVT between the Hamiltonian norm and the precision. 
A similar improvement also applies to the case with time-dependent inhomogeneous term $b(t)$, as long as the coefficient matrix $A(t)\equiv A$ is time-independent.

\paragraph{Gibbs state preparation:}
Our algorithm for time-independent homogeneous ODEs can be directly applied to preparing a Gibbs (i.e., thermal) state $e^{-\gamma L} / Z_{\gamma}$. 
Here $L \succeq 0$ is the Hamiltonian, $\gamma$ is the inverse temperature of the system, and $Z_{\gamma} = \mathop{\mathrm{Tr}}(e^{-\gamma L}) $ is the partition function. 
The Gibbs state can be obtained as a partial trace of the purified Gibbs state $\sqrt{N/Z_{\gamma}} (I\otimes e^{-\gamma L/2} ) \bigl(\frac{1}{\sqrt{N}} \sum_{j=0}^{N-1} \ket{j}\ket{j} \bigr)$, which can be prepared by constructing a block-encoding of $e^{-\gamma L/2}$ using our LCHS method. 

\begin{result}[Informal version of~\cref{cor:gibbs_state}]\label{result:gibbs}
    Suppose $L \succeq 0$ is a Hamiltonian. 
    Then the improved LCHS algorithm can prepare the purified Gibbs state with error at most $\epsilon$ using $ \widetilde{\mathcal{O}}\left( \sqrt{\frac{N}{Z_{\gamma}}} \gamma \alpha_L  \left(\log\left(\frac{1}{\epsilon}\right)\right)^{1/\beta} \right)$ queries to a block-encoding of $L$. 
    Here $\alpha_L \geq \|L\|$, $\gamma$ is the inverse temperature, $Z_{\gamma}$ is the partition function, and $\beta \in (0,1)$. 
\end{result}

There have been several works on quantum algorithms for Gibbs state preparation. 
Many of them require stronger assumptions and/or stronger query access to the Hamiltonian $L$. 
For example, the early references~\cite{poulin2009sampling,vanApeldoorn2020quantum} assume the Hamiltonian to be strictly positive definite.
References~\cite{ChowdhurySomma2016,GilyenSuLowEtAl2019,ApersChakrabortyNovoEtAl2022} consider a positive semi-definite Hamiltonian $L$, but require access to a block-encoding of the shifted Hamiltonian $I-L/\alpha_L$ or the square root $\sqrt{L}$ of the Hamiltonian. 
Recently,~\cite{HolmesMuraleedharanSommaEtAl2022} proposed a Gibbs state preparation algorithm for a general Hamiltonian $L$. 
However, when applied to positive semi-definite Hamiltonians, the algorithm in~\cite{HolmesMuraleedharanSommaEtAl2022} scales super-linearly in $\alpha_L$ and $\gamma$, and exponentially in $\poly(\log(1/\epsilon))$. 
Compared to these previous methods, LCHS-based Gibbs state preparation works for a positive semi-definite Hamiltonian using only the ability to simulate its dynamics, and achieves almost linear dependence on $\gamma$ and $\alpha_L$ as well as poly-logarithmic dependence on $1/\epsilon$. 
See~\cref{app:Gibbs} for a more detailed discussion. 

\REV{
\paragraph{Adaptive implementation:} So far we have assumed $\beta$ in the kernel function to be a fixed parameter in $(0,1)$. 
Varying $\beta$ has two opposing effects on the efficiency of our algorithm: when $\beta$ gets closer to $1$, the query complexity of our algorithm becomes asymptotically better, but the constant factor blows up. 
In~\cref{sec:adaptive}, we show how to adaptively choose $\beta$ according to the target error $\epsilon$ to further improve the LCHS algorithm. 
Specifically, by choosing $\beta = 1 - \mathcal{O}((\log\log( 1/\epsilon ) )^{-1})$, we can improve the error dependence of the total query complexity from $(\log(1/\epsilon))^{1+1/\beta}$ to $(\log(1/\epsilon))^{2}$, although this increases the state preparation cost by an additional multiplicative factor of $\log\log(1/\epsilon)$. 
\begin{result}[Informal version of~\cref{cor:complexity_adaptive}]\label{result:adaptive}
    By adaptively choosing $\beta$, the improved LCHS algorithm can solve the linear ODE~\cref{eqn:ODE} at time $T$ with error at most $\epsilon$ using $ \widetilde{\mathcal{O}}\left( \frac{\|u_0\|+\|b\|_{L^1}}{\|u(T)\|} \alpha_A T \left(\log\left(\frac{ 1 }{ \epsilon}\right)\right)^2  \right)$ queries to the matrix input oracle for $A(t)$ and $ \widetilde{\mathcal{O}}\left( \frac{\|u_0\|+\|b\|_{L^1}}{\|u(T)\|} \log\log\left(\frac{1}{\epsilon}\right) \right)$ queries to the state preparation oracles for $u_0$ and $b(t)$. 
    Here $\alpha_A \geq \max_t \|A(t)\|$ and $\|b\|_{L^1} = \int_0^T \|b(s)\| \ud s$. 
\end{result}
}

\paragraph{Hybrid algorithm:}
All the algorithms we have discussed are based on a quantum coherent implementation of LCU, requiring multiple additional ancilla qubits and coherently controlled Hamiltonian simulation. 
If we are interested in the observable $u(T)^*O u(T)$ of the unnormalized solution $u(T)$ for a Hermitian matrix $O$, we may design a more near-term hybrid algorithm, as discussed in~\cref{sec:hybrid}. 
We first discretize~\cref{eqn:LCHS_intro}, again using composite Gaussian quadrature, to obtain the linear combination $\mathcal{T} e^{-\int_0^t A(s) \ud s} \approx \sum_j c_j U_j$ for some coefficients $c_j$ and Hamiltonian simulation operators $U_j$. Then 
\begin{equation}
    u(T)^* O u(T) \approx \sum_{l,j} \overline{c}_l c_j \braket{ u_0 |U_l^{\dagger} O U_j| u_0 },  
\end{equation}
where $\overline{z}$ denotes the complex conjugate of a complex number $z$. 
Thus the observable can be estimated by importance sampling. 
Specifically, we classically sample the index $(j,l)$ with probability proportional to the real and imaginary parts of $\overline{c}_lc_j$, quantumly estimate $\braket{ u_0 |U_l^{\dagger} O U_j| u_0 }$ by the Hadamard test for non-unitary matrices~\cite{TongAnWiebe2021} and amplitude estimation~\cite{BrassardHoyerMoscaEtAl2002}, and classically average all the sampled observables. 
This hybrid algorithm requires fewer ancilla qubits compared to the quantum LCU subroutine, and may be more feasible in the early fault-tolerant regime. 
We remark that the importance sampling strategy works in the general setting of LCU beyond LCHS, and we refer to~\cite{LinTong2022,WanBertaCampbell2022,WangMcArdleBerta2023,Chakraborty2023} for more details.

\subsection{Discussions and open questions}

While our enhanced LCHS method attains near-optimal dependence on all parameters, there remains an opportunity to slightly improve the dependence of its matrix query complexity on precision.
Specifically, in the general time-dependent case, our method scales as $\mathcal{O}((\log(1/\epsilon))^{1+1/\beta})$ for a real parameter $\beta \in (0,1)$, so the power of this logarithmic factor is always larger than $2$. 
On the other hand, the best known lower bound is only $\widetilde{\Omega}(\log(1/\epsilon))$ from the Hamiltonian simulation literature~\cite{BerryChildsCleveEtAl2014}. 
It is then a natural open question whether we can further improve the power of $\log(1/\epsilon)$ in the improved LCHS method while maintaining its other advantages.
By~\cref{prop:prop_non_existence}, such an improvement cannot be achieved by solely modifying the kernel function in the LCHS formula. 

To address this, let us first recall the contributing factors of the $\mathcal{O}((\log(1/\epsilon))^{1+1/\beta})$ scaling. The power $1$ is due to the complexity of the truncated Dyson series method, and the power $1/\beta$ (which can arbitrarily close to $1$ asymptotically) comes from the fact that the spectral norm of the Hamiltonian $kL(t)+H(t)$ depends linearly on $k$, and the largest $k$ used in the Gaussian quadrature is $\mathcal{O}((\log(1/\epsilon))^{1/\beta})$. 
Therefore, in order to obtain an $\widetilde{\mathcal{O}}(\log(1/\epsilon))$ scaling, we can try to reduce the overhead brought by one of these two components. 

One potential route is to employ a better quadrature formula. 
Specifically, if we can construct a formula
\begin{equation}
    \int_{\mathbb{R}} \frac{f(k)}{ 1-ik} \mathcal{T} e^{-i \int_0^t (kL(s)+H(s)) \ud s} \ud k \approx \sum_{j} c_j \mathcal{T} e^{-i \int_0^t (k_j L(s)+H(s)) \ud s}
\end{equation}
in which $\max_j |k_j|$ is significantly reduced, then 
the cost of the LCHS algorithm would have improved dependence on precision. 
Specifically, as discussed after the statement of~\cref{result:complexity}, by treating all parameters other than $\epsilon$ as constants, the overall query complexity to the matrix input oracle is $\mathcal{O}((\max_j |k_j|)\log(1/\epsilon))$. 
If we can construct a quadrature formula with $\max_j |k_j| = \mathcal{O}\left( \poly(\log\log(1/\epsilon))\right)$,  the overall complexity becomes $\widetilde{\mathcal{O}}(\log(1/\epsilon))$.  
We are unsure whether such a quadrature formula can be constructed.

Another approach is to improve the Hamiltonian simulation step, which has readily been achieved in the time-independent case $A(t) \equiv A$ as shown in~\cref{result:complexity_homo_time_independent}. 
The key aspect there is the additive scaling between the Hamiltonian norm and the precision in QSP/QSVT for time-independent Hamiltonian simulation. 
However, it is not clear whether a similar improvement is possible in the general time-dependent case. 
Thus, investigating this direction is reduced to a standalone open question: can we also achieve additive scaling in the time-dependent Hamiltonian simulation problem? 

There are a few more natural open questions on mathematical extensions of the LCHS formula~\cref{eqn:LCHS_intro}. 
First, the current LCHS formula only holds when the real part $L(t)$ is positive semi-definite. 
Can we extend it to a more general scenario where $L(t)$ may have negative eigenvalues but the dynamics are still asymptotically stable? 
There are two potential approaches to this. 
One is to employ other stability assumptions on the matrix $A(t)$ and explore whether a similar LCHS formula can be established. 
Note that while there exist other assumptions also ensuring the stability of non-unitary dynamics, no single stability assumption completely encompasses the others (see~\cite{Krovi2022} for a detailed discussion in the time-independent case). 
The second approach is to consider a state-dependent version of the LCHS formula. 
Specifically, we may consider the case where $L(t)$ has negative eigenvalues but the initial solution $u_0$ only lies in the subspaces corresponding to non-negative eigenvalues of $L(t)$. 
Then, following the same proof in this paper, we find that
\begin{equation}\label{eqn:vector_LCHS}
\mathcal{T} e^{-\int_0^t A(s) \ud s} u_0 = \int_{\mathbb{R}} \frac{f(k)}{ 1-ik} \left(\mathcal{T} e^{-i \int_0^t (kL(s)+H(s)) \ud s} u_0\right)\ud k
\end{equation}
still holds. This initial vector-dependent version of the LCHS formula could be applicable in dynamics with less stringent stability conditions. 
Another question is whether we can use LCHS to represent linear eigenvalue transformations other than matrix exponentials, or even nonlinear transformations, with reduced computational complexity. 

\REV{
\paragraph{Note added:}
Since the announcement of the first version of this paper, there have been several works that provide even further improvements to LCHS. 
Here we briefly summarize some of them for interested readers. 
Reference~\cite{LuLiLiuLiu2025} generalizes the LCHS formalism to the case where $A(t)$ is an infinite-dimensional operator. 
References~\cite{PocrnicJohnsonKatabarwaWiebe2025,HuangAn2025} improve the constant factors in the LCHS algorithm, while~\cite{HuangAn2025} also shows a closer relationship between LCHS and the Fourier transform. 
Reference~\cite{Li2025} establishes a universal moment-fulfilling dilation framework that covers both LCHS and Schr\"odingerization. 
Reference~\cite{LowSomma2025} establishes an approximate version of LCHS and designs a quantum linear ODE algorithm with optimal query complexity in all parameters in the time-independent case. 
It also improves the implementation and the constant factors of LCHS. 
}

\subsection{Organization}

The rest of the paper is organized as follows. 
In~\cref{sec:LCHS}, we lay the mathematical foundation of this work by establishing the general LCHS formula, presenting the improved kernel function, and showing its near-optimality. 
Then we discuss discretization of the LCHS formula in~\cref{sec:discertization}, followed by our quantum algorithm based on this discretized LCHS formula and its complexity analysis in~\cref{sec:quantum_complexity}. 
In~\cref{sec:hybrid} we discuss the hybrid implementation of the LCHS formula.

\section*{Acknowledgements}

We thank Dominic Berry for pointing out the state-dependent version of the LCHS formula in \cref{eqn:vector_LCHS} and the complexity improvement for time-independent inhomogeneous equations in \cref{sec:timeindependent_improve}. 
We thank Robin Kothari for suggesting an adaptive way of choosing $\beta$ to further reduce the total query complexity, as described in~\cref{sec:adaptive}. 
We also thank Jin-Peng Liu and Nathan Wiebe for helpful discussions.

DA acknowledges support from the Department of Defense through the Hartree Postdoctoral Fellowship at QuICS. 
AMC acknowledges support from the Department of Energy, Office of Science, Office of Advanced Scientific Computing Research, Accelerated Research in Quantum Computing program.
This work also received support from the National Science Foundation through QLCI grants OMA-2120757 (DA and AMC) and OMA-2016245 (LL). 
LL is a Simons investigator in Mathematics.

\section*{Competing interests} 

The authors have no competing interests to declare that are relevant to the content of this article.

\section*{Data availability statement} 

Codes and data generated during the current study are available at \url{https://github.com/dong-an/LCHS}.

\section{Linear combination of Hamiltonian simulation}\label{sec:LCHS}

A key observation in this work is that the construction in \cref{eqn:LCHS_original} can be generalized to obtain a family of identities for expressing the propagator. 
In particular, consider the formula 
\begin{equation}\label{eqn:LCHS_improved}
    \mathcal{T} e^{-\int_0^t A(s) \ud s} = \int_{\mathbb{R}} \frac{f(k)}{ 1-ik} \mathcal{T} e^{-i \int_0^t (kL(s)+H(s)) \ud s} \ud k
\end{equation}
with a more general function $f(k)$, which we call a kernel function. 
The following theorem can be proven using a matrix version of Cauchy's integral theorem. 

\begin{thm}\label{thm:LCHS_improved}
    Let $f(z)$ be a function of $z \in \mathbb{C}$, such that 
    \begin{enumerate} 
        \item (Analyticity) $f(z)$ is analytic on the lower half plane $\left\{z: \Im(z) < 0 \right\}$ and continuous on $\left\{z: \Im(z) \leq 0 \right\}$, 
        \item (Decay) there exists a parameter $\alpha > 0$ such that $|z|^\alpha |f(z)| \leq \widetilde{C}$ for a constant $\widetilde{C}$ when $\Im(z) \leq 0$, and 
        \item (Normalization) $\int_{\mathbb{R}} \frac{f(k)}{ 1-ik} \ud k = 1$. 
    \end{enumerate}
    Let $A(t) \in \mathbb{C}^{N\times N}$ be decomposed according to~\cref{eqn:A_cartesian_1,eqn:A_cartesian_2}. If $L(s)\succeq 0$ for all $0\le s\le t$, then
    \begin{equation}\label{eqn:LCHS_improved_in_thm}
    \mathcal{T} e^{-\int_0^t A(s) \ud s} = \int_{\mathbb{R}} \frac{f(k)}{ 1-ik} \mathcal{T} e^{-i \int_0^t (kL(s)+H(s)) \ud s} \ud k. 
\end{equation}
\end{thm}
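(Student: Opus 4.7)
The plan is to interpret the right-hand side of~\cref{eqn:LCHS_improved_in_thm} as a contour integral of the operator-valued function
\begin{equation}
g(z) := \frac{f(z)}{1-iz}\,E(z), \qquad E(z) := \mathcal{T} e^{-i\int_0^t (zL(s)+H(s))\ud s},
\end{equation}
along the real axis, and to evaluate it by closing the contour in the lower half plane via a matrix version of Cauchy's residue theorem. The rational factor $(1-iz)^{-1} = i/(z+i)$ has a single simple pole at $z=-i$, which under the hypotheses on $f$ is the only singularity of $g$ in the closed lower half plane. The crucial observation is that setting $z=-i$ turns $-i(zL(s)+H(s))$ into $-L(s)-iH(s) = -A(s)$, so the residue at $z=-i$ automatically produces the target operator $\mathcal{T} e^{-\int_0^t A(s)\ud s}$; the normalization condition on $f$ is designed exactly to absorb the overall constant.

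Three ingredients are needed to make this rigorous. First, $E(z)$ is entire in $z$: expanding the time-ordered exponential as a Dyson series shows that each term is polynomial in $z$ and the series converges uniformly on compact subsets of $\mathbb{C}$. Second, one needs the contractive bound $\|E(z)\| \le 1$ on the closed lower half plane. Writing $z = k-iy$ with $y \ge 0$, the generator becomes $-i(zL(s)+H(s)) = -yL(s) - i(kL(s)+H(s))$, whose Hermitian part $-yL(s)$ is negative semi-definite because $L(s) \succeq 0$. A standard energy estimate, computing $\tfrac{d}{ds}\|E(z)|_s v\|^2$ and using that the anti-Hermitian piece drops out, then shows that $\|E(z)v\|$ is non-increasing, yielding $\|E(z)\| \le 1$. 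Third, the decay hypothesis $|f(z)| \le \widetilde{C}|z|^{-\alpha}$ combined with $|1-iz|^{-1} = O(|z|^{-1})$ gives the uniform estimate $\|g(z)\| = O(|z|^{-1-\alpha})$ on the closed lower half plane, ensuring both absolute integrability along the real axis and vanishing of arc contributions.

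Given these ingredients, I would close the contour using the semicircular arc $C_R = \{R e^{i\theta} : \theta \in [-\pi,0]\}$ together with the real segment $[-R,R]$, traversed in that order (hence clockwise around the lower half plane). The arc contribution is bounded by $\pi R \cdot O(R^{-1-\alpha}) = O(R^{-\alpha})$, which vanishes as $R \to \infty$. Applying the residue theorem entrywise (thereby promoting its scalar version to operator-valued integrands) and accounting for the clockwise orientation yields
\begin{equation}
\int_{\mathbb{R}} \frac{f(k)}{1-ik}\,E(k)\,\ud k \;=\; -2\pi i \cdot \operatorname*{Res}_{z=-i} g(z) \;=\; -2\pi i \cdot \bigl( i f(-i) \bigr)\, E(-i) \;=\; 2\pi f(-i)\,\mathcal{T} e^{-\int_0^t A(s)\ud s},
\end{equation}
using $(z+i)(1-iz)^{-1} = i$ to extract the residue and $E(-i) = \mathcal{T} e^{-\int_0^t A(s)\ud s}$. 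Specializing the same calculation to the scalar case $E \equiv 1$ identifies the normalization $\int_{\mathbb{R}} f(k)/(1-ik)\,\ud k = 1$ with $2\pi f(-i) = 1$, so the proportionality constant is precisely $1$ and the identity follows. The main technical hurdle is the contractivity bound $\|E(z)\| \le 1$ for complex $z$ in the lower half plane, which is where the positive semi-definiteness of $L(s)$ enters decisively; the only delicate bookkeeping is the clockwise orientation of the closing contour, which flips the sign in the residue formula.
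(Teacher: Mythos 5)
Your proof is correct, but it takes a genuinely different route from the paper's. The paper first regularizes to $L(t)\succeq\lambda_0>0$, proves in \cref{lem:cauchy_vanish} that $\mathcal{P}\int_{\mathbb{R}} f(k)\,\mathcal{T}e^{-i\int_0^t(kL(s)+H(s))\ud s}\ud k=0$ via Cauchy's theorem on a half-disk in the $\omega=ik$ plane (with a two-region arc estimate that relies on the strict decay $e^{-t\lambda_0R\cos\theta}$), then shows that the right-hand side of \cref{eqn:LCHS_improved_in_thm} satisfies the same ODE $\ud O/\ud t=-A(t)O$, $O(0)=I$, as the left-hand side --- the decomposition $-ikL-iH=(-L-iH)+(1-ik)L$ plus the lemma kills the extra term --- and finally lets $\lambda_0\to0$ by continuity. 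You instead close the contour directly in the lower half plane and read the propagator off the residue of $f(z)(1-iz)^{-1}E(z)$ at $z=-i$, where the generator $-i(zL+H)$ degenerates to $-A$; the normalization condition is identified with $2\pi f(-i)=1$ by the scalar instance of the same computation. Your route avoids the $\lambda_0$-regularization, the differentiation under a principal-value integral (and the uniform-convergence justification it requires), and the delicate arc splitting, because the factor $(1-iz)^{-1}$ supplies the extra decay that makes the integral absolutely convergent and the arc contribution $\Or(R^{-\alpha})$ for free; it also explains structurally why the identity holds. The ingredients you isolate --- entirety of $E(z)$ via the Dyson series, the contraction bound $\|E(z)\|\le1$ on $\Im(z)\le0$ from $L\succeq0$ (the complex-$z$ analogue of \cref{lem:expAt_bound}), the sign flip from the clockwise orientation, and the entrywise promotion of the residue theorem --- are all sound. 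The one point left at the same level of informality as the paper's own \cref{lem:cauchy_vanish} is invoking the residue theorem for a function merely continuous up to the real axis; this is repaired in the standard way by shifting the contour to $\Im z=-\delta<0$ (which still encloses the pole at $z=-i$ for $\delta<1$) and letting $\delta\to0$.
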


\subsection{Proof of \texorpdfstring{\cref{thm:LCHS_improved}}{Theorem 5}}

We first use Cauchy's integral theorem to prove the following lemma stating that if $L(t)$ is uniformly positive definite in $[0,T]$, then $\mathcal{P} \int_{\mathbb{R}} f(k) \mathcal{T} e^{-i \int_0^t (kL(s)+H(s)) \ud s} \ud k$ is $0$. 
Here $\mathcal{P}$ denotes the Cauchy principal value. 
\REV{Notice that the Cauchy principal value is necessary here, because the kernel function $f(k)$ is only assumed to decay as $1/|k|^{\alpha}$ for $\alpha > 0$ and the integral $\int_{\mathbb{R}} f(k) \mathcal{T} e^{-i \int_0^t (kL(s)+H(s)) \ud s} \ud k$ does not necessarily converge. Instead, the integral $\int_{\mathbb{R}} \frac{f(k)}{1-ik} \mathcal{T} e^{-i \int_0^t (kL(s)+H(s)) \ud s} \ud k$ always converges, so the Cauchy principal value is not needed in the statement of~\cref{thm:LCHS_improved}. }
We remark that, as being proved soon, the Cauchy principal value exists for any $\alpha > 0$ where $\alpha$ is the decay parameter in~\cref{thm:LCHS_improved}, although the integral $\int_{\mathbb{R}} f(k) \mathcal{T} e^{-i \int_0^t (kL(s)+H(s)) \ud s} \ud k$ may diverge for small $\alpha$. \footnote{For example, a necessary condition of the convergence of $\int_{\mathbb{R}} f(k) \mathcal{T} e^{-i \int_0^t (kL(s)+H(s)) \ud s} \ud k$ is the existence of the Fourier transform of $f(k)$. According to the Hausdorff-Young inequality, we need $f\in L^p(\mathbb{R})$ with $1\le p\le 2$, so $\alpha>1/2$. }
    
\begin{lem}\label{lem:cauchy_vanish}
    Let $f(z)$ be a function satisfying the conditions in~\cref{thm:LCHS_improved}, and suppose that $L(t) \succeq \lambda_0 > 0$ for a positive number $\lambda_0$ and all $t \in [0,T]$. 
    Then
    \begin{equation}
        \mathcal{P} \int_{\mathbb{R}} f(k) \mathcal{T} e^{-i \int_0^t (kL(s)+H(s)) \ud s} \ud k = 0. 
    \end{equation}
    Here $\mathcal{P}$ denotes the Cauchy principal value. 
\end{lem}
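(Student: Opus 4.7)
The plan is to view $f(z)\,U(z;t)$ with $U(z;t):=\mathcal{T}e^{-i\int_0^t (zL(s)+H(s))\ud s}$ as a matrix-valued holomorphic function of a complex variable $z$ and close the integration contour in the lower half plane, exploiting the fact that the uniform positivity $L(t)\succeq\lambda_0>0$ forces $U(z;t)$ to decay exponentially there.

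First I would verify analyticity and continuity. Expanding $U(z;t)$ as a Dyson series, each term is a polynomial in $z$ and the series converges locally uniformly on $\C$, so $z\mapsto U(z;t)$ is entire. Combined with the hypotheses on $f$, the product $f(z)U(z;t)$ is analytic on $\{\Im(z)<0\}$ and continuous on $\{\Im(z)\le 0\}$. Next I would prove an a priori decay bound: for $z=k-iy$ with $y\ge 0$, an energy estimate on $\partial_t U = -i(zL+H)U$ gives $\partial_t\norm{U(z;t)v}^2 = -2y\langle U(z;t)v,\,L(t)U(z;t)v\rangle \le -2y\lambda_0\norm{U(z;t)v}^2$ (the $H$ and $\Re z$ contributions are purely imaginary and drop out), so Gr\"onwall's inequality yields
\begin{equation*}
\norm{U(z;t)}\le e^{\Im(z)\,\lambda_0 t}, \qquad \Im(z)\le 0.
\end{equation*}
This strict exponential decay in the interior of the lower half plane is the crux of the argument.

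Then I would apply Cauchy's integral theorem (entry by entry for the matrix) to the closed contour $\Gamma_R=[-R,R]\cup C_R$, where $C_R=\{Re^{i\theta}:\theta\in[0,-\pi]\}$ is the lower semicircle traversed clockwise. Since $f(z)U(z;t)$ is analytic inside $\Gamma_R$, one obtains $\int_{-R}^R f(k)U(k;t)\ud k = -\int_{C_R} f(z)U(z;t)\ud z$. The arc is then estimated in operator norm using $|f(z)|\le \widetilde{C}R^{-\alpha}$, the bound above, and the Jordan inequality $\sin\phi\ge 2\phi/\pi$ on $[0,\pi/2]$:
\begin{equation*}
\norm{\int_{C_R} f(z)U(z;t)\ud z}\le \widetilde{C}\,R^{1-\alpha}\int_{-\pi}^{0} e^{-R|\sin\theta|\lambda_0 t}\ud\theta \le \frac{\pi\,\widetilde{C}}{\lambda_0 t}\,R^{-\alpha},
\end{equation*}
which tends to $0$ as $R\to\infty$ for every $\alpha>0$. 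Sending $R\to\infty$ yields $\mathcal{P}\int_\R f(k)U(k;t)\ud k=0$, as claimed.

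The main obstacle is that $f$ is only assumed to decay polynomially with an exponent $\alpha$ that may be smaller than $1$, so the decay of $f$ alone cannot absorb the $\pi R$ arc length of $C_R$. The rescue comes from Jordan's lemma, which extracts an extra $1/R$ factor from the exponential decay of the propagator once $R|\sin\theta|$ leaves a neighbourhood of zero; this extraction uses the strict positivity $\lambda_0>0$ essentially. This is precisely why the lemma is first stated in the uniformly positive case; the general degenerate setting $L(s)\succeq 0$ is presumably recovered by perturbing $L\mapsto L+\eta I$ and passing to the limit $\eta\to 0^+$ in a later step.
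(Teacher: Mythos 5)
Your proof is correct and follows essentially the same route as the paper's: close the contour with a large semicircle in the half-plane where the propagator decays (you work directly in the $z$-plane, while the paper first substitutes $\omega = ik$ and closes in the right half $\omega$-plane — the same contour after the change of variable), then kill the arc by combining $|f(z)|\le \widetilde{C}R^{-\alpha}$ with the propagator bound $\|U(z;t)\|\le e^{\Im(z)\lambda_0 t}$. The only difference is in how the arc integral is estimated: you integrate the exponential bound via Jordan's lemma to get a clean $\pi\widetilde{C}R^{-\alpha}/(\lambda_0 t)$, whereas the paper splits the arc into a middle piece and two endpoint pieces with an $R$-dependent angle $\theta_0$; your version is if anything tidier, and your Dyson-series and Gr\"onwall justifications supply details the paper leaves implicit.
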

\begin{proof}
    Using the substitution $\omega = ik$, we have 
    \begin{equation}\label{eqn:LCHS_proof_eq2}
        \mathcal{P} \int_{\mathbb{R}} f(k) \mathcal{T} e^{-i \int_0^t (kL(s)+H(s)) \ud s} \ud k = -i \lim_{R\rightarrow\infty} \int_{-iR}^{iR} f(-i\omega) \mathcal{T} e^{ \int_0^t (-\omega L(s)-iH(s)) \ud s} \ud \omega.  
    \end{equation}
    As shown in~\cref{fig:contour}, let us choose a path $\gamma_C = \left\{ \omega = Re^{i\theta} : \theta \in [-\pi/2,\pi/2] \right\}$ and consider the contour $[-iR,iR] \cup \gamma_C$. 
    According to the analyticity condition, $f(-i\omega)$ is analytic on the right half plane $\left\{ \omega: \Re(\omega) > 0\right\}$ and continuous on the right half plane plus the imaginary axis. By Cauchy's integral theorem, we have 
    \begin{equation}
        \int_{[-iR,iR] \cup \gamma_C}  f(-i\omega) \mathcal{T} e^{ \int_0^t (-\omega L(s)-iH(s)) \ud s} \ud \omega = 0. 
    \end{equation}
    Plugging this back into~\cref{eqn:LCHS_proof_eq2}, we obtain 
    \begin{equation}
        \mathcal{P} \int_{\mathbb{R}} f(k) \mathcal{T} e^{-i \int_0^t (kL(s)+H(s)) \ud s} \ud k = -i \lim_{R\rightarrow\infty} \int_{\gamma_C} f(-i\omega) \mathcal{T} e^{ \int_0^t (-\omega L(s)-iH(s)) \ud s} \ud \omega. 
    \end{equation}
    Changing the variable from $\omega$ to $\theta$ gives 
    \begin{align}
        \mathcal{P} \int_{\mathbb{R}} f(k) \mathcal{T} e^{-i \int_0^t (kL(s)+H(s)) \ud s} \ud k &= \lim_{R\rightarrow\infty} \int_{-\pi/2}^{\pi/2} Re^{i\theta} f(-iRe^{i\theta} ) \mathcal{T} e^{ \int_0^t (-Re^{i\theta} L(s)-iH(s)) \ud s} \ud \theta \\
        & = \lim_{R\rightarrow\infty} \int_{I\cup J} Re^{i\theta} f(-iRe^{i\theta} ) \mathcal{T} e^{ \int_0^t (-Re^{i\theta} L(s)-iH(s)) \ud s} \ud \theta, \label{eqn:LCHS_proof_eq3}
    \end{align}
    where $I = [-\pi/2+\theta_0,\pi/2-\theta_0]$, $J = [-\pi/2,\pi/2] \setminus I$, and 
    \begin{equation}
        \theta_0 = \min \left\{ \frac{1}{R^{1-\alpha/2}}, \frac{\pi}{4} \right\}.
    \end{equation}
    Here $\alpha$ is the parameter in the decay condition. 
    We first bound the integral over $I$ by the inequality 
    \begin{equation}\label{eqn:LCHS_proof_bound_to}
        \left\| \mathcal{T} e^{ \int_0^t (-Re^{i\theta} L(s)-iH(s)) \ud s} \right\| \leq e^{-t\lambda_0 R \cos\theta} \leq e^{-\frac{2}{\pi}t\lambda_0 R \theta_0 }, 
    \end{equation}
    where in the last inequality we use $\cos\theta \geq \cos(\pi/2-\theta_0) = \sin\theta_0 \geq (2/\pi)\theta_0$. 
    Then 
    \begin{equation}\label{eqn:LCHS_proof_bound_I}
            \left\| \int_{I} Re^{i\theta} f(-iRe^{i\theta} ) \mathcal{T} e^{ \int_0^t (-Re^{i\theta} L(s)-iH(s)) \ud s} \ud \theta \right\|
            \leq \int_I R |f(-iRe^{i\theta})| e^{-\frac{2}{\pi}t\lambda_0 R \theta_0 } \ud \theta . 
    \end{equation}
    For sufficiently large $R$, using the decay condition and noticing that $(-i R e^{i\theta})$ is in the lower half plane for $\theta \in [-\pi/2,\pi/2]$, we have 
    \begin{equation}\label{eqn:LCHS_proof_bound_f}
        |f(-iRe^{i\theta})| \leq \frac{\widetilde{C}}{R^\alpha}. 
    \end{equation}
    Then~\cref{eqn:LCHS_proof_bound_I} becomes 
    \begin{equation}
        \left\| \int_{I} Re^{i\theta} f(-iRe^{i\theta} ) \mathcal{T} e^{ \int_0^t (-Re^{i\theta} L(s)-iH(s)) \ud s} \ud \theta \right\|
         \leq \pi \widetilde{C} R^{1-\alpha} e^{-\frac{2}{\pi}t\lambda_0 R \theta_0 } \leq \pi \widetilde{C} R^{1-\alpha} \max\left\{ e^{-\frac{2}{\pi} t\lambda_0 R^{\alpha/2}}, e^{-\frac{1}{2} t\lambda_0 R} \right\}, 
    \end{equation}
    which vanishes as $R \rightarrow \infty$. 
    For the integral over $J$, we may again use the bound of $f$ in~\cref{eqn:LCHS_proof_bound_f} and $\left\| \mathcal{T} e^{ \int_0^t (-Re^{i\theta} L(s)-iH(s)) \ud s} \right\| \leq e^{-t\lambda_0 R \cos\theta} \leq 1$ for any $\theta \in [-\pi/2,\pi/2]$. 
    Then we obtain
    \begin{equation}
        \begin{split}
             \left\|\int_{J} Re^{i\theta} f(-iRe^{i\theta} ) \mathcal{T} e^{ \int_0^t (-Re^{i\theta} L(s)-iH(s)) \ud s} \ud \theta \right\|  \leq \int_J R |f(-iRe^{i\theta})| \ud \theta \leq \widetilde{C} R^{1-\alpha} \cdot 2\theta_0  \leq 2\widetilde{C} R^{-\alpha/2}, 
        \end{split}
    \end{equation}
    which also vanishes as $R \rightarrow \infty$. 
    Therefore,~\cref{eqn:LCHS_proof_eq3} becomes 
    \begin{equation}
        \mathcal{P} \int_{\mathbb{R}} f(k) \mathcal{T} e^{-i \int_0^t (kL(s)+H(s)) \ud s} \ud k = 0. 
    \end{equation}
\end{proof}

\begin{figure}
        \centering
        \includegraphics[width = 0.3\textwidth]{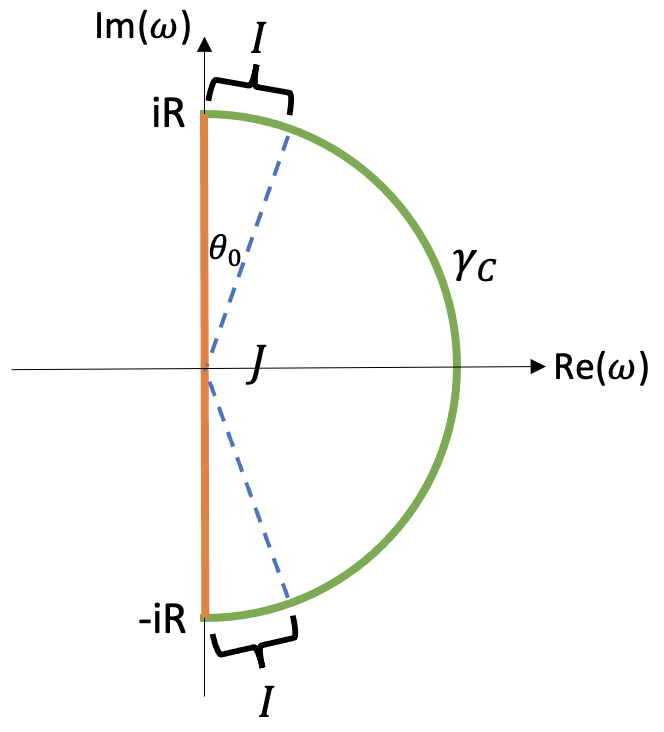}
        \caption{Contour used in the proof of~\cref{lem:cauchy_vanish}.}
        \label{fig:contour}
\end{figure}

We are now ready to prove~\cref{thm:LCHS_improved}. 

\begin{proof}[Proof of~\cref{thm:LCHS_improved}]
    We first prove the theorem for $L(t) \succeq \lambda_0 > 0$ for a positive number $\lambda_0$ and all $t \in [0,T]$. 
    Let $O_L(t) = \mathcal{T} e^{-\int_0^t A(s) \ud s}$ and $O_R(t) = \int_{\mathbb{R}} \frac{f(k)}{ 1-ik} \mathcal{T} e^{-i \int_0^t (kL(s)+H(s)) \ud s} \ud k$. 
    By definition, $O_L(t)$ satisfies the ODE 
    \begin{equation}
        \frac{\ud O_L }{\ud t} = -A(t) O_L(t), \quad O_L(0) = I. 
    \end{equation}
    We will show that $O_R(t)$ satisfies the same ODE. 

    First, for the initial condition, by plugging in $t = 0$ and using the normalization condition, we have 
    \begin{equation}
        O_R(0) = \int_{\mathbb{R}} \frac{f(k)}{ 1-ik} I \ud k = I. 
    \end{equation}
    Now, we choose a fixed $\delta > 0$ and consider $t \in [\delta,T]$. 
    Differentiating $O_R(t)$ with respect to $t$ yields 
    \begin{align}
            \frac{\ud O_R(t)}{\ud t} & = \mathcal{P} \int_{\mathbb{R}} \frac{f(k)}{1-ik} (-ikL(t) - iH(t)) \mathcal{T} e^{-i \int_0^t (kL(s)+H(s)) \ud s} \ud k \\
            & = \mathcal{P} \int_{\mathbb{R}} \frac{f(k)}{1-ik} \left[(-L(t) - iH(t)) + (1-ik)L(t) \right]\mathcal{T} e^{-i \int_0^t (kL(s)+H(s)) \ud s} \ud k \\
            & = -A(t) O_R(t) + L(t) \left(\mathcal{P} \int_{\mathbb{R}} f(k) \mathcal{T} e^{-i \int_0^t (kL(s)+H(s)) \ud s} \ud k \right). \label{eqn:LCHS_proof_eq1}
    \end{align}
    Here we consider the Cauchy principal values of the integrals, as the integrals may not converge absolutely. 
    Additionally, in the first line we exchange the order of the limit (i.e., the Cauchy principal value) and differentiation. 
    This is rigorous as the right-hand side of the first line uniformly converges in $t\in[\delta,T]$. 

    According to~\cref{lem:cauchy_vanish}, the second term on the right-hand side of~\cref{eqn:LCHS_proof_eq1} vanishes. 
    So we obtain
    \begin{equation}\label{eqn:LCHS_proof_eq4}
        \frac{\ud O_R(t)}{\ud t} = -A(t) O_R(t)
    \end{equation}
    for all $t \in [\delta,T]$. 
    As $\delta$ can be an arbitrary positive number,~\cref{eqn:LCHS_proof_eq4} holds for all $t \in (0,T]$. 
    This proves that $O_R(t)$ satisfies the same ODE as $O_L(t)$ with the same initial condition, so $O_L(t) = O_R(t)$ by the uniqueness of the solution. 

    We have assumed $L(t)\succeq \lambda_0 > 0$ for an arbitrary positive number $\lambda_0$ and proved~\cref{eqn:LCHS_improved_in_thm}. 
    Notice that both the left- and the right-hand sides of~\cref{eqn:LCHS_improved_in_thm} are continuous with respect to $L(t)$, so we may take the limit $\lambda_0 \rightarrow 0$ and complete the proof. 
\end{proof}

\subsection{Kernel functions}

There are infinitely many possible kernel functions that satisfy the conditions in~\cref{thm:LCHS_improved}.  
Here we present the one with the best convergence performance; more examples are discussed in~\cref{app:kernel}. 
Note that in the original LCHS formula~\cite{AnLiuLin2023}, $f(z)$ is chosen to be $1/(\pi(1+ik))$. 
Instead, we use 
\begin{equation}\label{eqn:kernel_exp}
    f(z) = \frac{1}{ C_{\beta} e^{(1+iz)^{\beta}} }, 
\end{equation}
where $0 < \beta < 1$, and the normalization factor $C_{\beta}$ can be computed using the residue theorem as 
\begin{equation}
C_\beta =
\int_{\mathbb{R}} \frac{1}{(1-ik)e^{(1+ik)^{\beta}}} \ud k = 
i\int_{\mathbb{R}} \frac{e^{-(1+ik)^{\beta}}}{k+i} \ud k=2\pi e^{-2^\beta}.
\end{equation}
The $\beta$-power of a complex number $Re^{i\theta}$ with $\theta \in (-\pi,\pi]$ is defined to be its principal value $R^{\beta} e^{i\beta\theta}$. 

We first show that~\cref{eqn:kernel_exp} satisfies the conditions in~\cref{thm:LCHS_improved}. 
Consider $z = x+iy$ with $y \leq 0$. 
Then $1+iz = 1-y+ix = Re^{i\theta}$, where $R = \sqrt{(1-y)^2+x^2}$ and the angle $\theta$ is within $(-\pi/2,\pi/2)$ as $1-y \geq 1$.
Hence for any $0 < \beta < 1$, the functions $(1+iz)^{\beta}$ and $1/\exp((1+iz)^{\beta})$ are analytic on the lower half plane. This shows the analyticity condition  in the lower half plane.
To prove the decay condition in the lower half plane, observe that for any $\alpha > 0$, 
\begin{equation}
    |z|^{\alpha} |f(z)| = \frac{|z|^{\alpha}}{C_{\beta}|e^{(1+iz)^{\beta}}|} = \frac{|z|^{\alpha}}{C_{\beta}|e^{R^{\beta}e^{i\beta\theta}}|} = \frac{|z|^{\alpha}}{C_{\beta} e^{R^{\beta} \cos(\beta\theta)}}. 
\end{equation}
Notice that $\cos(\beta\theta) > \cos(\beta\pi/2) $ as $\beta\theta \in (-\beta\pi/2,\beta\pi/2)$ for $0 < \beta < 1$, and $R = \sqrt{(1-y)^2 + x^2} = \sqrt{1-2y+x^2+y^2} > |z|$. 
Then, 
\begin{equation}\label{eqn:bound_f_abs}
    |z|^{\alpha} |f(z)| \leq \frac{|z|^{\alpha}}{ C_{\beta} e^{|z|^{\beta} \cos(\beta\pi/2)} }, 
\end{equation}
which vanishes (and is of course bounded) as $|z| \rightarrow \infty$. 
The normalization condition is directly satisfied by the definition of $C_{\beta}$. 

Compared to the Cauchy distribution used in the original LCHS formula~\cref{eqn:LCHS_original}, our improved choice in~\cref{eqn:kernel_exp} decays exponentially faster, so the improved LCHS can exhibit exponentially better asymptotic convergence (i.e., in numerical discretization, we may truncate the integral at an exponentially smaller threshold. See~\cref{lem:truncation} for more details).   
Notice that there is an extra $\beta \in (0,1)$ parameter in the kernel function~\cref{eqn:kernel_exp}. 
When $\beta \in (0,1)$, larger $\beta$ implies faster asymptotic convergence performance when $z = k \in \mathbb{R}$ and $|k| \rightarrow \infty$, as $|f(k)| \leq e^{-\cos(\beta\pi/2) |k|^{\beta}}$.  
However, $\beta$ cannot be exactly equal to $1$, because $\beta = 1$ gives the kernel function $\frac{1}{C_1 e^{1+ik}}$, which does not decay as $|k| \rightarrow \infty$. 

\begin{figure}[t]
    \centering
    \includegraphics[width=0.45\textwidth]{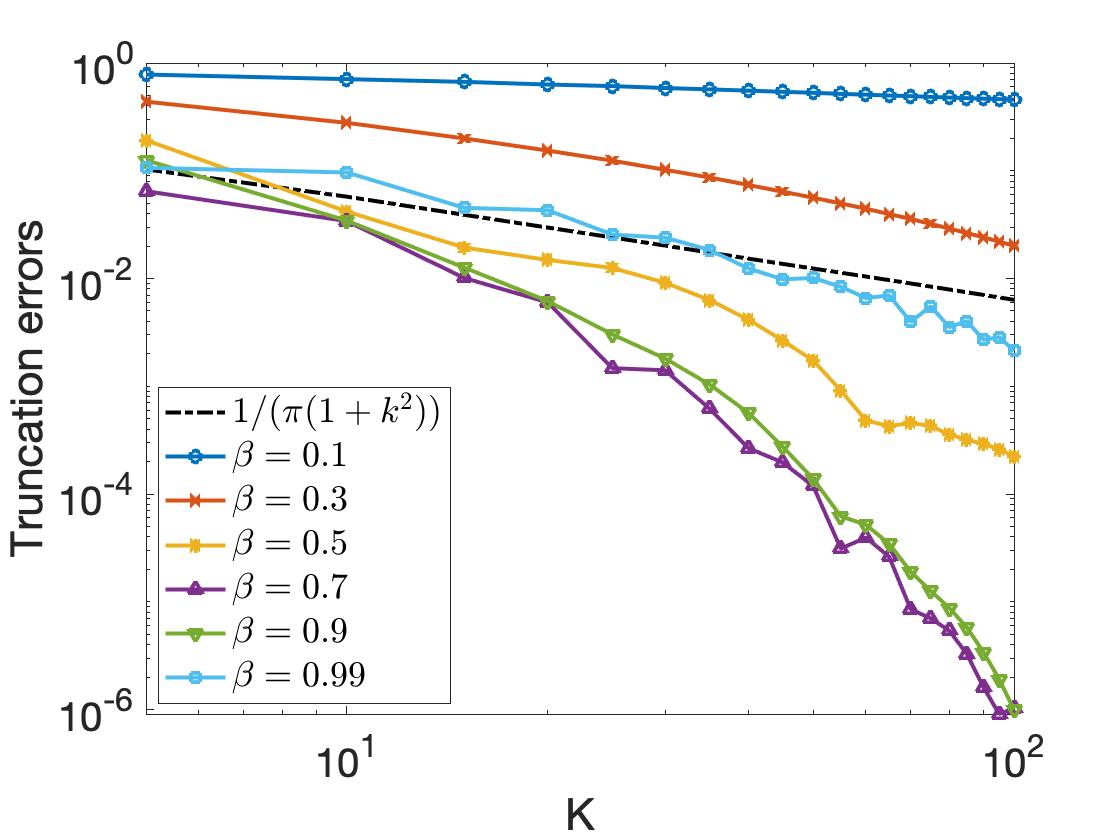}
    \includegraphics[width=0.45\textwidth]{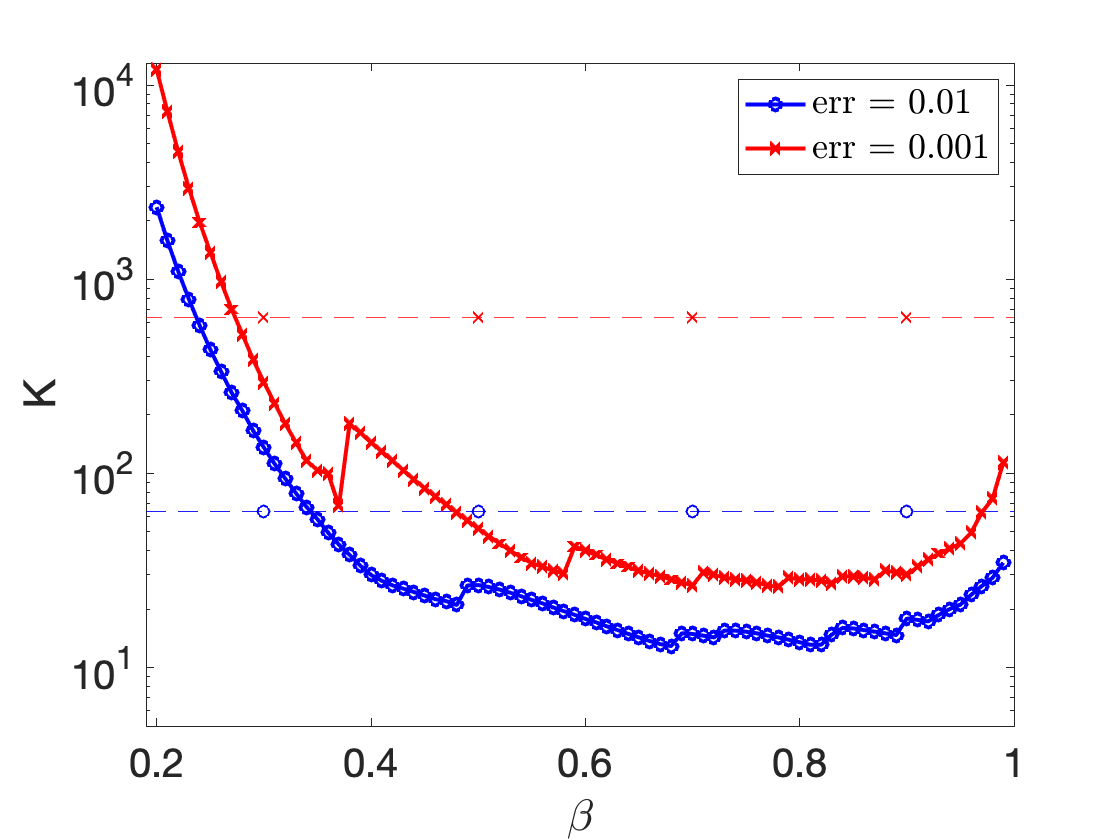}
    \caption{Left: Truncation errors with respect to $K$ when approximating the function $e^{-(L+iH)}$ for some random $L,H$, using either the original LCHS method or the kernel functions as in~\cref{eqn:kernel_exp} with different values of $\beta$. 
    Right: Smallest $K$ that suffices to reduce the truncation errors below $0.01/0.001$ with respect to $\beta$ in~\cref{eqn:kernel_exp}. The dashed lines indicate the corresponding values of $K$ in the original LCHS method.}
    \label{fig:kernel_errors}
\end{figure}

The poor behavior of the limiting case $\beta \rightarrow 1$ suggests that in a nonasymptotic setting, we should choose a value of $\beta$ away from both $0$ and $1$. 
In~\cref{fig:kernel_errors} (left), we numerically evaluate the truncation error as a function of the truncation parameter $K$ (i.e., the integral in~\cref{eqn:LCHS_improved} is truncated on $[-K,K]$) for different kernel functions. 
For this numerical test, we approximate the function $e^{-(L+iH)}$ where $L$ and $H$ are randomly generated $8 \times 8$ Hermitian matrices such that $\|L\|=\|H\| = 1$ and the smallest eigenvalue of $L$ is $0$.  
\cref{fig:kernel_errors} (left) suggests that, in addition to its asymptotic advantage, the function in~\cref{eqn:kernel_exp} also has faster numerical convergence for certain regime of the parameter $\beta$ compared to the Cauchy distribution. 
Furthermore, in~\cref{fig:kernel_errors} (right), we present, under different choices of $\beta$, the smallest numerical values of $K$ to reduce the truncation errors below $0.01$ or $0.001$. 
The dashed lines show the corresponding values of $K$ to achieve the same accuracy in the original LCHS method. 
From this plot, we observe that the Cauchy distribution is outperformed by the kernel function in~\cref{eqn:kernel_exp} for $\beta \in [0.35,0.99]$ (resp.~$\beta \in [0.28,0.99]$) if the target precision is $0.01$ (resp.~$0.001$). 
A practical choice of $\beta$ for the best numerical performance is within the interval $[0.7,0.8]$. 
More numerical illustrations of this family of kernel functions can be found in~\cref{app:kernel}. 

As suggested in~\cref{thm:LCHS_improved}, there are infinitely many kernel functions $f(z)$ that satisfy the LCHS formula~\cref{eqn:LCHS_improved}. 
All valid kernel functions lead to the same Fourier transform on the positive real axis. 
To see this in action, consider a special one-dimensional case of the LCHS formula~\cref{eqn:LCHS_improved}, where $L(s) \equiv x \geq 0$ for a positive real number $x$ and $H(s) \equiv 0$. Then the LCHS formula becomes (by further choosing $t = 1$) 
\begin{equation}
    e^{-x} = \int_{\mathbb{R}} \frac{f(k)}{ 1-ik} e^{-ikx} \ud k, \quad x \geq 0, 
\end{equation}
which implies that the Fourier transform of $\frac{f(k)}{ 1-ik}$ is exactly $e^{-x}$ when $x \geq 0$. 
\cref{fig:kernel_FT} shows the Fourier transform of the function $f(k)/(1-ik)$ where $f(k)$ is either the function in the original LCHS formula or the improved one defined in~\cref{eqn:kernel_exp}. 
As shown in~\cref{fig:kernel_FT}, all the transformed functions are equal to $e^{-x}$ when $x \geq 0$, while they differ considerably on the negative real axis. 
Note that the Cauchy distribution used in the original LCHS corresponds to the inverse Fourier transform of the symmetric function $e^{-|x|}$. 
By exploring the freedom of how the kernel function affects the behavior along the negative real axis, the asymptotic decay rate of the integrand can be significantly improved.

\begin{figure}[t]
    \centering
    \includegraphics[width=0.45\textwidth]{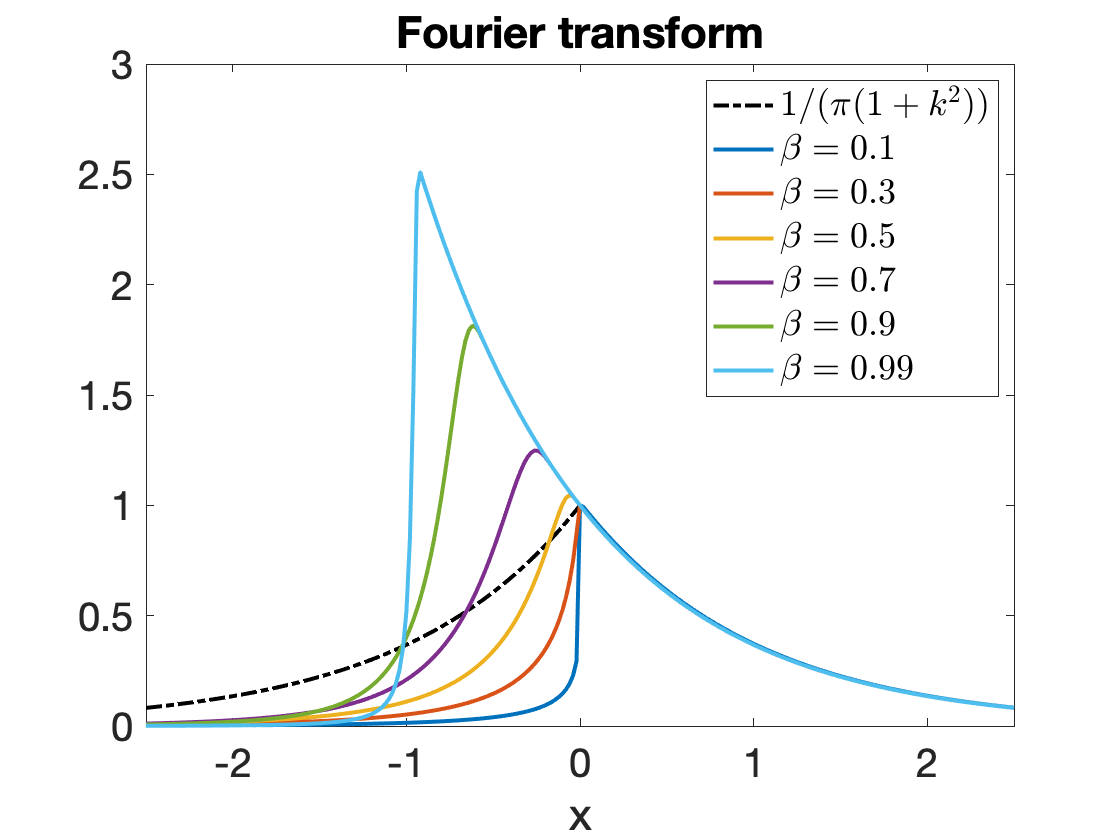}
    \caption{Fourier transform of the function $f(k)/(1-ik)$ where $f(k)$ is the kernel function in the original LCHS method or the one defined in~\cref{eqn:kernel_exp}. Note that the values of all functions agree on $[0,\infty)$ and are equal to $e^{-x}$.}
    \label{fig:kernel_FT}
\end{figure}

\subsection{Near optimality of the kernel function}\label{sec:kernel_nogo}

Notice that when $z = k \in \mathbb{R}$, the function $f(k)$ constructed in~\cref{eqn:kernel_exp} decays as $e^{-|k|^{\beta}\cos(\beta\pi/2)}$. 
To bound the truncation error by $\epsilon$, the integration range should be truncated to $[-K,K]$ with $K=\Or(\log(1/\epsilon))^{1/\beta}$ for $\beta < 1$ (see \cref{lem:truncation} below for a precise statement). 
One may wonder if there exists an even better kernel function with faster decay along the real axis, e.g., $e^{-c|k|}$ as $|k| \rightarrow \infty$. 
If so, we could improve the truncated integration range to $K=\Or(\log(1/\epsilon))$.
Unfortunately, such a function does not exist, so the current LCHS method cannot be significantly improved. 
We show this in the following proposition. 

\begin{prop}\label{prop:prop_non_existence}
    Let $f(z)$ be a function of $z \in \mathbb{C}$ such that 
    \begin{enumerate}
        \item (Analyticity) $f(z)$ is analytic on the lower half plane $\left\{z: \Im(z) < 0 \right\}$ and continuous on $\left\{z: \Im(z) \leq 0 \right\}$, 
        \item (Boundedness) there exists a constant $\widetilde{C}$ such that $|f(z)| \leq \widetilde{C}$ on $\left\{z:\Im(z) \leq 0\right\}$, and
        \item (Exponential decay) there exist constants $c,\widetilde{c} > 0$ such that for any $z = k \in \mathbb{R}$, we have $|f(k)| \leq \widetilde{c} e^{-c|k|}$. 
    \end{enumerate}
    Then $f(z) = 0$ for all $z\in\left\{z:\Im(z) \leq 0\right\}$ (including all $z \in \mathbb{R}$). 
\end{prop}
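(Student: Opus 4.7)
The plan is to use the exponential decay of $f$ along the real axis to first upgrade it to exponential decay of $|f(z)|$ in the $|\Re(z)|$ direction throughout the entire lower half plane, via Phragm\'en--Lindel\"of in each of the two lower quadrants, and then to feed the resulting integrability on horizontal slices into a Cauchy--Fourier argument that forces the Fourier transform of $f|_{\mathbb R}$ to both vanish on $(0,\infty)$ and be analytic across $\xi=0$, from which $f\equiv 0$ follows.

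For the first step, on the fourth quadrant $Q_4=\{x\ge 0,\, y\le 0\}$ introduce the auxiliary function $G(z)=f(z)\,e^{cz}$. On the positive real axis $|G(k)|=|f(k)|e^{ck}\le \widetilde{c}$; on the negative imaginary axis $|G(-it)|=|f(-it)|\le \widetilde{C}$; and in the interior $|G(z)|\le \widetilde{C}\,e^{c|z|}$, which is of order $1$, strictly less than the critical order $\pi/(\pi/2)=2$ for a right-angle sector. Phragm\'en--Lindel\"of for the quadrant therefore gives $|G|\le M:=\max(\widetilde{c},\widetilde{C})$ on $Q_4$, i.e.\ $|f(z)|\le M\,e^{-cx}$ on $Q_4$. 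The mirror choice $G(z)=f(z)\,e^{-cz}$ on $Q_3=\{x\le 0,\, y\le 0\}$ yields $|f(z)|\le M\,e^{cx}$ on $Q_3$, and combining the two gives $|f(z)|\le M\,e^{-c|\Re(z)|}$ throughout the closed lower half plane.

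For the second step, fix any $z_0$ with $\Im(z_0)<0$ and apply Cauchy's integral formula on the rectangle $[-R,R]\times[y_0,0]$ with $y_0<\Im(z_0)$. The two vertical sides vanish as $R\to\infty$ from the exponential $x$-decay, and the lower side vanishes as $y_0\to-\infty$ since its modulus is bounded by $\int_{\mathbb R}\frac{M\,e^{-c|x|}}{|y_0-\Im(z_0)|}\,dx = \Or(1/|y_0|)$. What remains is the representation
\begin{equation*}
f(z_0)\;=\;-\frac{1}{2\pi i}\int_{\mathbb R}\frac{f(x)}{x-z_0}\,dx,\qquad \Im(z_0)<0.
\end{equation*}
Expanding the kernel as $\frac{1}{x-z_0}=-i\int_0^\infty e^{i\xi(x-z_0)}\,d\xi$ (valid precisely because $\Im(z_0)<0$) and using Fubini yields $f(z_0)=\frac{1}{2\pi}\int_{-\infty}^0 e^{i\xi z_0}\widehat{f}(\xi)\,d\xi$, where $\widehat{f}(\xi)=\int f(x)e^{-i\xi x}\,dx$. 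Passing to $z_0\to k\in\mathbb R$ and comparing with Fourier inversion forces $\widehat{f}(\xi)=0$ for all $\xi>0$. On the other hand, the hypothesis $|f(x)|\le \widetilde{c}\,e^{-c|x|}$ makes the defining integral for $\widehat{f}$ absolutely convergent and analytic on the strip $|\Im(\xi)|<c$, so the identity theorem gives $\widehat{f}\equiv 0$ on the strip. Therefore $f\equiv 0$ on $\mathbb R$, and a Schwarz reflection across the real axis followed by Liouville (the reflected entire function is bounded by $\widetilde{C}$) yields $f\equiv 0$ on the entire closed lower half plane.

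The main obstacle is finding the right auxiliary functions in the Phragm\'en--Lindel\"of step: since $e^{-c|k|}$ is not the restriction of any single entire function, no single analytic multiplier can absorb it across the whole lower half plane, which is why one must split into the two quadrants and use the opposite multipliers $e^{\pm cz}$; checking that the resulting growth ($\mathrm{order}=1$) stays strictly below the right-angle critical order $2$ is also essential. Once Step 1 is in hand, the Cauchy--Fourier conclusion is a routine Paley--Wiener calculation, with the only care needed being in justifying Fubini and the boundary limit $z_0\to k$, both of which follow immediately from the $L^1$ integrability supplied by the first step.
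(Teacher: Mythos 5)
Your overall strategy is the same two-stage one as the paper's---first upgrade the real-axis decay to decay in the lower half plane via Phragm\'en--Lindel\"of, then kill the Fourier transform on a half-line and invoke analytic continuation---but both stages are executed differently, and mostly to good effect. Your quadrant-wise Phragm\'en--Lindel\"of with the linear multipliers $e^{\pm cz}$ is cleaner than the paper's half-plane argument, which uses multipliers $e^{cz^{\rho}e^{i\xi}}$, an optimization over $\xi$, and the limit $\rho \rightarrow 1$. Your version yields only $|f(z)| \leq M e^{-c|\Re(z)|}$ rather than the paper's radial bound $|f(z)| \leq (\widetilde{c}e^c+\widetilde{C})e^{-c|z|}$, but you compensate correctly: the rectangular contours are adapted to decay in the real direction only, and your Schwarz-reflection-plus-Liouville ending needs only boundedness, whereas the paper's maximum-modulus ending genuinely uses the stronger radial decay on large semicircles.

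The one soft spot is the sentence ``Passing to $z_0 \to k \in \mathbb{R}$ and comparing with Fourier inversion forces $\widehat{f}(\xi) = 0$ for all $\xi > 0$.'' Moving the boundary limit inside $\frac{1}{2\pi}\int_{-\infty}^{0} e^{i\xi z_0}\widehat{f}(\xi)\,\mathrm{d}\xi$ and then invoking pointwise Fourier inversion both require $\widehat{f} \in L^1(\mathbb{R})$, which you have not established: from $f \in L^1 \cap L^{\infty}$ you only get that $\widehat{f}$ is bounded, continuous, and in $L^2$, and the strip-analyticity you prove afterwards gives no integrability in $\xi$ along the real axis. (One can rescue the step as stated by appealing to the $H^2$ Paley--Wiener theorem for boundary values of bounded analytic functions, but that is a nontrivial citation.) Fortunately you do not need the Cauchy representation at all: for $\xi > 0$ apply Cauchy's theorem to $f(z)e^{-i\xi z}$ on the same rectangle $[-R,R]\times[y_0,0]$. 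The vertical sides vanish as $R \to \infty$ by your bound $|f(z)| \leq Me^{-c|\Re(z)|}$, and the bottom side is bounded by $(2M/c)\,e^{\xi y_0} \to 0$ as $y_0 \to -\infty$, giving $\widehat{f}(\xi) = 0$ directly---this is exactly the mechanism the paper uses with its semicircular contour. With that substitution the argument is complete; the remaining steps (analyticity of $\widehat{f}$ on the strip $|\Im(\xi)| < c$, the identity theorem, and reflection plus Liouville) are correct as written.
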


Notice that the analyticity condition in~\cref{prop:prop_non_existence} is exactly the same as that in~\cref{thm:LCHS_improved}. The boundedness condition in~\cref{prop:prop_non_existence} is  weaker than the decay condition in~\cref{thm:LCHS_improved}. 
Therefore, \cref{prop:prop_non_existence} shows that even slightly faster decay of the kernel function along the real axis than~\cref{eqn:kernel_exp} will result in a function equal to $0$ everywhere, which violates the normalization condition and cannot serve as a satisfying kernel function. 
In this sense, the choice of kernels in~\cref{eqn:kernel_exp} is near-optimal within the current LCHS framework.

To prove~\cref{prop:prop_non_existence}, we use the Phragm\'en–Lindel\"of principle to control the decay of the function in a larger domain. 
The Phragm\'en–Lindel\"of principle is a generalization of the maximum modulus principle to an unbounded domain. 
We first state a specific version of this principle that we use. 

\begin{lem}[Phragm\'en–Lindel\"of principle, {\cite[Corollary 4.4]{Conway1994}}]\label{lem:PhragmenLindelof}
    Let $g(z)$ be a function that is analytic on $\left\{z: \Im(z) < 0 \right\}$ and continuous on $\left\{z: \Im(z) \leq 0 \right\}$. 
    Suppose that $|g(z)| \leq 1$ for all $z \in \mathbb{R}$, and for every $0 < \delta < \delta_0$ there exists $\gamma > 0$ (which may depend on $\delta$) such that $|g(z)| \leq \gamma e^{\delta |z|}$ for sufficiently large $|z|$. 
    Then $|g(z)| \leq 1$ on $\left\{z: \Im(z) \leq 0 \right\}$. 
\end{lem}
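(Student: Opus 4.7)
The plan is to invoke the classical auxiliary-multiplier technique that underlies the Phragm\'en--Lindel\"of principle, reducing the unbounded half-plane estimate to an ordinary maximum-modulus argument on an expanding bounded region. To keep the geometry transparent I would first pass to the right half plane $\{\Re w > 0\}$ via the conformal map $w = iz$, so that $G(w) := g(-iw)$ is analytic there, continuous on the closure, satisfies $|G(iy)| \le 1$ on the imaginary axis, and inherits the sub-exponential growth $|G(w)| \le \gamma_\delta e^{\delta|w|}$ for every $\delta \in (0,\delta_0)$.

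For each $\epsilon > 0$ and each $\beta \in (0,1)$ I would introduce the damping multiplier $h_\epsilon(w) := G(w)\, e^{-\epsilon w^\beta}$, using the principal branch of $w^\beta$ on the right half plane. Writing $w = re^{i\theta}$ with $|\theta| \le \pi/2$ gives $\Re(w^\beta) = r^\beta\cos(\beta\theta) \ge r^\beta\cos(\beta\pi/2) \ge 0$, so $|h_\epsilon(iy)| \le |G(iy)| \le 1$ on the imaginary axis, and in the interior $|h_\epsilon(w)| \le \gamma_\delta \exp\bigl(\delta r - \epsilon r^\beta \cos(\beta\pi/2)\bigr)$. The aim is to apply the ordinary maximum modulus principle to $h_\epsilon$ on the semidisk $\Omega_R = \{|w| < R,\, \Re w > 0\}$, extract the bound $|h_\epsilon(w)| \le 1$ on the full right half plane by letting $R \to \infty$, and then let $\epsilon \to 0^+$ to obtain $|G(w)| \le 1$, which transports back via $z = -iw$ to the conclusion $|g(z)| \le 1$ on $\{\Im z \le 0\}$.

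The main obstacle is that the hypothesis is borderline: it permits growth of order exactly $1$ (with minimal type), so for every choice of $\beta < 1$ the term $\delta r$ in the exponent dominates $\epsilon r^\beta \cos(\beta\pi/2)$, and the arc contribution in the max-modulus estimate fails to vanish as $R \to \infty$ for any fixed $\delta > 0$. I anticipate overcoming this with a two-stage argument. In the first stage I would switch to the sharper multiplier $e^{-\epsilon w}$ and bound $h_\epsilon$ on the interior subsector $|\arg w| \le \pi/2 - \eta$ for small $\eta > 0$: along the arc piece of this subsector one has $\cos\theta \ge \sin\eta$, and choosing $\delta < \epsilon\sin\eta$ from the freedom in the hypothesis makes the arc contribution decay exponentially as $R \to \infty$. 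In the second stage I would treat each of the two thin corner wedges $\pi/2 - \eta \le |\arg w| \le \pi/2$ as a sector of opening $\eta$ and apply the sharper sector version of Phragm\'en--Lindel\"of, whose admissible critical growth rate $\exp(r^{\pi/\eta})$ comfortably dominates our sub-exponential bound; the required boundary data on the ray $\arg w = \pm(\pi/2-\eta)$ is supplied by the first stage and on the imaginary-axis boundary by the hypothesis on $g$.

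Stitching the two stages gives $|h_\epsilon(w)| \le 1$ throughout the closed right half plane. Letting $\eta \to 0$ and then $\epsilon \to 0^+$ in the correct order yields $|G(w)| \le 1$, and translating back via the inverse conformal map recovers the desired conclusion $|g(z)| \le 1$ for all $z$ with $\Im z \le 0$. The most delicate piece of the argument will be verifying that the two-stage bounds fit together consistently in the limit, in particular that the dependence of $\gamma_\delta$ on the shrinking parameter $\delta$ does not spoil the final passage $\epsilon \to 0^+$.
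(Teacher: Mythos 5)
This lemma is not proved in the paper at all: it is quoted verbatim from Conway (Corollary 4.4 there), so you are supplying a proof of a cited classical fact rather than reproducing an argument from the text. Your architecture is the standard one (conformal map to the right half plane, damping multiplier $e^{-\epsilon w}$, sector decomposition to handle the borderline order-$1$ growth), and your diagnosis of why the naive multiplier $e^{-\epsilon w^{\beta}}$ with $\beta<1$ fails is exactly right. However, there is a genuine gap at the point where you assert that ``stitching the two stages gives $|h_\epsilon(w)|\le 1$ throughout the closed right half plane.'' Stage 1 cannot produce the bound $1$ on the inner subsector $|\arg w|\le \pi/2-\eta$: that region does not touch the imaginary axis, so the only available boundary/growth information there is $|h_\epsilon(w)|\le \gamma_\delta e^{(\delta-\epsilon\sin\eta)|w|}$, which (for $\delta<\epsilon\sin\eta$) yields boundedness by some constant $C$ depending on $\gamma_\delta$ and on $\sup$ of $|h_\epsilon|$ over a compact set --- a constant that has no reason to be $\le 1$. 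Worse, if Stage 1 is meant as a maximum-modulus argument on the truncated subsector, it needs boundary data on the rays $\arg w=\pm(\pi/2-\eta)$, which you say is supplied by Stage 2, while Stage 2 needs boundary data on those same rays from Stage 1; as written the two stages feed each other circularly. (The circularity is avoidable: get the constant bound on the subsector, rays included, directly from the growth hypothesis, with no maximum-modulus step.)

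The fix is standard and short, and it also dissolves your closing worry about $\gamma_\delta$: the two stages legitimately establish only that $h_\epsilon=G\,e^{-\epsilon w}$ is \emph{bounded} on the closed right half plane (by $\max(1,C)$, say, after applying the basic sector Phragm\'en--Lindel\"of on each thin wedge, where the critical order $\pi/\eta$ indeed dominates order $1$). You then need one further application of the Phragm\'en--Lindel\"of principle for \emph{bounded} functions on the half plane (growth of order $0<1$, so the basic version applies) to pull the bound down to the boundary value $1$ on the imaginary axis. Only after that step is the bound independent of $\delta$, $\gamma_\delta$, and $\eta$, so that letting $\epsilon\to 0^+$ in $|G(w)|\le e^{\epsilon\,\Re w}$ cleanly gives $|G|\le 1$; note that $\delta$ stays fixed at any value below $\epsilon\sin\eta$ throughout and the limit $\eta\to 0$ is never needed. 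With that extra step inserted (or with the whole lemma simply cited, as the paper does), your argument is correct.
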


Now we are ready to prove~\cref{prop:prop_non_existence}. 

\begin{proof}[Proof of~\cref{prop:prop_non_existence}]
    We first use the Phragm\'en–Lindel\"of principle to show that $f(z)$ also decays exponentially on the entire lower half plane. 
    For any $\xi \in [0,2\pi]$ and $\rho \in (0,1)$, let us define 
    \begin{equation}
        g(z) = \frac{1}{\widetilde{c} e^c +\widetilde{C}}e^{c z^{\rho}e^{i\xi}  } f(z). 
    \end{equation}
    Here $c,\widetilde{c},\widetilde{C}$ are the constants specified in~\cref{prop:prop_non_existence}. 
    Let us verify that $g(z)$ satisfies the assumptions in~\cref{lem:PhragmenLindelof}. 
    We first consider $z = k \in \mathbb{R}$.  
    If $k > 0$, then we have 
    \begin{equation}
        |g(k)| \leq \frac{1}{\widetilde{c} e^c +\widetilde{C}} e^{c k^{\rho} \cos\xi } |f(k)| \leq \frac{1}{e^c} e^{c k^{\rho}  \cos\xi }  e^{-ck} \leq 1. 
    \end{equation}
    If $k < 0$, then 
    \REV{\begin{equation}
        e^{c k^\rho e^{i\xi} } = e^{c|k|^\rho  e^{i\xi} e^{-i\rho \pi}} = e^{c|k|^\rho e^{i(\xi - \rho \pi)}}, 
    \end{equation}}
    and 
    \REV{\begin{equation}
        |g(k)| \leq \frac{1}{\widetilde{c}e^c +\widetilde{C}} e^{c |k|^{\rho} \cos(\xi-\rho \pi) } |f(k)| \leq \frac{1}{e^c} e^{c |k|^{\rho} \cos(\xi-\rho \pi) } e^{-c|k|} \leq 1, 
    \end{equation}}
    so we always have $|g(k)| \leq 1$ for $k \in \mathbb{R}$. 
    Now let $z = Re^{i\theta}$ for $\theta \in (-\pi,0)$. 
    Then 
    \begin{equation}
        e^{cz^{\rho} e^{i\xi} } = e^{c R^{\rho} e^{i(\xi+\rho\theta)} }, 
    \end{equation}
    so 
    \begin{equation}
        |g(z)| = \frac{1}{\widetilde{c}e^c+\widetilde{C}} e^{c R^{\rho} \cos(\xi+\rho\theta) } |f(z)| \leq e^{c R^{\rho} \cos(\xi+\rho\theta) } \leq e^{c R^{\rho} }. 
    \end{equation}
    For any $R > 0$ and $0 < \delta < 1$, notice that $c R^{\rho} - \delta R $ as a function of $R$ attains its maximum at $R = (c\rho/\delta)^{1/(1-\rho)}$. Therefore 
    \begin{equation}
        cR^{\rho} \leq c(c\rho/\delta)^{\rho/(1-\rho)} - \delta (c\rho/\delta)^{1/(1-\rho)} + \delta R, 
    \end{equation}
    and thus 
    \begin{equation}
        |g(z)| \leq e^{c(c\rho/\delta)^{\rho/(1-\rho)} - \delta (c\rho/\delta)^{1/(1-\rho)}} e^{\delta |z|}. 
    \end{equation}
    Therefore, by~\cref{lem:PhragmenLindelof}, we have $|g(z)| \leq 1$ on the entire lower plane, and thus 
    \begin{equation}\label{eqn:proof_non_existence_eq1}
        |f(z)| \leq (\widetilde{c}e^c+\widetilde{C}) e^{ - c |z|^{\rho} \cos(\xi+\rho\theta) }. 
    \end{equation}
    As~\cref{eqn:proof_non_existence_eq1} holds for all $\xi$, for any fixed $z$ and $\rho$, we may accordingly choose $\xi$ such that $\cos(\xi+\rho\theta) = 1$. 
    Then 
    \begin{equation}
         |f(z)| \leq (\widetilde{c}e^c+\widetilde{C}) e^{ - c |z|^{\rho} }. 
    \end{equation}
    Taking the limit $\rho \rightarrow 1$, we obtain 
    \begin{equation}\label{eqn:proof_non_existence_eq2}
         |f(z)| \leq (\widetilde{c}e^c+\widetilde{C}) e^{ - c |z| }. 
    \end{equation}

    Now consider the Fourier transform of $f(z)$ extended to a neighborhood of the real axis.
    Specifically, for $-c/2 < \Im(w) < c/2$, as depicted in~\cref{fig:contour2} (left), we define 
    \begin{equation}
        F(w) = \int_{\mathbb{R}} f(k) e^{-iwk} \ud k. 
    \end{equation}
    Notice that this is a well-defined function by the dominated convergence theorem, using the fact that
    \begin{equation}
        |f(k) e^{-iwk}| \leq \widetilde{c}e^{-c |k|} e^{k \Im(w) } \leq \widetilde{c}e^{-c |k|/2}.
    \end{equation}
    Furthermore, $F(w)$ is analytic over the strip $\left\{w : -c/2 < \Im(w) < c/2 \right\}$. 
    This can be proved by definition. 
    For any fixed $w_0$, consider a sequence $\left\{w_n\right\} \in B_{r}(w_0)$ with $w_n \rightarrow w_0$, where $r < c/4$.  
    Let 
    \begin{equation}
        h_n(k) = \frac{e^{-iw_nk} - e^{-iw_0k}}{w_n-w_0}. 
    \end{equation}
    As $e^{-iwk}$ is analytic, we have $h_n(k) \rightarrow -ik e^{-iw_0 k}$. 
    Furthermore, 
    \begin{align}
        |h_n(k)| &\leq \sup_{w \in B_{r}(w_0)} |\mathrm{d}(e^{-iwk})/\mathrm{d} w| = \sup_{w \in B_{r}(w_0)} |-ik e^{-iwk}| \\
        & = |k| \sup_{\xi \in [0,2\pi]} \left|e^{-ik (w_0+re^{i\xi}) }\right| \\
        & = |k|  e^{ k \Im w_0 } \sup_{\xi \in [0,2\pi]} e^{ k r \sin\xi } \\
        & \leq |k| e^{3c|k|/4 }.  
    \end{align}
    Then $|f(k) h_n(k)| \leq \widetilde{c} |k| e^{-c|k|/4} $, and the integral of its upper bound converges. 
    By the dominated convergence theorem again, we have 
    \begin{equation}
        \lim_{w_n \rightarrow w_0} \frac{F(w_n)-F(w_0)}{w_n-w_0} = \int_{\mathbb{R}} f(k) \lim_{w_n \rightarrow w_0} \frac{e^{-iw_n k}-e^{-i w_0 k}}{w_n-w_0} \ud k = \int_{\mathbb{R}} f(k)(-i)k e^{-iw_0k} \ud k, 
    \end{equation}
    which proves that $F(w)$ is analytic over the strip $\left\{w : -c/2 < \Im(w) < c/2 \right\}$. 

    \begin{figure}
        \centering
        \includegraphics[width = 0.45\textwidth]{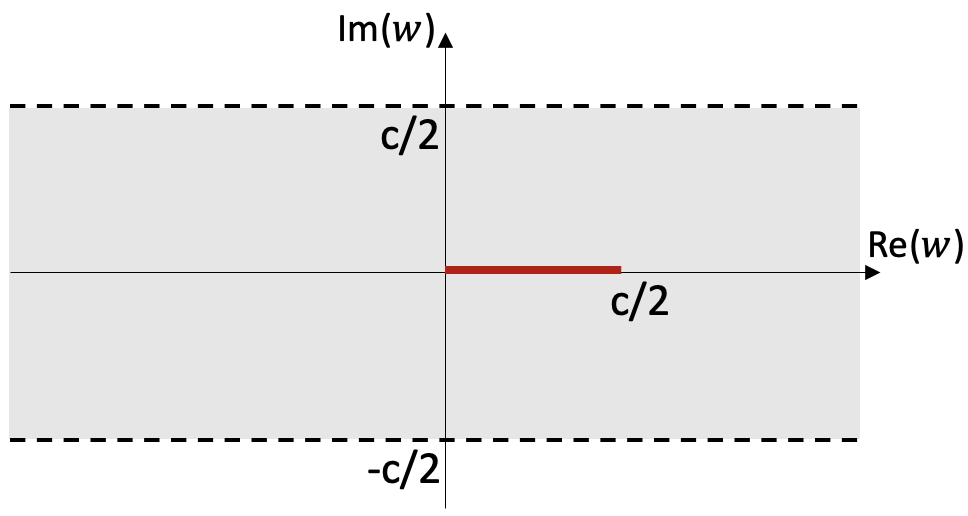}
        \includegraphics[width = 0.45\textwidth]{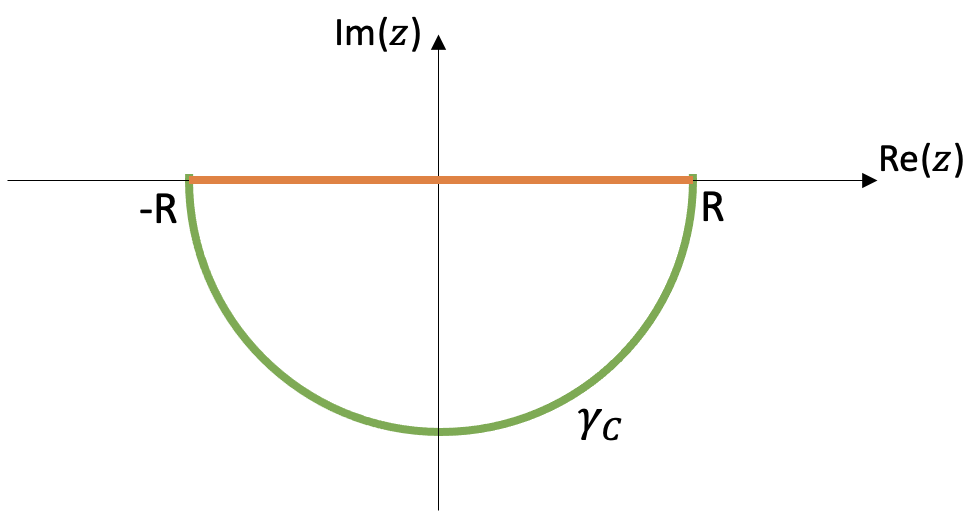}
        \caption{The strip and the contour used in the proof of~\cref{prop:prop_non_existence}. }
        \label{fig:contour2}
    \end{figure}

    Now we show that $F(w)$ is always $0$ over this strip. 
    To show this, we first fix a $w \in (0,c/2)$ to be real (see~\cref{fig:contour2} (left)) and consider the function $f_{w}(z) = f(z) e^{-iwz}$. 
    Then $f_w(z)$ is analytic on $\left\{z : \Im(z) < 0\right\}$ and continuous on $\left\{z : \Im(z) \leq 0\right\}$. 
    Let us consider the contour $[-R,R]\cup \gamma_C$ as shown in~\cref{fig:contour2} (right), where $\gamma_C = \left\{ Re^{i\theta}, \theta \in [-\pi,0] \right\}$. 
    Then, by Cauchy's integral theorem, we have 
    \begin{equation}
        F(w) = \int_{\mathbb{R}} f(k) e^{-iwk} \ud k = \lim_{R \rightarrow \infty} \int_{\gamma_C} f(z) e^{-iwz} \ud z = \lim_{R \rightarrow \infty} \int_{-\pi}^{0} f(Re^{i\theta}) e^{-i w Re^{i\theta}} iRe^{i\theta}\ud \theta. 
    \end{equation}
    Using~\cref{eqn:proof_non_existence_eq2}, we have 
    \begin{equation}
        \left| \int_{-\pi}^{0} f(Re^{i\theta}) e^{-i w Re^{i\theta}} iRe^{i\theta} \ud \theta \right| \leq \pi R \sup_{\theta\in[-\pi,0]}  \left(e^{ w R \sin\theta}\left|f(Re^{i\theta})\right| \right) \leq (\widetilde{c}e^c+\widetilde{C})  \pi R e^{ - c R }, 
    \end{equation}
    which vanishes as $R \rightarrow \infty$. 
    Therefore $F(w) = 0$ for all $w \in (0,c/2)$. This immediately implies that $F(w) = 0$ must hold for all $w \in \left\{ -c/2 < \Im(w) < c/2 \right\}$ according to the identity theorem in complex analysis. 

    We have shown that the Fourier transform of $f(k)$ is identically $0$. 
    By taking the inverse Fourier transform, we obtain that 
    \begin{equation}\label{eqn:proof_non_existence_eq3}
        f(k) = 0, \quad \forall k \in \mathbb{R}. 
    \end{equation} 
    For any $z$ with $\Im(z) < 0$, we may consider again the domain enclosed by the contour in~\cref{fig:contour2} (right) with a sufficiently large $R$. 
    The maximum modulus principle and~\cref{eqn:proof_non_existence_eq2,eqn:proof_non_existence_eq3} imply that
    \begin{equation}
        |f(z)| \leq (\widetilde{c}e^c+\widetilde{C}) e^{ - c R }, \quad \forall |z|<R.  
    \end{equation}
    Taking $R\rightarrow \infty$ yields $f(z) = 0$ for all $z \in \left\{ \Im(z) \leq 0 \right\}$ and completes the proof. 
\end{proof}

\section{Numerical discretization}\label{sec:discertization}

We have shown that the time-evolution operator $\mathcal{T} e^{-\int_0^t A(s) \ud s}$ of the ODE~\cref{eqn:ODE} can be represented as the weighted integral of a set of unitary operators. 
In order to design numerical algorithms based on this, we need to discretize the integral. 
In this section, we discuss approaches to this discretization and provide error bounds. 
We first discuss the discretization of the time-evolution operator, which can readily be used in algorithms for homogeneous ODEs. 
Then we discuss the discretization for the inhomogeneous term. 
For notational simplicity, from now on we only consider the solution at the final time $T$. 

\subsection{Homogeneous term}\label{sec:discertization_homo}

We discretize~\cref{eqn:LCHS_improved} into a discrete sum of unitaries with a specific choice of kernel in~\cref{eqn:kernel_exp}. 
We first truncate to a finite interval $[-K,K]$, and then use the composite Gaussian quadrature (see~\cref{app:quadrature} for a review) to discretize the interval. 
Specifically, let 
\begin{equation}\label{eqn:quadrature}
    g(k) = \frac{1}{C_{\beta} (1-ik) e^{(1+ik)^{\beta}} }, \quad U(T,k) = \mathcal{T} e^{-i \int_0^T (kL(s)+H(s)) \ud s}. 
\end{equation}
We write 
\begin{equation}\label{eqn:LCHS_LCU_composite}
\begin{split}
    \mathcal{T} e^{-\int_0^T A(s) \ud s} = \int_{\mathbb{R}} g(k) U(T,k) \ud k &\approx \int_{-K}^{K} g(k) U(T,k) \ud k \\
    & = \sum_{m = -K/h_1}^{K/h_1-1} \int_{mh_1}^{(m+1)h_1} g(k) U(T,k) \ud k \approx \sum_{m = -K/h_1}^{K/h_1-1} \sum_{q=0}^{Q-1} c_{q,m} U(T,k_{q,m}). 
\end{split}
\end{equation}
Here $h_1$ is the step size used in the composite quadrature rule. 
Without loss of generality, we may choose $K$ and $h_1$ so that $K/h_1$ is an integer. 
On each interval $[mh_1,(m+1)h_1]$, we use Gaussian quadrature with $Q$ nodes. 
Here the $k_{q,m}$'s are the Gaussian nodes, $c_{q,m} = w_q g(k_{q,m})$, and the $w_q$'s are the Gaussian weights (which do not depend on the choice of $m$). 

We now bound the errors in each approximation and estimate how the parameters scale. 

\begin{lem}\label{lem:truncation}
    \begin{enumerate}
        \item The truncation error can be bounded as 
        \begin{equation}
        \left\| \int_{\mathbb{R}} g(k) U(T,k) \ud k - \int_{-K}^{K} g(k) U(T,k) \ud k  \right\| \leq \frac{2^{\left\lceil 1/\beta \right\rceil+1} \left\lceil 1/\beta \right\rceil !}{C_{\beta} \left(\cos(\beta\pi/2)\right)^{\left\lceil 1/\beta \right\rceil} } \frac{1}{ K } e^{-\frac{1}{2}K^{\beta} \cos(\beta\pi/2) }.~  
    \end{equation}
    \item For $\epsilon > 0$, in order to bound the error by $\epsilon$, it suffices to choose 
    \begin{equation}
        K = \mathcal{O}\left( \left(\log\left(\frac{1}{\epsilon}\right)\right)^{1/\beta} \right). 
    \end{equation}
    \end{enumerate}
\end{lem}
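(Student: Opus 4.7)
The plan is to exploit unitarity of $U(T,k)$ to reduce the operator-norm estimate to a scalar tail bound on $|g(k)|$, then control $|g(k)|$ using the polar form of $1+ik$, and finally evaluate the resulting integral by substitution to a gamma function.

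First, because $U(T,k)$ is unitary for every real $k$, the triangle inequality for operator-valued integrals yields
\begin{equation*}
\left\|\int_{|k|>K} g(k)U(T,k)\,\ud k\right\| \leq \int_{|k|>K} |g(k)|\,\ud k,
\end{equation*}
and $|g(-k)|=|g(k)|$ (because $\arctan$ is odd and cosine even), so it suffices to bound twice the one-sided tail. Writing $1+ik=\sqrt{1+k^2}\,e^{i\arctan k}$ and taking principal $\beta$-powers gives $|e^{(1+ik)^\beta}|=\exp((1+k^2)^{\beta/2}\cos(\beta\arctan k))$. Since $|\arctan k|<\pi/2$, setting $c:=\cos(\beta\pi/2)$ yields $\cos(\beta\arctan k)\geq c$, and combined with $(1+k^2)^{\beta/2}\geq |k|^\beta$ and $|1-ik|\geq |k|$ this gives
\begin{equation*}
|g(k)| \leq \frac{1}{C_\beta |k|}\,e^{-|k|^\beta c} \quad\text{for } |k|\geq 1.
\end{equation*}

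Next, for $k\geq K\geq 1$ the plan is to bound $1/k\leq 1/K$, split $e^{-k^\beta c}=e^{-k^\beta c/2}\cdot e^{-k^\beta c/2}$, and use $e^{-k^\beta c/2}\leq e^{-K^\beta c/2}$ on the region of integration to pull out the dominant decay factor:
\begin{equation*}
\int_K^\infty |g(k)|\,\ud k \leq \frac{e^{-K^\beta c/2}}{C_\beta K}\int_0^\infty e^{-k^\beta c/2}\,\ud k.
\end{equation*}
Substituting $y=k^\beta c/2$ evaluates the remaining integral to $(2/c)^{1/\beta}\Gamma(1/\beta+1)$. To reach the stated constants I would then bound $\Gamma(1/\beta+1)\leq \lceil 1/\beta\rceil!$ using monotonicity of $\Gamma$ past its minimum (valid because $1/\beta+1>2$ when $\beta\in(0,1)$), and $(2/c)^{1/\beta}\leq (2/c)^{\lceil 1/\beta\rceil}$ since $2/c>1$. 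Multiplying by $2$ for the two-sided tail recovers $2^{\lceil 1/\beta\rceil+1}\lceil 1/\beta\rceil!/(C_\beta c^{\lceil 1/\beta\rceil})$ exactly. Part~2 is then immediate: inverting $K^\beta c/2\gtrsim\log(1/\epsilon)+O(\log\log(1/\epsilon))$ gives $K=\mathcal{O}((\log(1/\epsilon))^{1/\beta})$, with the log-log corrections absorbed because $\beta$ is treated as a fixed constant.

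The analysis is essentially routine; the only step that shapes the final form is the geometric-mean split $e^{-k^\beta c}=e^{-k^\beta c/2}\cdot e^{-k^\beta c/2}$, which trades the slightly sharper $e^{-K^\beta c}$ decay one would get from simply pulling $1/k\leq 1/K$ out for the cleaner factored form with a convergent gamma integral over the full half-line. Matching the constants in the lemma exactly then comes down to rounding $1/\beta\to\lceil 1/\beta\rceil$ consistently in both the argument of $\Gamma$ and the exponent of $2/c$, with both inequalities pointing the right way.
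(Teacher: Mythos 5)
Your proposal is correct and follows essentially the same route as the paper: reduce to the scalar tail $\int_{|k|>K}|g(k)|\,\ud k$ via unitarity, bound $|g(k)|\le \frac{1}{C_\beta|k|}e^{-|k|^\beta\cos(\beta\pi/2)}$ using the polar form of $1+ik$, and split the exponential into two equal halves, one evaluated at $K$ and one used to make the remaining integral converge. The only (cosmetic) difference is in the last bookkeeping step: the paper converts one half-exponential into an extra factor of $1/|k|$ via the Taylor term $e^{x}\ge x^{B}/B!$ and integrates $1/k^{2}$, whereas you pull out $1/K$ directly and evaluate $\int_0^\infty e^{-k^\beta c/2}\ud k$ as a gamma function; both land on the identical constant.
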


\begin{lem}\label{lem:quadrature}
    \REV{Suppose that $T \max_t \|L(t)\| \geq 32/e$. Then}
    \begin{enumerate}
        \item The quadrature error can be bounded as 
        \begin{equation}
            \left\| \int_{-K}^{K} g(k) U(T,k) \ud k - \sum_{m = -K/h_1}^{K/h_1-1} \sum_{q=0}^{Q-1} c_{q,m} U(T,k_{q,m}) \right\| \leq \frac{8 }{3C_{\beta}} K h_1^{2Q} \left(\frac{eT\max_t \|L(t)\|}{2}\right)^{2Q}.
        \end{equation}
        \item For $\epsilon > 0$, in order to bound the error by $\epsilon$, it suffices to choose 
        \begin{equation}
            h_1 = \frac{1}{eT \max_t\|L(t)\|}, \quad Q = \left\lceil \frac{1}{\log 4} \log\left( \frac{8}{3C_{\beta}} \frac{K}{\epsilon} \right) \right\rceil = \mathcal{O}\left( \log\left( \frac{1}{\epsilon} \right) \right),
        \end{equation}
        and the overall number of unitaries in the summation formula is 
        \begin{equation}
            M = \frac{2KQ}{h_1} = \mathcal{O}\left( T \max_t \|L(t)\| \left(\log\left(\frac{1}{\epsilon}\right)\right)^{1+1/\beta} \right). 
        \end{equation}
    \end{enumerate}
\end{lem}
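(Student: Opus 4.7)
The plan is to apply the standard remainder formula for $Q$-point Gauss--Legendre quadrature on each of the $2K/h_1$ subintervals of length $h_1$, control the $2Q$-th derivative of the integrand $F(k):=g(k)U(T,k)$ via analyticity arguments, and then sum and calibrate. On a subinterval of length $h_1$ the Gauss--Legendre remainder is bounded by $\tfrac{h_1^{2Q+1}(Q!)^4}{(2Q+1)[(2Q)!]^3}\max_\xi \|F^{(2Q)}(\xi)\|$, so the whole task reduces to a sharp estimate of $\|F^{(2Q)}\|$ on $[-K,K]$.

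For the $U$-factor I would exploit that $A(s;k)=kL(s)+H(s)$ is affine in $k$: applying Duhamel's principle $n$ times yields
\begin{equation*}
\partial_k^n U(T,k)=(-i)^n n!\int_{0\le s_1<\cdots<s_n\le T} U(T,s_n,k)\,L(s_n)\,U(s_n,s_{n-1},k)\cdots L(s_1)\,U(s_1,0,k)\,ds_1\cdots ds_n,
\end{equation*}
and since each intermediate propagator $U(\cdot,\cdot,k)$ is unitary for real $k$, taking norms gives $\|\partial_k^n U(T,k)\|\le (T\max_t\|L(t)\|)^n$. The kernel $g(k)=1/[C_\beta(1-ik)\exp((1+ik)^\beta)]$ is analytic on $\{\Im k>-1\}$, so Cauchy's integral formula on a disk of radius slightly less than $1$ around any real $k$ gives $|g^{(n)}(k)|\le \Or(n!/C_\beta)$ uniformly on the real line. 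Combining the two via the Leibniz rule and grouping the dominant terms produces a bound of the shape $\|F^{(2Q)}(k)\|=\Or\bigl((2Q)!(eT\max_t\|L(t)\|)^{2Q}/C_\beta\bigr)$. Substituting this into the Gauss--Legendre remainder, using $(Q!)^4/[(2Q)!]^3=1/[(2Q)!\binom{2Q}{Q}^2]$ together with $\binom{2Q}{Q}\ge 4^Q/(2Q+1)$, and summing over the $2K/h_1$ subintervals produces the bound in part (i), with the constant $8/3$ absorbing the residual combinatorial factors.

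For part (ii), the choice $h_1=1/(eT\max_t\|L(t)\|)$ is tailored so that $h_1\cdot eT\max_t\|L(t)\|/2=1/2$, reducing the total error to $(8K/(3C_\beta))\cdot 4^{-Q}$. Requiring this to be at most $\epsilon$ gives $Q=\lceil\log(8K/(3C_\beta\epsilon))/\log 4\rceil=\Or(\log(1/\epsilon))$, since $\log K=\Or(\log\log(1/\epsilon))$ by \cref{lem:truncation}. The total node count is $M=2KQ/h_1=2K\cdot eT\max_t\|L(t)\|\cdot Q=\Or\bigl(T\max_t\|L(t)\|(\log(1/\epsilon))^{1+1/\beta}\bigr)$ after substituting $K=\Or((\log(1/\epsilon))^{1/\beta})$.

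The main technical obstacle is packaging the derivative estimate on $F=gU$ with the correct constant so that the final bound has the clean geometric shape $(eT\max_t\|L(t)\|/2)^{2Q}$ rather than something cluttered with stray $\poly(Q)$ or $e^{T\|L\|}$ factors coming from a naive Leibniz expansion. A natural fallback, which I would consider if the Leibniz bookkeeping becomes unwieldy, is to exhibit a common horizontal strip of analyticity for $g$ and $U(T,\cdot)$ (whose width is limited by the singularity of $g$ at $k=-i$) and apply the Bernstein-ellipse estimate for Gauss--Legendre quadrature of analytic functions; this produces exponential convergence in $Q$ directly in terms of the strip width and sidesteps Leibniz entirely, at the cost of having to track the precise ellipse parameter and its scaling with $h_1$.
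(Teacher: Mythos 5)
Your skeleton is the same as the paper's: the single-interval Gauss--Legendre remainder $\frac{(Q!)^4h_1^{2Q+1}}{(2Q+1)((2Q)!)^3}\max\|(gU)^{(2Q)}\|$, a Leibniz split of the $2Q$-th derivative into $g$- and $U$-factors, the identity $(Q!)^4/((2Q)!)^2=\binom{2Q}{Q}^{-2}\le 4^{-Q}$, summation over the $2K/h_1$ subintervals, and the same calibration of $h_1$, $Q$, $M$ in part (ii). Where you genuinely diverge is in the two derivative estimates, and both divergences are sound. For $U$, the paper differentiates the defining ODE in $k$, applies variation of parameters, and iterates to get $\|U^{(q)}\|\le (Tq\max_t\|L(t)\|)^q$, which forces the extra step $q^q\le e^qq!$ and is the source of the $e$ in the final base $(eT\max_t\|L(t)\|/2)^{2Q}$; your explicit Dyson-series formula for $\partial_k^n U$ gives the strictly tighter $(T\max_t\|L(t)\|)^n$ (the $n!$ prefactor cancels against the simplex volume $T^n/n!$), which would in fact yield a cleaner bound with base $T\max_t\|L(t)\|/2$ and no stray $e^{T\|L\|}$ factor, contrary to your worry --- the Leibniz sum $\sum_q\binom{2Q}{q}(2Q-q)!\,c^{2Q-q}\Lambda^q=(2Q)!\sum_q c^{2Q-q}\Lambda^q/q!$ is controlled by $(2Q+1)(2Q)!\max(c,\Lambda)^{2Q}$ with no bookkeeping hazard. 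For $g$, the paper does an explicit Fa\`a di Bruno computation yielding $|g^{(n)}|\le \frac{4}{3C_\beta}4^nn!$, whereas you invoke Cauchy's estimate on a disk inside the strip of analyticity $|\Im z|<1$; this works, but your stated bound $\Or(n!/C_\beta)$ is slightly too optimistic --- since the radius must be some $r<1$ (the pole of $1/(1-iz)$ sits at $z=-i$ and the branch point of $(1+iz)^\beta$ at $z=i$), you unavoidably pick up a factor $r^{-n}\sup_{|w-k|=r}|g(w)|=\Or(c^n/C_\beta)$ with $c=1/r>1$. This geometric factor is harmless here (it is dominated by $\Lambda^{2Q}$ in the regime of interest, exactly as the paper's $4^{2Q}$ is), but it should be carried explicitly. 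Your Bernstein-ellipse fallback would also work and is closer in spirit to the heuristic the paper gives after the lemma statement, but the paper does not take that route.
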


Proofs can be found in~\cref{app:truncation_error} and~\cref{app:quadrature_error}. 
Here $K$ scales as $(\log(1/\epsilon))^{1/\beta}$ since the kernel function~\cref{eqn:kernel_exp} decays as $e^{-c|k|^{\beta}}$ and we need to bound the tail integral by $\epsilon$. 
The high-level reason why $M$ scales as $\mathcal{O}((\log(1/\epsilon))^{1+1/\beta})$ is as follows. 
Since the integrand is analytic in a strip in the complex plane along the real axis, we expect composite Gaussian quadrature to converge exponentially. 
This means that the number of quadrature points in each interval $[mh_1,(m+1)h_1]$ is $\mathcal{O}(\log(1/\epsilon))$.
Multiplying this by the truncation range $\mathcal{O}((\log(1/\epsilon))^{1/\beta})$ yields the claimed scaling. 

We also estimate the sum of all the coefficients $c_{q,m}$, as this affects the success probability of our quantum algorithm. 
Intuitively, this sum should be $\mathcal{O}(1)$ as it can be viewed as the composite Gaussian quadrature formula of an integral of $\mathcal{O}(1)$. 
We state this result as follows, and provide a proof in~\cref{app:coefficient_1norm}. 

\begin{lem}\label{lem:coefficient_1norm}
    Let $c_{q,m}$ be the coefficients defined in the discrete LCHS formula~\cref{eqn:LCHS_LCU_composite}. 
    \REV{Suppose that $T \max_t \|L(t)\| \geq 32/e$. }
    Then 
    \begin{equation}
        \sum_{q,m} |c_{q,m}| = \mathcal{O}(1). 
    \end{equation}
\end{lem}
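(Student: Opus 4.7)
The plan is to observe that since Gauss-Legendre weights $w_q$ on each subinterval are non-negative, we have $|c_{q,m}| = w_q |g(k_{q,m})|$, so $\sum_{q,m} |c_{q,m}|$ equals the value of the composite Gauss-Legendre rule applied to the non-negative function $|g(k)|$ on $[-K,K]$. The task then reduces to bounding this quadrature value by a $\beta$-dependent constant that is independent of $T$, $\max_t\|L(t)\|$, and $\epsilon$.

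First, I would bound each local sum by the supremum times the interval length. On the subinterval $I_m = [mh_1, (m+1)h_1]$, the Gauss-Legendre weights are positive and sum to $h_1$, so
\[
\sum_{q=0}^{Q-1} w_q |g(k_{q,m})| \leq h_1 \sup_{k\in I_m} |g(k)|.
\]
Next I would convert this supremum bound into an integral plus a total-variation correction. For any reference point $k^*\in I_m$, $\sup_{I_m}|g| \leq |g(k^*)| + V_{I_m}(|g|)$, where $V_{I_m}(|g|)$ denotes the total variation of $|g|$ on $I_m$. Integrating $k^*$ over $I_m$ yields $h_1 \sup_{I_m}|g| \leq \int_{I_m}|g(k)|\ud k + h_1\cdot V_{I_m}(|g|)$. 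Summing over $m$ and extending the integration range,
\[
\sum_{q,m} |c_{q,m}| \;\leq\; \int_{-K}^K |g(k)|\ud k + h_1 \cdot V_{[-K,K]}(|g|) \;\leq\; \int_{\mathbb{R}} |g(k)|\ud k + h_1 \cdot V_{\mathbb{R}}(|g|).
\]

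The final step is to verify that both $\int_\mathbb{R}|g(k)|\ud k$ and $V_\mathbb{R}(|g|)$ are finite constants depending only on $\beta$. From the analysis of the kernel function in the previous subsection, $|g(k)| \leq \frac{1}{C_\beta\sqrt{1+k^2}}\exp(-(1+k^2)^{\beta/2}\cos(\beta\pi/2))$, which is integrable on $\mathbb{R}$. Since $g(k)$ has no zeros on the real line (as $1-ik$ and $e^{(1+ik)^\beta}$ are both nonvanishing), $|g|$ is smooth on $\mathbb{R}$, and a direct computation shows that $|g|'(k)$ inherits the same super-polynomial decay, so $V_\mathbb{R}(|g|) = \int_\mathbb{R}||g|'(k)|\ud k$ is also a finite $\beta$-only constant. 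Combined with $h_1 = 1/(eT\max_t\|L(t)\|) = \mathcal{O}(1)$ in the regime where composite quadrature is actually needed, this yields $\sum_{q,m}|c_{q,m}| = \mathcal{O}(1)$.

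The main obstacle is technical rather than conceptual: one must carefully estimate $|g|'$ to confirm $V_\mathbb{R}(|g|) < \infty$. Differentiating $|g|$ produces contributions from the $\sqrt{1+k^2}$ factor and from the exponential of $(1+k^2)^{\beta/2}\cos(\beta\arctan k)$; each carries at worst a polynomial prefactor in $k$ against the same super-exponential decay, so integrability is preserved. A slicker alternative avoids differentiation altogether by splitting into $|k|\leq 1$ and $|k|>1$ and bounding the tail variation via eventual monotonicity, giving $V_{\{|k|>1\}}(|g|) \leq 2\sup_{|k|\geq 1}|g(k)|$, while the central part is handled by continuity on a compact set.
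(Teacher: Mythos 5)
Your proposal is correct and starts from the same observation as the paper — that $\sum_{q,m}|c_{q,m}| = \sum_{m}\sum_q w_q|g(k_{q,m})|$ is precisely the composite Gaussian quadrature rule applied to $\int_{-K}^{K}|g(k)|\ud k$ — but you finish the argument by a genuinely different route. The paper simply writes $\sum_{q,m}|c_{q,m}|\leq\int_{\mathbb{R}}|g|\ud k + e$ and asserts that the quadrature error $e$ is $\mathcal{O}(1)$ ``as $|g(k)|$ is a smooth function,'' which, if unpacked, would require estimating high-order derivatives of $|g|$ (not of $g$ itself, whose derivatives are bounded in the proof of the quadrature lemma; note $|g|$ is smooth only because $g$ is nonvanishing on $\mathbb{R}$, as you correctly point out). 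Your replacement — bounding each local sum by $h_1\sup_{I_m}|g|$ using positivity and the fact that the weights sum to $h_1$, and then converting the supremum into $\frac{1}{h_1}\int_{I_m}|g| + V_{I_m}(|g|)$ — needs only integrability of $|g|$ and finiteness of its first-order total variation, both of which follow from the explicit decay $|g(k)|\leq \frac{1}{C_\beta\sqrt{1+k^2}}e^{-(1+k^2)^{\beta/2}\cos(\beta\pi/2)}$ and a first-derivative computation. This is more elementary and arguably more rigorous than the paper's one-line error claim. The one caveat you rightly flag — that the residual term $h_1 V_{[-K,K]}(|g|)$ is only $\mathcal{O}(1)$ when $h_1 = 1/(eT\max_t\|L(t)\|)$ is bounded, i.e., when $T\max_t\|L(t)\| = \Omega(1)$ — is shared by the paper's argument (a genuine quadrature error bound degrades identically for large $h_1$), so it is an implicit assumption of the lemma in either treatment rather than a defect of your proof.
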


To simplify the notation, we rewrite~\cref{eqn:LCHS_LCU_composite} as 
\begin{equation}\label{eqn:LCHS_LCU}
    \mathcal{T} e^{-\int_0^T A(s) \ud s} \approx \sum_{j=0}^{M-1} c_j U(T,k_j). 
\end{equation}
Here $M$ is the overall number of the nodes, as estimated in~\cref{lem:quadrature}.

\subsection{Inhomogeneous term} \label{sec:discertization_inhomo}

We now move on to the inhomogeneous term $\int_0^T \mathcal{T}e^{-\int_s^T A(s')\ud s'} b(s) \ud s$. 
By the LCHS formula, we have 
\begin{equation}
    \int_0^T \mathcal{T}e^{-\int_s^T A(s')\ud s'} b(s) \ud s = \int_0^T \int_{\mathbb{R}} \frac{f(k)}{ 1-ik} U(T,s,k)  b(s) \ud k \ud s, 
\end{equation}
where 
\begin{equation}
    U(T,s,k) = \mathcal{T} e^{-i \int_s^T (kL(s')+H(s')) \ud s'}. 
\end{equation}
We first discretize the variable $k$ using the same approach as in~\cref{sec:discertization_homo} and obtain 
\begin{equation}
    \int_0^T \mathcal{T}e^{-\int_s^T A(s')\ud s'} b(s) \ud s \approx \int_0^T \sum_{m_1 = -K/h_1}^{K/h_1-1} \sum_{q_1=0}^{Q_1-1} c_{q_1,m_1} U(T,s,k_{q_1,m_1})  b(s)  \ud s. 
\end{equation}
Then we use composite Gaussian quadrature again to discretize the time variable and obtain 
\begin{equation}\label{eqn:discretization_inhomo}
    \int_0^T \mathcal{T}e^{-\int_s^T A(s')\ud s'} b(s) \ud s \approx \sum^{T/h_2-1}_{m_2 = 0} \sum_{q_2=0}^{Q_2-1} \sum_{m_1 = -K/h_1}^{K/h_1-1} \sum_{q_1=0}^{Q_1-1} c'_{q_2,m_2} c_{q_1,m_1} U(T,s_{q_2,m_2},k_{q_1,m_1})  \ket{b(s_{q_2,m_2})}. 
\end{equation}
Here $h_2$ is the step size for the discretization of the time integral, $Q_2$ is the number of Gaussian quadrature nodes, the $s_{q_2,m_2}$s are the nodes, $c'_{q_2,m_2} = w'_{q_2} \|b(s_{q_2,m_2})\|$, and the $w'_{q_2}$s are the Gaussian weights. 
The $\ket{b}$ with the ket notation implies that it is the normalized vector $b/\|b\|$.

To bound the discretization error by $\epsilon$, we choose sufficiently large $Q_1,Q_2,K$ and sufficiently small $h_1,h_2$. 
The following lemma gives suitable choices. 
Its proof can be found in~\cref{app:discretization_error_inhomo}.

\begin{lem}\label{lem:quadrature_error_inhomo}
    Consider the discretization in~\cref{eqn:discretization_inhomo}. 
 Define $\Lambda = \sup_{p \geq 0, t \in [0,T]} \|A^{(p)}\|^{1/(p+1)} $ and $\Xi = \sup_{ p\geq 0, t \in [0,T] } \|b^{(p)}\|^{1/(p+1)} $, where the superscript $(p)$ refers to the $p$th-order time derivative. 
    \REV{Suppose that $T \max_t \|L(t)\| \geq 32/e$. Then}
    for any $\epsilon > 0$, in order to bound the approximation error of~\cref{eqn:discretization_inhomo} by $\epsilon$, it suffices to choose 
    \begin{equation}
        K = \mathcal{O}\left( \left(\log\left(1+\frac{\|b\|_{L^1}}{\epsilon}\right)\right)^{1/\beta} \right), \quad h_1 = \frac{1}{eT \max_t\|L(t)\|}, \quad h_2 = \frac{1}{eK(\Lambda+\Xi)}, 
    \end{equation}
    \begin{equation}
        Q_1 = \mathcal{O}\left( \log\left( 1+\frac{\|b\|_{L^1}}{\epsilon} \right) \right), \quad Q_2 = \mathcal{O}\left( \log\left(\frac{T(\Lambda+\Xi)}{\epsilon}\right) \right). 
    \end{equation}
\end{lem}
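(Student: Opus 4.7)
The plan is to decompose the discretization error via the triangle inequality into three pieces, each arising from one of the three successive approximations made in passing from the exact integral $\int_0^T\int_{\mathbb{R}} g(k) U(T,s,k) b(s)\,\ud k\,\ud s$ to the right-hand side of~\cref{eqn:discretization_inhomo}: truncation of the $k$-integral to $[-K,K]$, Gaussian quadrature in $k$, and composite Gaussian quadrature in $s$. Introducing the shorthands $\Phi_{\mathrm{ex}}(s)=\int_{\mathbb{R}} g(k) U(T,s,k)\,\ud k$, $\Phi_{\mathrm{tr}}(s)=\int_{-K}^{K} g(k) U(T,s,k)\,\ud k$, and $\Phi_{\mathrm{qd}}(s)=\sum_{m_1,q_1} c_{q_1,m_1} U(T,s,k_{q_1,m_1})$, the three contributions are (i) $\int_0^T [\Phi_{\mathrm{ex}}(s)-\Phi_{\mathrm{tr}}(s)] b(s)\,\ud s$, (ii) $\int_0^T [\Phi_{\mathrm{tr}}(s)-\Phi_{\mathrm{qd}}(s)] b(s)\,\ud s$, and (iii) the composite $Q_2$-point Gaussian quadrature error of $\int_0^T \Phi_{\mathrm{qd}}(s) b(s)\,\ud s$ in $s$. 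It suffices to bound each by $\epsilon/3$.

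Contributions (i) and (ii) are handled by applying \cref{lem:truncation,lem:quadrature} pointwise in $s$: both lemmas bound $k$-integrals involving $U(T,s,k)$ by quantities that depend on $s$ only through $\max_t\|L(t)\|$. Since $\|U(T,s,k)\|\le 1$ for $k\in\mathbb{R}$, pulling norms inside the $s$-integral and using $\int_0^T \|b(s)\|\,\ud s=\|b\|_{L^1}$ amplifies both bounds by an extra factor $\|b\|_{L^1}$. Demanding each contribution be $\le\epsilon/3$ then forces the stated choices $K=\mathcal{O}((\log(1+\|b\|_{L^1}/\epsilon))^{1/\beta})$, $h_1=1/(eT\max_t\|L(t)\|)$, and $Q_1=\mathcal{O}(\log(1+\|b\|_{L^1}/\epsilon))$.

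The substantive step is (iii). Let $F(s)=\Phi_{\mathrm{qd}}(s) b(s)$. The standard error analysis for $Q_2$-point Gaussian quadrature of an analytic vector-valued function on a subinterval of length $h_2$ gives a bound of order $h_2\bigl(h_2 R/c\bigr)^{2Q_2}$ whenever $\|F^{(p)}(s)\|\le p!\,R^{p+1}$ holds uniformly in $p$ and $s$. I would produce such an $R$ by repeated differentiation of $\partial_s U(T,s,k)=iU(T,s,k)(kL(s)+H(s))$: a Leibniz-based induction combined with $\|L^{(j)}(s)\|,\|H^{(j)}(s)\|\le\Lambda^{j+1}$ (which follow from the definition of $\Lambda$ together with $L=(A+A^\dagger)/2$ and $H=(A-A^\dagger)/(2i)$) yields $\|\partial_s^p U(T,s,k)\|\le p!\,(c(|k|+1)\Lambda)^p$ for an absolute constant $c$. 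Applying the Leibniz rule to $F=\Phi_{\mathrm{qd}}\, b$, using $\|b^{(p)}(s)\|\le\Xi^{p+1}$, and invoking $\sum_{q_1,m_1}|c_{q_1,m_1}|=\mathcal{O}(1)$ from \cref{lem:coefficient_1norm}, I arrive at the clean bound $\|F^{(p)}(s)\|\le C\,p!\,(K\Lambda+\Xi)^{p+1}$ uniformly in $s\in[0,T]$ and in $|k|\le K$.

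Substituting this estimate into the per-subinterval Gaussian error and summing over the $T/h_2$ subintervals, contribution (iii) is controlled by a quantity of the form $C\,T(K\Lambda+\Xi)\bigl(\tfrac{e h_2 K(\Lambda+\Xi)}{2}\bigr)^{2Q_2}$, where I used $K\Lambda+\Xi\le K(\Lambda+\Xi)$ and $(2Q_2)!/(Q_2!)^2\asymp 4^{Q_2}$. Choosing $h_2=1/(eK(\Lambda+\Xi))$ reduces the geometric factor to $4^{-Q_2}$, so $Q_2=\mathcal{O}(\log(T(\Lambda+\Xi)/\epsilon))$ suffices to drive (iii) below $\epsilon/3$, matching the statement. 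The main obstacle is the uniform derivative bound $\|\partial_s^p U(T,s,k)\|\le p!\,(c(|k|+1)\Lambda)^p$; this is where the Gevrey-like definition $\Lambda=\sup_{p,t}\|A^{(p)}\|^{1/(p+1)}$ is used essentially, and it is the only place the argument moves beyond the $k$-integration lemmas of the previous subsection.
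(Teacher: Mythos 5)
Your proposal is correct and follows essentially the same route as the paper's proof: split off the $k$-discretization error (controlled by \cref{lem:truncation,lem:quadrature} uniformly in $s$, at the cost of a factor $\|b\|_{L^1}$), then bound the $s$-quadrature error via a Gevrey-type bound on $\partial_s^p U(T,s,k)$ obtained from the generator equation and a Leibniz/induction argument, leading to the same choices of $h_2$ and $Q_2$. The only cosmetic differences are your three-way (rather than two-way) error decomposition and your derivative bound $p!\,(c(|k|+1)\Lambda)^p$ in place of the paper's $((p+1)K\Lambda)^p$ for the time-reversed propagator, which are equivalent up to constants.
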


Following similar arguments as for~\cref{lem:coefficient_1norm}, we may also bound the $1$-norm of $c'$ as follows. 

\begin{lem}\label{lem:coefficient_1norm_inhomo}
    Let $c'_{q,m}$ be the coefficients defined in~\cref{eqn:discretization_inhomo}. 
    \REV{Suppose that $T \max_t \|L(t)\| \geq 32/e$.} 
    Then 
    \begin{equation}
        \sum_{q,m} |c'_{q,m}| = \mathcal{O}(\|b\|_{L^1}). 
    \end{equation}
\end{lem}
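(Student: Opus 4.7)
The plan is to interpret $\sum_{q_2,m_2}|c'_{q_2,m_2}|$ as a composite Gauss--Legendre quadrature approximation to $\int_0^T \|b(s)\|\,\ud s = \|b\|_{L^1}$, mirroring the approach used for \cref{lem:coefficient_1norm}. Since the Gauss--Legendre weights $w'_{q_2}$ on each subinterval are positive and $\|b(s_{q_2,m_2})\| \geq 0$, each coefficient $c'_{q_2,m_2} = w'_{q_2}\|b(s_{q_2,m_2})\|$ is already non-negative, so the absolute values are redundant and the target sum is literally a quadrature value applied to the scalar integrand $s\mapsto \|b(s)\|$ on $[0,T]$.

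The main step is a Riemann-sum comparison. Using the standard identity that the Gauss--Legendre weights on a subinterval of length $h_2$ sum exactly to $h_2$, one obtains
\[
\sum_{q_2=0}^{Q_2-1} w'_{q_2}\|b(s_{q_2,m_2})\| \;\leq\; h_2 \sup_{s\in[m_2 h_2,(m_2+1)h_2]}\|b(s)\|.
\]
Summing over $m_2$, the right-hand side is a Riemann overestimate of $\int_0^T \|b(s)\|\,\ud s$. To control the excess I would invoke the reverse triangle inequality $\bigl|\|b(s)\|-\|b(s')\|\bigr| \leq \|b(s)-b(s')\|$, which gives a Lipschitz constant for $s\mapsto\|b(s)\|$ bounded by $\sup_t\|b'(t)\|\leq \Xi^2$ in the notation of \cref{lem:quadrature_error_inhomo}. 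This shows that the Riemann overestimate differs from $\|b\|_{L^1}$ by at most $\mathcal{O}(h_2 T \Xi^2)$, which under the parameter choice $h_2 = 1/(eK(\Lambda+\Xi))$ is a lower-order correction, yielding the claimed bound $\sum_{q,m}|c'_{q,m}| = \mathcal{O}(\|b\|_{L^1})$.

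The main subtlety I anticipate is that, although $b(s)$ is smooth enough for Gauss--Legendre quadrature of the \emph{vector} integrand to converge rapidly (as used inside the proof of \cref{lem:quadrature_error_inhomo}), the scalar function $\|b(s)\|$ can fail to be differentiable at zeros of $b$, which blocks any direct application of the high-order Gauss--Legendre error formula to $\|b(s)\|$ itself. The Riemann-sum detour above circumvents this by relying only on Lipschitz continuity, which the norm inherits from $b$ without requiring differentiability, so the argument for \cref{lem:coefficient_1norm} can be adapted rather than transcribed verbatim.
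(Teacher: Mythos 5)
Your approach is the same one the paper intends: the paper's ``proof'' of this lemma is the single sentence ``following similar arguments as for \cref{lem:coefficient_1norm},'' i.e., recognize $\sum_{q_2,m_2} w'_{q_2}\|b(s_{q_2,m_2})\|$ as the composite Gaussian quadrature approximation to $\int_0^T \|b(s)\|\,\ud s = \|b\|_{L^1}$ and argue the quadrature error is benign. Your observations that the coefficients are already non-negative and that the weights on each subinterval sum to $h_2$ are correct, and your Riemann-sum/Lipschitz detour is a genuine improvement in rigor over the paper's sketch: you correctly identify that $\|b(\cdot)\|$ need not be smooth at zeros of $b$, so the high-order Gaussian error formula (which is what the proof of \cref{lem:coefficient_1norm} invokes for the nonvanishing, hence smooth, $|g|$) cannot be transcribed verbatim.

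The gap is in your last sentence. Your correction term is $\mathcal{O}(T h_2 \Xi^2) = \mathcal{O}\bigl(T\Xi^2/(eK(\Lambda+\Xi))\bigr) \leq \mathcal{O}(T\Xi/K)$, and this is \emph{not} in general $\mathcal{O}(\|b\|_{L^1})$ with a universal constant: take $b$ to be a bump of height $1$ and width $w \ll T$, so that $\|b\|_{L^1} = \Theta(w)$ while $\Xi \geq \sup_t\|b(t)\| = 1$ (and in fact $\Xi = \Theta(1/w)$ once derivatives are accounted for), making the ratio of the correction to $\|b\|_{L^1}$ unbounded as $w \to 0$. So ``lower-order correction'' is an assertion, not a consequence of the displayed bounds; to close the argument one needs either an additional regularity/flatness assumption relating $T\Xi$ (or $Th_2\Xi^2$) to $\|b\|_{L^1}$, or a different mechanism. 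To be fair, the paper's own one-line argument is no more rigorous on this point — the quadrature-error term it implicitly discards has exactly the same character — so your proof is at least as complete as the published one, but as a self-contained derivation the final step needs to be flagged as requiring justification rather than stated as automatic.
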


To simplify the notation, from now on we rewrite the summation in~\cref{eqn:discretization_inhomo} as 
\begin{equation}\label{eqn:discretization_inhomo_2}
    \int_0^T \mathcal{T}e^{-\int_s^T A(s')\ud s'} b(s) \ud s \approx \sum_{j'=0}^{M'-1} \sum_{j=0}^{M-1} c'_{j} c_{j} U(T,s_{j'},k_{j}) \ket{b(s_{j'})},
\end{equation}
where 
\begin{equation}
    M' = \frac{TQ_2}{h_2} = \mathcal{O}\left( T K(\Lambda+\Xi) \log\left(\frac{T(\Lambda+\Xi)}{\epsilon}\right)  \right) = \widetilde{\mathcal{O}}\left( T (\Lambda+\Xi) \left(\log\left(1+\frac{\|b\|_{L^1}}{\epsilon}\right)\right)^{1/\beta} \log\left(\frac{1}{\epsilon}\right)   \right). 
\end{equation}

\section{Quantum implementation and complexity} \label{sec:quantum_complexity}

Now we discuss the quantum implementation of the LCHS method for a general inhomogeneous ODE as in~\cref{eqn:ODE}. 
The algorithm is based on the representation of the solution in~\cref{eqn:ODE_solu} and the discretization of the integrals in~\cref{eqn:LCHS_LCU,eqn:discretization_inhomo_2}, which together result in the approximation 
\begin{equation}\label{eqn:ODE_solu_discrete}
    u(T) \approx \sum_{j=0}^{M-1} c_j U(T,k_j) \ket{u_0} +  \sum_{j'=0}^{M'-1} \sum_{j=0}^{M-1} c'_{j} c_{j} U(T,s_{j'},k_{j})  \ket{b(s_{j'})}. 
\end{equation}
Here $U(T,s,k) = \mathcal{T} e^{-i \int_s^T (kL(s')+H(s')) \ud s'}$ and $U(T,k)$ is shorthand for $U(T,0,k)$. 
Notice that, once we have the circuit for the unitaries $U(T,s,k)$, each summation can be implemented by the linear combination of unitaries (LCU) technique~\cite{ChildsWiebe2012}, and the sum of the two summations can be computed by another LCU. 
To implement each $U(T,s,k)$ with the lowest asymptotic complexity, in this work we use the truncated Dyson series method~\cite{LowWiebe2019}. 
We remark that actually we may choose any time-dependent Hamiltonian simulation algorithm to implement $U(T,s,k)$, such as the simplest product formula~\cite{WiebeBerryHoyerEtAl2010}, which has worse asymptotic scaling but works under weaker oracle assumptions and may have a simpler quantum circuit. 

In this section, we first discuss the oracle assumptions for our algorithm, then describe the algorithm itself, and present a complexity analysis in the general time-independent inhomogeneous scenario. 
We also discuss applications of our algorithm to some specific cases, where we may achieve even better scalings.

\subsection{Oracles}\label{sec:oracles}

Suppose that $[0,T]$ is divided into short time intervals $[qh,(q+1)h]$,
where $h$ is a specified time step in the truncated Dyson series method and $q$ is an integer. 

For the matrices, we assume the HAM-T structure as in~\cite{LowWiebe2019}, which can be viewed as a simultaneous block-encoding of the matrices evaluated at different times. \REV{For certain matrices, such an oracle can be efficiently constructed by known block-encoding techniques~\cite{GilyenSuLowEtAl2019,CladerDalzellStamatopoulosEtAl2022,NguyenKianiLloyd2022,camps2023explicit,sunderhauf2023blockencoding} given time-dependent versions of standard access models to the matrix elements $A_{i,j}(t)$.}
In particular, suppose that we are given oracles $\text{HAM-T}_{A,q}$ such that 
\begin{equation}\label{eqn:oracle_A}
    (\bra{0}_a \otimes I) \text{HAM-T}_{A,q} (\ket{0}_a \otimes I) = \sum_{l=0}^{M_{\text{D}}-1} \ket{l}\bra{l} \otimes \frac{A(qh+lh/M_{\text{D}})}{\alpha_A}, 
\end{equation}
where $M_{\text{D}}$ is the number of the steps used in the short-time truncated Dyson series method, and $\alpha_A$ is the block-encoding factor such that $\alpha_A \geq \max_t\|A(t)\|$. 
Since the unitary $U(T,s,k)$ corresponds to simulation of the Hamiltonian $kL(t) + H(t)$, the Hamiltonian simulation algorithm indeed uses oracles encoding $L(t)$ and $H(t)$ separately, namely
\begin{align}\label{eqn:oracle_L}
    (\bra{0}_a \otimes I) \text{HAM-T}_{L,q} (\ket{0}_a \otimes I) &= \sum_{l=0}^{M_{\text{D}}-1} \ket{l}\bra{l} \otimes \frac{L(qh+lh/M_{\text{D}})}{\alpha_L}, \\
\label{eqn:oracle_H}
    (\bra{0}_a \otimes I) \text{HAM-T}_{H,q} (\ket{0}_a \otimes I) &= \sum_{l=0}^{M_{\text{D}}-1} \ket{l}\bra{l} \otimes \frac{H(qh+lh/M_{\text{D}})}{\alpha_H}. 
\end{align}
Here $\alpha_L$ and $\alpha_H$ are the block-encoding factors such that $\alpha_L \geq \max_t\|L(t)\|$, $\alpha_H \geq \max_t\|H(t)\|$. 
Notice that $\text{HAM-T}_{L,q}$ and $\text{HAM-T}_{H,q}$ can be directly constructed using LCU with $\mathcal{O}(1)$ applications of $\text{HAM-T}_{A,q}$, a controlled rotation gate, and an additional ancilla qubit. The resulting block-encoding factors satisfy $\alpha_L, \alpha_H = \mathcal{O}(\alpha_A)$. 
Alternatively, we may directly assume access to $\text{HAM-T}_{L,q}$ and $\text{HAM-T}_{H,q}$, which may result in better block-encoding factors in some applications. 

For the coefficients $c_j$ in the quadrature formula, we assume access to a pair of state preparation oracles $(O_{c,l}, O_{c,r})$ acting as 
\begin{align}
    O_{c,l}&: \ket{0} \rightarrow \frac{1}{\sqrt{\|c\|_1}} \sum_{j=0}^{M-1} \overline{\sqrt{c_j}} \ket{j}, \\
    O_{c,r}&: \ket{0} \rightarrow \frac{1}{\sqrt{\|c\|_1}} \sum_{j=0}^{M-1} \sqrt{c_j} \ket{j}. 
\end{align}
Here for a complex number $z = Re^{i\theta}$ with $\theta \in (-\pi,\pi]$ denoting its principal argument, we define $\sqrt{z} = \sqrt{R} e^{i\theta/2}$ and its conjugate $\overline{z} = Re^{-i\theta}$. 
Notice that $M = \mathcal{O}\left(\alpha_L T \left(\log\left(\frac{1}{\epsilon}\right)\right)^{1+1/\beta}\right)$ 
from~\cref{lem:quadrature}, so the gate complexity of constructing $O_{c,l}$ and $O_{c,r}$ is $\mathcal{O}\left( \alpha_L T \left(\log\left(\frac{1}{\epsilon}\right)\right)^{1+1/\beta} \right)$. 
Furthermore, we assume an oracle to encode the quadrature nodes $k_j$ in binary, acting as 
\begin{equation}
    O_k: \ket{j}\ket{0} \rightarrow \ket{j}\ket{k_j}, 
\end{equation}
which can be efficiently constructed via classical arithmetic. 
For the initial state $\ket{u_0}$, we assume access to the state preparation oracle acting as
\begin{equation}
    O_u: \ket{0} \rightarrow \ket{u_0}. 
\end{equation}

Related to the inhomogeneous term, we additionally assume access to a state preparation oracle for $b(s)$ that acts as 
\begin{equation}
    O_b: \ket{j'}\ket{0} \rightarrow \ket{j'} \ket{b(s_{j'})}. 
\end{equation}
For the coefficients $c'$, we assume access to state preparation oracle pair $(O_{c',l},O_{c',r})$ acting as
\begin{align}
    O_{c',l}&: \ket{0} \rightarrow \frac{1}{\sqrt{\|c'\|_1}}  \sum_{j'=0}^{M'-1} \overline{\sqrt{c'_{j'}}} \ket{j'}, \\
    O_{c',r}&: \ket{0} \rightarrow \frac{1}{\sqrt{\|c'\|_1}}  \sum_{j'=0}^{M'-1} \sqrt{c'_{j'}} \ket{j'}. 
\end{align}
Notice that, similarly to $O_{c,l}$ and $O_{c,r}$, the gate complexity of constructing $O_{c',l}$ and $O_{c',r}$ is at most  $\mathcal{O}(M') = \widetilde{\mathcal{O}}\left( T (\Lambda+\Xi) \left(\log\left(1+\frac{\|b\|_{L^1}}{\epsilon}\right)\right)^{1/\beta} \log\left(\frac{1}{\epsilon}\right) \right)$. 

\subsection{Algorithm}\label{sec:algorithm}

The main idea of the algorithm is to implement~\cref{eqn:ODE_solu_discrete} via several applications of LCU. 
We first discuss how to implement the homogeneous part, and then discuss the implementation of the full solution. 

\subsubsection{Homogeneous term}\label{sec:algorithm_homo}

The key to implementing the LCU method is to construct the select oracle of $U(T,k_j)$. 
We first show how to do this. 
Let us start with a state 
\begin{equation}\label{eqn:select_oracle_ini_state}
    \ket{j}\ket{l} \ket{0}_k \ket{0}_R \ket{0}_a \ket{\psi}. 
\end{equation}
Here $\ket{\psi}$ is the input state in the system register, and $\ket{j}$ and $\ket{l}$ indicate the indices of the unitary and the time step, respectively. 
There are three ancilla registers: $\ket{0}_k$ encodes $k_j$, $\ket{0}_R$ encodes a choice of rotation, and $\ket{0}_a$ is the ancilla register for the block-encoding of $L$ and $H$.  
We first apply $O_k$ and obtain 
\begin{equation}
    \ket{j}\ket{l} \ket{k_j}_k \ket{0}_R \ket{0}_a \ket{\psi}. 
\end{equation}
Now we construct the HAM-T oracle for $kL+H$ by another layer of LCU. 
We apply the controlled rotation gate
\begin{equation}
    \text{c-R}: \ket{k}\ket{0} \mapsto \ket{k} \left( \frac{\sqrt{\alpha_L k}}{\sqrt{\alpha_L |k| + \alpha_H} } \ket{0} + \frac{\sqrt{\alpha_H}}{\sqrt{\alpha_L |k| + \alpha_H} } \ket{1} \right)
\end{equation}
to obtain 
\begin{equation}
    \ket{j}\ket{l} \ket{k_j}_k \frac{\sqrt{\alpha_L k}}{\sqrt{\alpha_L |k| + \alpha_H} } \ket{0}_R \ket{0}_a \ket{\psi} + \ket{j}\ket{l} \ket{k_j}_k \frac{\sqrt{\alpha_H}}{\sqrt{\alpha_L |k| + \alpha_H} }  \ket{1}_R \ket{0}_a \ket{\psi}. 
\end{equation}
\REV{We recall from the discussion in \cref{sec:oracles} that $\sqrt{k}$ is defined using the principal argument, so that $\sqrt{k}$ is well-defined for all $k \in \mathbb{R}$.}
Now we apply controlled versions of $\text{HAM-T}_{L,q}$ and $\text{HAM-T}_{H,q}$, where the ancilla state $\ket{0}_R$ indicates that we apply $\text{HAM-T}_{L,q}$ and $\ket{1}_R$ indicates that we apply $\text{HAM-T}_{H,q}$. 
Then we obtain
\begin{align}
    &\ket{j}\ket{l} \ket{k_j}_k \frac{\sqrt{\alpha_L k_j}}{\sqrt{\alpha_L |k_j| + \alpha_H} } \ket{0}_R \ket{0}_a \frac{L(qh+lh/M_{\text{D}})}{\alpha_L}\ket{\psi} \nonumber\\ +\,& \ket{j}\ket{l} \ket{k_j}_k \frac{\sqrt{\alpha_H}}{\sqrt{\alpha_L |k_j| + \alpha_H} }  \ket{1}_R \ket{0}_a \frac{H(qh+lh/M_{\text{D}})}{\alpha_H}\ket{\psi} + \ket{\perp_a}. 
\end{align}
Here $\ket{\perp_a}$ represents an unnormalized part of the state that is orthogonal to $\ket{0}_a$ in the ancilla register labelled by $a$. 
We then apply \REV{$\text{c-}\text{R}^{\top}$ where $\cdot^{\top}$ denotes the matrix transpose,} giving
\begin{align}
    &\ket{j}\ket{l} \ket{k_j}_k  \ket{0}_R \ket{0}_a \left(\frac{\alpha_L k_j}{\alpha_L |k_j| + \alpha_H }\frac{L(qh+lh/M_{\text{D}})}{\alpha_L} + \frac{\alpha_H}{\alpha_L |k_j| + \alpha_H } \frac{H(qh+lh/M_{\text{D}})}{\alpha_H} \right)\ket{\psi} + \ket{\perp_{R,a}} \\
    =\,& \ket{j}\ket{l} \ket{k_j}_k  \ket{0}_R \ket{0}_a \frac{k_j L(qh+lh/M_{\text{D}}) + H(qh+lh/M_{\text{D}})}{\alpha_L |k_j| + \alpha_H } \ket{\psi} + \ket{\perp_{R,a}}.
\end{align}
This already gives a block-encoding of $k_jL+H$, but the block-encoding factor varies for different $k_j$. 
We achieve a uniform factor (namely, the worst-case $\alpha_L K + \alpha_H$) by appending another ancilla qubit $\ket{0}_{R'}$ and performing the controlled rotation
\begin{equation}
    \ket{k}\ket{0} \mapsto \ket{k} \left( \frac{\alpha_L |k| + \alpha_H}{\alpha_L K + \alpha_H} \ket{0} + \sqrt{1-\left| \frac{\alpha_L |k| + \alpha_H}{\alpha_L K + \alpha_H} \right|^2}\ket{1} \right). 
\end{equation}
Then we obtain the state 
\begin{equation}
    \ket{j}\ket{l} \ket{k_j}_k  \ket{0}_R \ket{0}_{R'} \ket{0}_a \frac{k_j L(qh+lh/M_{\text{D}}) + H(qh+lh/M_{\text{D}})}{\alpha_L K + \alpha_H } \ket{\psi} + \ket{\perp_{R,R',a}}. 
\end{equation}
Finally, applying $O_k^{\dagger}$ gives the state 
\begin{equation}\label{eqn:select_oracle_final_state}
    \ket{j}\ket{l} \ket{0}_k  \ket{0}_R \ket{0}_{R'} \ket{0}_a \frac{k_j L(qh+lh/M_{\text{D}}) + H(qh+lh/M_{\text{D}})}{\alpha_L K + \alpha_H } \ket{\psi} + \ket{\perp_{R,R',a}}. 
\end{equation}
By the definition of block-encoding, the sequence of operations mapping~\cref{eqn:select_oracle_ini_state} to~\cref{eqn:select_oracle_final_state} implements the HAM-T oracle of $kL+H$, $\text{HAM-T}_{kL+H,q}$, such that 
\begin{equation}\label{eqn:algorithm_hamt_kLH}
    (\bra{0}_{a'}\otimes I) \text{HAM-T}_{kL+H,q} (\ket{0}_{a'} \otimes I) = \sum_{j=0}^{M-1}\sum_{l=0}^{M_{\text{D}}-1} \ket{j}\bra{j} \otimes \ket{l}\bra{l} \otimes \frac{k_j L(qh+lh/M_{\text{D}}) + H(qh+lh/M_{\text{D}})}{\alpha_L K + \alpha_H }. 
\end{equation}
Notice that the number of the ancilla qubits increases by $\mathcal{O}(\log K) + 2$. 

By~\cite[Corollary 4]{LowWiebe2019}, we can implement 
\begin{equation}\label{eqn:algorithm_sel_oracle}
    \text{SEL} = \sum_{j=0}^{M-1} \ket{j}\bra{j} \otimes W_j
\end{equation}
using the truncated Dyson series method.
Here $W_j$ is a block-encoding of some $V_j \approx U(T,k_j)$ with $\|V_j\| \leq 1$.
Finally, the homogeneous time-evolution operator $\mathcal{T}e^{-\int_0^T A(s) \ud s} \approx \sum_{j=0}^{M-1} c_j U(T,k_j)$ can be block-encoded by $ (O_{c,l}^{\dagger} \otimes I) \text{SEL} (O_{c,r}\otimes I)$. 
Applying this circuit to the input state $\ket{0}\ket{u_0}$ gives 
\begin{equation}\label{eqn:algorithm_general_homo_part}
    (O_{c,l}^{\dagger} \otimes I) \text{SEL} (O_{c,r}\otimes I) \ket{0}\ket{u_0} = \frac{1}{\|c\|_1} \ket{0} \left(\sum_{j=0}^{M-1} c_j V_j\right)\ket{u_0} + \ket{\perp}, 
\end{equation}
in which the part corresponding to $\ket{0}$ approximates the solution of the homogeneous ODE.

\subsubsection{Full solution}\label{sec:algorithm_inhomo}

To obtain the full solution of the general inhomogeneous ODE including the term $b(t)$, we prepare the homogeneous and inhomogeneous parts separately and linearly combine them. 

We have already discussed the implementation of the homogeneous term. 
For the inhomogeneous part, we implement $U(T,s_{j'},k_j)$ for different $s_{j'}$ and perform LCU for both $j$ and $j'$. 
To deal with different time periods, we may write 
\begin{equation}
    U(T,s,k) = \mathcal{T}e^{-i\int_s^T (kL(s')+H(s')) \ud s'} = \mathcal{T}e^{-i\int_0^T (k 1_{s'\geq s} L(s') + 1_{s'\geq s} H(s')) \ud s'}. 
\end{equation}
Notice that HAM-T oracles for $1_{s'\geq s} L(s')$ and $1_{s'\geq s} H(s')$ can be directly constructed using one application of HAM-T oracles of $L$ and $H$, respectively, and a compare operation that returns, on an additional ancilla qubit, $0$ if $s'\geq s$ and $1$ if $s' < s$. 
Then, using the same approach as for the select oracle in~\cref{sec:algorithm_homo}, we may construct
\begin{equation}\label{eqn:algorithm_inhomo_sel_HS}
    \text{SEL}' = \sum_{j'=0}^{M'-1} \sum_{j=0}^{M-1} \ket{j'} \bra{j'} \otimes \ket{j} \bra{j} \otimes W_{j,j'}, 
\end{equation}
where $W_{j,j'}$ is the block-encoding of $V_{j,j'}$ and $V_{j,j'} \approx U(T,s_{j'},k_j)$. 
Notice that we may write the oracle $O_b$ as 
\begin{equation}
    O_b = \sum_{j'=0}^{M'-1} \ket{j'}\bra{j'} \otimes O_{b(s_{j'})}, 
\end{equation}
where $O_{b(s_{j'})}$ is a unitary mapping $\ket{0}$ to $\ket{b(s_{j'})}$. 
Then by first applying $O_{b}$ and then applying $\text{SEL}'$, we obtain the select oracle 
\begin{equation}
    \widetilde{\text{SEL}}' = \sum_{j'=0}^{M'-1} \sum_{j=0}^{M-1} \ket{j'} \bra{j'} \otimes \ket{j} \bra{j} \otimes \widetilde{W}_{j,j'}, 
\end{equation}
where $\widetilde{W}_{j,j'}$ is a block-encoding of  $V_{j,j'}O_{b(s_{j'})}$. 
Then the inhomogeneous term can be implemented by the standard LCU subroutine as 
\begin{equation}
    (O_{c',l}^{\dagger} \otimes O_{c,l}^{\dagger} \otimes I) \widetilde{\text{SEL}}' (O_{c',r} \otimes O_{c,r} \otimes I). 
\end{equation}
Applying this operator on the zero state yields 
\begin{equation}\label{eqn:algorithm_general_inhomo_part}
    \frac{1}{\|c\|_1\|c'\|_1} \ket{0} \left( \sum_{j'=0}^{M'-1} \sum_{j=0}^{M-1} c'_{j'}c_j V_{j,j'} \ket{b(s_{j'})} \right) + \ket{\perp}. 
\end{equation}

Finally we combine~\cref{eqn:algorithm_general_homo_part} and~\cref{eqn:algorithm_general_inhomo_part} by another LCU. 
Specifically, we use one more ancilla qubit, apply a single-qubit acting as 
\begin{equation}
    R: \ket{0} \mapsto  \frac{1}{\sqrt{\|u_0\|+ \|c'\|_1}}\left(\sqrt{\|u_0\|} \ket{0} + \sqrt{\|c'\|_1} \ket{1} \right), 
\end{equation}
prepare~\cref{eqn:algorithm_general_homo_part} controlled by this new ancilla qubit if it is $0$ and~\cref{eqn:algorithm_general_inhomo_part} if it is $1$, and finally apply $R^{\dagger}$ on the ancilla qubit. 
Combining all the ancilla registers as a single register, the output quantum state is
\begin{equation}\label{eqn:algorithm_numerical_solu}
    \frac{1}{\|c\|_1 (\|u_0\|+ \|c'\|_1) } \ket{0} v + \ket{\perp} 
\end{equation}
where 
\begin{equation}
    v = \sum_{j=0}^{M-1} c_jV_j \|u_0\| \ket{u_0} + \sum_{j'=0}^{M'-1}\sum_{j=0}^{M-1} c'_{j'}c_j V_{j,j'} \ket{b(s_{j'})}
\end{equation}
is an approximation of the exact solution $u(T)$ and might be an unnormalized vector. 
Postselecting the ancilla registers on $0$ yields the desired state.

\subsection{Complexity analysis}

We now present a complexity analysis for a general inhomogeneous ODE. 

\begin{thm}\label{thm:complexity_inhomo}
    Consider the inhomogeneous ODE system in~\cref{eqn:ODE}. 
    Suppose that $L(t)$ is positive semi-definite on $[0,T]$, and we are given the oracles described in~\cref{sec:oracles}. 
    Let $\|A(t)\| \leq \alpha_A$ and define $\Lambda = \sup_{p \geq 0, t \in [0,T]} \|A^{(p)}\|^{1/(p+1)} $ and $\Xi = \sup_{ p\geq 0, t \in [0,T] } \|b^{(p)}\|^{1/(p+1)} $, where the superscript $(p)$ indicates the $p$th-order time derivative. 
    Then we can prepare an $\epsilon$-approximation of the normalized solution $\ket{u(T)}$ with $\Omega(1)$ probability and a flag indicating success, by choosing 
    \begin{equation}
         M = \mathcal{O}\left( \alpha_A T\left(\log\left(\frac{\|u_0\|+\|b\|_{L^1}}{\|u(T)\| \epsilon}\right)\right)^{1+1/\beta} \right), \quad M' = \widetilde{\mathcal{O}}\left( T (\Lambda+\Xi) \left(\log\left(\frac{1+\|b\|_{L^1}}{\|u(T)\|\epsilon}\right)\right)^{1+1/\beta} \right), 
    \end{equation}
    using
        \begin{equation}
            \widetilde{\mathcal{O}}\left( \frac{\|u_0\|+\|b\|_{L^1}}{\|u(T)\|} \alpha_A T \left(\log\left(\frac{ 1 }{ \epsilon}\right)\right)^{1+1/\beta}  \right)
        \end{equation}
        queries to the $\text{HAM-T}$ oracle and 
            \begin{equation}
            \mathcal{O}\left( \frac{\|u_0\|+\|b\|_{L^1}}{\|u(T)\|} \right)
        \end{equation}
        queries to the state preparation oracles $O_u$ and $O_{b}$. 
        \REV{Here $\beta \in (0,1)$ is a tunable parameter in the kernel function in~\cref{eqn:kernel_exp}. }
\end{thm}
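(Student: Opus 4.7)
The plan is to prove \cref{thm:complexity_inhomo} by combining three ingredients: (i) the discretization error bounds established in \cref{lem:truncation,lem:quadrature,lem:quadrature_error_inhomo}, (ii) the query complexity of the truncated Dyson series Hamiltonian simulation method applied to the block-encoding of $kL(t)+H(t)$ constructed in \cref{sec:algorithm_homo}, and (iii) amplitude amplification on the LCU output state \cref{eqn:algorithm_numerical_solu}. The target final state is $\ket{u(T)} = u(T)/\|u(T)\|$, so I first set an intermediate precision goal $\epsilon' = \Theta(\epsilon \|u(T)\|)$ on the unnormalized vector $v$; this is the standard device to convert a bound $\|v-u(T)\| \leq \epsilon'$ into $\|v/\|v\| - \ket{u(T)}\| \leq \mathcal{O}(\epsilon)$. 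All subsequent parameter choices are made to guarantee $\|v - u(T)\| \leq \epsilon'$, which is the origin of the $\log((\|u_0\|+\|b\|_{L^1})/(\|u(T)\|\epsilon))$ factors in $M$ and $M'$.

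Next, I split the total error into three additive pieces, each to be controlled by $\Theta(\epsilon')$. The \emph{quadrature error} $\|\sum_j c_j U(T,k_j)\ket{u_0} + \sum_{j,j'} c'_{j'}c_j U(T,s_{j'},k_j)\ket{b(s_{j'})} - u(T)\|$ is bounded by applying \cref{lem:truncation,lem:quadrature,lem:quadrature_error_inhomo}; this dictates the choice of $K = \mathcal{O}((\log(1/\epsilon'))^{1/\beta})$, $Q_1, Q_2 = \mathcal{O}(\log(1/\epsilon'))$, $h_1 = 1/(e T \alpha_L)$, $h_2 = 1/(e K (\Lambda+\Xi))$, and yields the claimed $M$ and $M'$. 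The \emph{Hamiltonian simulation error} arises because in \cref{eqn:algorithm_sel_oracle,eqn:algorithm_inhomo_sel_HS} the ideal $U(T,k_j)$ and $U(T,s_{j'},k_j)$ are replaced by $V_j, V_{j,j'}$. Using the triangle inequality together with $\sum_j |c_j| = \mathcal{O}(1)$ and $\sum_{j,j'}|c'_{j'}c_j| = \mathcal{O}(\|b\|_{L^1})$ from \cref{lem:coefficient_1norm,lem:coefficient_1norm_inhomo}, it suffices to simulate each $U$ to precision $\mathcal{O}(\epsilon'/(1+\|b\|_{L^1}))$. By \cite[Corollary~4]{LowWiebe2019}, this costs $\widetilde{\mathcal{O}}((\alpha_L K + \alpha_H) T)$ queries to $\text{HAM-T}_{kL+H,q}$, and because the block-encoding factor of $kL+H$ is uniform in $j$ the single SEL implementation already realizes every $W_j$ in parallel. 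Since $\text{HAM-T}_{kL+H,q}$ uses $\mathcal{O}(1)$ queries to $\text{HAM-T}_{A,q}$ and $\alpha_L,\alpha_H = \mathcal{O}(\alpha_A)$, one invocation of SEL or $\widetilde{\text{SEL}}'$ costs $\widetilde{\mathcal{O}}(\alpha_A T K \log(1/\epsilon'))$ queries to the input oracle, i.e.\ $\widetilde{\mathcal{O}}(\alpha_A T (\log(1/\epsilon'))^{1+1/\beta})$, together with $\mathcal{O}(1)$ uses of $O_u$ and $O_b$ per SEL call.

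Finally, amplitude amplification is applied to the flag register of \cref{eqn:algorithm_numerical_solu}. The amplitude on the $\ket{0}$ branch equals $\|v\|/(\|c\|_1(\|u_0\|+\|c'\|_1)) = \Theta(\|u(T)\|/(\|u_0\|+\|b\|_{L^1}))$, using $\|v\| \geq \|u(T)\|/2$ once the precision has been set as above, together with \cref{lem:coefficient_1norm,lem:coefficient_1norm_inhomo}. Fixed-point amplitude amplification then yields a constant success probability with $\mathcal{O}((\|u_0\|+\|b\|_{L^1})/\|u(T)\|)$ repetitions of the LCU block and its inverse. Multiplying the per-round cost by this overhead produces the stated $\widetilde{\mathcal{O}}((\|u_0\|+\|b\|_{L^1})/\|u(T)\| \cdot \alpha_A T (\log(1/\epsilon))^{1+1/\beta})$ HAM-T query complexity and the $\mathcal{O}((\|u_0\|+\|b\|_{L^1})/\|u(T)\|)$ state preparation query complexity, where $\log(1/\epsilon')$ has been absorbed into $\log(1/\epsilon)$ through the $\widetilde{\mathcal{O}}$ since $\log(1/\epsilon') = \log(1/\epsilon) + \log((\|u_0\|+\|b\|_{L^1})/\|u(T)\|)$.

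The main obstacle is careful bookkeeping of error propagation: one has to ensure that the intermediate precision $\epsilon'$ on the unnormalized $v$ translates into $\epsilon$ on $\ket{u(T)}$, and that the $\log(1/\epsilon')$ factors appearing in the truncated Dyson series cost and in the quadrature parameters $K, Q_1, Q_2$ remain compatible with the $\widetilde{\mathcal{O}}$ notation after multiplication by the amplitude amplification overhead. A secondary subtlety is verifying that the implicit block-encoding normalization factor $\alpha_L K + \alpha_H$ assembled in \cref{eqn:algorithm_hamt_kLH} does not inflate beyond $\mathcal{O}(\alpha_A K)$ after the extra rotation step that makes it $j$-independent; this is handled by the explicit $R'$ ancilla construction already given in \cref{sec:algorithm_homo}.
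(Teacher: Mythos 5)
Your proposal is correct and follows essentially the same route as the paper's proof: the same three-way error split (quadrature discretization via Lemmas 9, 10, and 12; Hamiltonian simulation error weighted by the coefficient $1$-norms from Lemmas 11 and 13; and conversion from the unnormalized $v$ to the normalized state at cost $2/\|u(T)\|$), the same truncated Dyson series query count per SEL call, and the same amplitude amplification bound $\mathcal{O}(\|c\|_1(\|u_0\|+\|c'\|_1)/\|v\|)$ with $\|v\|\geq\|u(T)\|(1-\epsilon/2)$. The only cosmetic difference is that you use a single intermediate tolerance $\epsilon'$ where the paper allocates separate $\epsilon_1,\epsilon_2,\epsilon_3$ to the three error sources, which affects nothing beyond constants and logarithmic factors already absorbed by the $\widetilde{\mathcal{O}}$.
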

\begin{proof}
    According to the discussions in~\cref{sec:algorithm_homo}, the $\text{HAM-T}_{kL+H,q}$ oracle can be implemented with $\mathcal{O}(1)$ queries to $\text{HAM-T}_{L,q}$, and $\text{HAM-T}_{H,q}$. 
    By~\cite[Corollary 4]{LowWiebe2019}, for any $\epsilon_1 > 0$, we can implement 
    \begin{equation}
        \text{SEL} = \sum_{j=0}^{M-1} \ket{j}\bra{j} \otimes W_j, 
    \end{equation}
    where $W_j$ block-encodes $V_j$ with $\|V_j - U(T,k_j)\| \leq \epsilon_1$, by using $\text{HAM-T}_{L,q}$, and $\text{HAM-T}_{H,q}$ 
    \begin{equation}
        \mathcal{O}\left( (\alpha_LK+\alpha_H) T \frac{\log((\alpha_LK+\alpha_H) T/\epsilon_1)}{\log\log((\alpha_LK+\alpha_H) T/\epsilon_1)} \right) = \widetilde{\mathcal{O}} \left( \alpha_A K T \log(1/\epsilon_1) \right)
    \end{equation}
    times. 
    Then, $(O_{c,l}^{\dagger} \otimes I) \text{SEL} (O_{c,r}\otimes I)$ is a block-encoding of $\frac{1}{\|c\|_1}\sum_{j=0}^{M-1} c_j V_j$, and thus 
    \begin{equation}\label{eqn:complexity_proof_eq1}
        (O_{c,l}^{\dagger} \otimes I) \text{SEL} (O_{c,r}\otimes I) (O_u\otimes I)\ket{0}\ket{0} = \frac{1}{\|c\|_1} \ket{0} \left(\sum_{j=0}^{M-1} c_j V_j\right)\ket{u_0} + \ket{\perp}. 
    \end{equation}
    This step only needs $1$ query to $O_{c,l},O_{c,r},O_u$, and $\text{SEL}$.  
    
    By the same argument, we can prepare a single copy of~\cref{eqn:algorithm_general_inhomo_part} using $\mathcal{O}(1)$ queries to $O_{c,l},O_{c,r},O_{c',l},O_{c',r},O_b$ and
    \begin{equation}
        \mathcal{O}\left( (\alpha_LK+\alpha_H) T \frac{\log((\alpha_LK+\alpha_H) T/\epsilon_2)}{\log\log((\alpha_LK+\alpha_H) T/\epsilon_2)} \right) = \widetilde{\mathcal{O}} \left( \alpha_A K T \log(1/\epsilon_2) \right)
    \end{equation}
    queries to $\text{HAM-T}_L$ and $\text{HAM-T}_H$.
    Here $\epsilon_2$ is an upper bound on $\|V_{j,j'} - U(T,s_{j'},k_j)\|$. 
    Therefore, we can prepare a single copy of~\cref{eqn:algorithm_numerical_solu} by querying $O_{c,l},O_{c,r},O_{c',l},O_{c',r},O_u,O_b$ $\mathcal{O}(1)$ times, and querying $\text{HAM-T}_L$ and $\text{HAM-T}_H$
    \begin{equation}
        \widetilde{\mathcal{O}}\left( \alpha_A K T \log\left(\frac{1}{\min \left\{\epsilon_1,\epsilon_2\right\}}\right) \right)
    \end{equation}
    times.

    We now determine the choice of $\epsilon_1$ and $\epsilon_2$. 
    The error in the unnormalized solution can be bounded as 
    \begin{align}
            \left\| v - u(T)\right\|
            & \leq \left\| \sum_{j} c_jV_j u_0 -   \mathcal{T}e^{-\int_0^T A(s) \ud s} u_0 \right\| + \left\| \sum_{j,j'} c_{j'}c_j V_{j,j'} \ket{b(s_{j'})} -   \int_0^T \mathcal{T}e^{-\int_s^T A(s')\ud s'} b(s) \right\| \\
            & \leq \|c\|_1 \|u_0\| \epsilon_1 + \|u_0\| \left\| \sum_{j} c_j U(T,0,k_j) -   \mathcal{T}e^{-\int_0^T A(s) \ud s} \right\| \\
            & \quad + \|c\|_1\|c'\|_1 \epsilon_2 + \left\|  \sum_{j,j'} c_{j'}c_j U(T,s_{j'},k_j) \ket{b(s_{j'})} -   \int_0^T \mathcal{T}e^{-\int_s^T A(s')\ud s'} b(s) \right\|. 
    \end{align}
    According to~\cref{lem:truncation},~\cref{lem:quadrature}, and~\cref{lem:quadrature_error_inhomo}, we can bound the homogeneous (resp.~inhomogeneous) integral discretization error by $\epsilon_3/\|u_0\|$ (resp.~$\epsilon_3$) by choosing 
    \begin{equation}\label{eqn:proof_complexity_K}
        K = \mathcal{O}\left( \left(\log\left(\frac{\|u_0\|+\|b\|_{L^1}}{\epsilon_3}\right)\right)^{1/\beta} \right), \quad M = \mathcal{O}\left( \alpha_A T\left(\log\left(\frac{\|u_0\|+\|b\|_{L^1}}{\epsilon_3}\right)\right)^{1+1/\beta} \right), 
    \end{equation}
    and 
    \begin{equation}
        M' = \widetilde{\mathcal{O}}\left( T (\Lambda+\Xi) \left(\log\left(\frac{1+\|b\|_{L^1}}{\epsilon_3}\right)\right)^{1+1/\beta}  \right). 
    \end{equation}
    Then we have 
    \begin{equation}\label{eqn:proof_complexity_inhomo_eq1}
        \left\| v - u(T)\right\| \leq \|c\|_1 \|u_0\| \epsilon_1 +  \|c\|_1\|c'\|_1 \epsilon_2 + 2\epsilon_3, 
    \end{equation}
    and the error in the quantum state can be bounded as 
    \begin{equation}
        \left\| \ket{v} - \ket{u(T)}\right\| \leq \frac{2}{\|u(T)\|} \left\| v - u(T)\right\| \leq \frac{2\|c\|_1 \|u_0\|}{\|u(T)\|} \epsilon_1 +  \frac{2\|c\|_1\|c'\|_1}{\|u(T)\|} \epsilon_2 + \frac{4}{\|u(T)\|}\epsilon_3. 
    \end{equation}
    To bound this error by $\epsilon$, we can choose 
    \begin{equation}\label{eqn:proof_complexity_eps}
        \epsilon_1 = \frac{\|u(T)\|}{8\|c\|_1 \|u_0\|}\epsilon, \quad \epsilon_2 = \frac{\|u(T)\|}{8\|c\|_1\|c'\|_1}\epsilon, \quad \epsilon_3 = \frac{\|u(T)\|}{8}\epsilon. 
    \end{equation}
    These choices, together with~\cref{lem:coefficient_1norm} and~\cref{lem:coefficient_1norm_inhomo}, imply the choice of $M$ and $M'$ as stated in the theorem. A single run of the algorithm queries $\text{HAM-T}_L$ and $\text{HAM-T}_H$
    \begin{align}
        \widetilde{\mathcal{O}}\left( \alpha_A K T \log\left(\frac{1}{\min \left\{\epsilon_1,\epsilon_2\right\}}\right) \right) & = \widetilde{\mathcal{O}}\left( \alpha_A K T \log\left(\frac{ \|c\|_1(\|u_0\| + \|c'\|_1)}{ \|u(T)\| \epsilon}\right) \right) \label{eqn:proof_total_complexity_with_beta}\\
        & = \widetilde{\mathcal{O}}\left( \alpha_A T \left(\log\left(\frac{ \|u_0\| + \|b\|_{L^1}}{ \|u(T)\| \epsilon}\right)\right)^{1+1/\beta} \right)
    \end{align}
    times.

    As in~\cref{eqn:algorithm_numerical_solu}, the rescaling factor in front of the ``correct'' subspace is $\|v\|/(\|c\|_1(\|u_0\|+\|c'\|_1))$. 
    With amplitude amplification, the number of repetitions of the algorithm is $\mathcal{O}(\|c\|_1(\|u_0\|+\|c'\|_1)/\|v\|)$. 
    By~\cref{eqn:proof_complexity_inhomo_eq1} and the triangle inequality, we have 
    \begin{equation}
        \|v\| \geq \|u(T)\| - (\|c\|_1 \|u_0\| \epsilon_1 +  \|c\|_1\|c'\|_1 \epsilon_2 + 2\epsilon_3) = \|u(T)\| (1-\epsilon/2).  
    \end{equation}
    Therefore the number of repetitions is 
    \begin{equation}\label{eqn:proof_complexity_repeat}
        \mathcal{O}\left( \frac{\|c\|_1(\|u_0\|+\|c'\|_1)}{\|v\|} \right) = \mathcal{O}\left( \frac{\|u_0\|+\|b\|_{L^1}}{\|u(T)\|} \right). 
    \end{equation}
    Multiplying this factor by the complexities of a single run of the algorithm yields the claimed overall complexities. 
\end{proof}

\subsection{Special cases and applications}\label{sec:special_applications}

\subsubsection{Homogeneous case}

\cref{thm:complexity_inhomo} gives the query complexity of our LCHS algorithm in the most general inhomogeneous case. 
However, the homogeneous case ($b(t) \equiv 0$) is of particular interest. For example, this case can model the dynamics of so-called ``non-Hermitian Hamiltonians''~\cite{MatsumotoKawabataAshidaEtAl2020,OkumaKawabataShiozakiEtAl2020,BergholtzBudichKunst2021,ChenSongLado2023}.
For clarity, we explicitly write down the complexity of our algorithm applied to homogeneous ODEs in the following corollary, which can be directly obtained from~\cref{thm:complexity_inhomo} by discarding all the components of the inhomogeneous term and letting $\|b\|_{L^1} = 0$. 

\begin{cor}\label{cor:complexity_homo}
    Consider the homogeneous ODE system in~\cref{eqn:ODE} with $b(t) \equiv 0$. 
    Suppose that $L(t)$ is positive semi-definite, and we are given the oracles described in~\cref{sec:oracles}. 
    Let $\alpha_A \geq \|A(t)\|$ be the block-encoding factor of $A(t)$. 
    Then we can prepare an $\epsilon$-approximation of the normalized solution $\ket{u(T)}$ with $\Omega(1)$ probability and a flag indicating success, by choosing $ M = \mathcal{O}\left( \alpha_A T\left(\log\left(\frac{\|u_0\|}{\|u(T)\|\epsilon}\right)\right)^{1+1/\beta} \right)$, using
        \begin{equation}
            \widetilde{\mathcal{O}}\left( \frac{\|u_0\|}{\|u(T)\|} \alpha_A T \left(\log\left(\frac{1}{\epsilon}\right)\right)^{1+1/\beta} \right)
        \end{equation}
        queries to the $\text{HAM-T}$ oracle and
        \begin{equation}
            \mathcal{O}\left( \frac{\|u_0\|}{ \|u(T)\|}  \right)
        \end{equation}
        queries to the state preparation oracle $O_u$.
        \REV{Here $\beta \in (0,1)$ is a tunable parameter in the kernel function in~\cref{eqn:kernel_exp}. }
\end{cor}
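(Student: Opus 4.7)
The plan is to deduce the corollary as a direct specialization of \cref{thm:complexity_inhomo} to the case $b(t) \equiv 0$. Since all oracles and algorithmic machinery from the inhomogeneous case are already built in \cref{sec:algorithm_inhomo}, essentially no new construction is required: I simply trace through the proof of \cref{thm:complexity_inhomo} and discard every component associated with the inhomogeneous term.

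First, I would set $b(t)\equiv 0$ and hence $\|b\|_{L^1}=0$. This removes the need for the oracle $O_b$ and the state preparation oracles $O_{c',l},O_{c',r}$. The approximate solution in \cref{eqn:ODE_solu_discrete} reduces to $u(T)\approx\sum_{j=0}^{M-1} c_j U(T,k_j)\ket{u_0}$, so only the homogeneous LCU construction of \cref{sec:algorithm_homo} is needed. In particular, the output state is given by \cref{eqn:algorithm_general_homo_part}, and the final combining rotation $R$ becomes trivial (it acts only on the $\ket{0}$ branch). The select oracle $\text{SEL}'$ and the quadrature in $s$ with step $h_2$, nodes $s_{q_2,m_2}$, and count $M'$ are no longer needed, so the parameters $\Lambda$ and $\Xi$ drop out entirely.

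Next, I would re-run the error analysis from the proof of \cref{thm:complexity_inhomo}, keeping only the homogeneous contributions. Concretely, the error bound \cref{eqn:proof_complexity_inhomo_eq1} collapses to $\|v-u(T)\|\leq \|c\|_1\|u_0\|\epsilon_1 + \epsilon_3$, where $\epsilon_1$ is the per-step Hamiltonian simulation error of the truncated Dyson method and $\epsilon_3$ bounds the quadrature/truncation error in $k$ via \cref{lem:truncation} and \cref{lem:quadrature}. Choosing $\epsilon_1=\Theta(\|u(T)\|\epsilon/(\|c\|_1\|u_0\|))$ and $\epsilon_3=\Theta(\|u(T)\|\epsilon)$ and invoking \cref{lem:coefficient_1norm} to bound $\|c\|_1=\mathcal{O}(1)$ yields the stated choice $M=\mathcal{O}(\alpha_A T(\log(\|u_0\|/(\|u(T)\|\epsilon)))^{1+1/\beta})$ and gives a per-run HAM-T query complexity of $\widetilde{\mathcal{O}}(\alpha_A T(\log(1/\epsilon))^{1+1/\beta})$ by the truncated Dyson bound $\widetilde{\mathcal{O}}(\alpha_A K T\log(1/\epsilon_1))$ with $K=\mathcal{O}((\log(1/\epsilon))^{1/\beta})$.

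Finally, I would account for amplitude amplification. The rescaling factor in the correct branch of \cref{eqn:algorithm_general_homo_part} is $\|v\|/\|c\|_1$, and since $\|v\|\geq \|u(T)\|(1-\epsilon/2)$ by the triangle inequality as in \cref{eqn:proof_complexity_repeat}, the number of repetitions is $\mathcal{O}(\|c\|_1/\|v\|)=\mathcal{O}(\|u_0\|/\|u(T)\|)$. Multiplying through gives the claimed overall complexities, with queries to $O_u$ being $\mathcal{O}(\|u_0\|/\|u(T)\|)$ since $O_u$ is invoked once per run. There is no real obstacle here; the only point requiring minor care is verifying that the $b$-dependent factors in each of \cref{eqn:proof_complexity_K,eqn:proof_complexity_eps} collapse cleanly to their homogeneous counterparts when $\|b\|_{L^1}=0$, which they do because the bounds in \cref{lem:quadrature_error_inhomo} and \cref{lem:coefficient_1norm_inhomo} become vacuous.
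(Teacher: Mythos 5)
Your proposal is correct and matches the paper exactly: the paper gives no separate proof of \cref{cor:complexity_homo}, stating only that it follows from \cref{thm:complexity_inhomo} by discarding the inhomogeneous components and setting $\|b\|_{L^1}=0$, which is precisely the specialization you carry out. The only nit is a normalization slip in the amplitude-amplification step (the success amplitude in the homogeneous branch is $\|v\|/(\|c\|_1\|u_0\|)$, not $\|v\|/\|c\|_1$, since the LCU acts on the normalized $\ket{u_0}$ while $v$ approximates the unnormalized $u(T)$), but your final count $\mathcal{O}(\|u_0\|/\|u(T)\|)$ is the correct one.
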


\REV{Although the explicit time scaling of our algorithm is almost linear as shown in~\cref{cor:complexity_homo}, the actual time scaling also depends on the factor $\|u_0\|/\|u(T)\|$. It is possible for $\|u_0\|/\|u(T)\|$ to be independent of $T$, for example in the Hamiltonian simulation case where $L(t) \equiv 0$. On the other hand, if the matrix $L(t)$ is strictly positive definite, $\|u_0\|/\|u(T)\|$ is exponential in $T$, so the overall complexity is also exponential in $T$. However, such exponential scaling is unavoidable, because a lower bound $\Omega(\|u(0)\|/\|u(T)\|)$ has been proved for generic quantum ODE algorithms~\cite{FangLinTong2022,AnLiuWangEtAl2022}. }

\subsubsection{Time-independent case}\label{sec:timeindependent_improve}

Now we consider the time-independent case where $A(t) \equiv A$ is time-independent. 
We first discuss the homogeneous case where $b(t) \equiv 0$. 
Although~\cref{cor:complexity_homo} also applies to this special case, it is possible to design an algorithm with better asymptotic scaling by taking advantage of better time-independent Hamiltonian simulation algorithms such as QSP and QSVT. 

Specifically, in the time-independent case, we still use the algorithm presented in~\cref{sec:algorithm_homo}, but we can simplify the matrix oracles and use QSVT as the Hamiltonian simulation subroutine. 
First, the HAM-T oracles reduce to the standard block-encoding $\text{BE}_A$ of $A$ such that 
\begin{equation}
    \left( \bra{0}_a \otimes I \right) \text{BE}_A  \left( \ket{0}_a \otimes I \right) = \frac{A}{\alpha_A}. 
\end{equation}
Then, following~\cref{sec:algorithm_homo} up to~\cref{eqn:algorithm_hamt_kLH}, we again construct the block-encoding of $kL+H$ such that 
\begin{equation}
    (\bra{0}_{a'}\otimes I) \text{HAM-T}_{kL+H} (\ket{0}_{a'} \otimes I) = \sum_{j=0}^{M-1}\ket{j}\bra{j} \otimes \frac{k_j L + H}{\alpha_L K + \alpha_H }. 
\end{equation}
Now, we use QSVT~\cite[Corollary 60]{GilyenSuLowEtAl2019} to obtain the select oracle in~\cref{eqn:algorithm_sel_oracle}, and the homogeneous time-evolution operator can be block-encoded by LCU. 

The overall query complexity is as follows. 

\begin{cor}\label{cor:complexity_homo_time_independent}
    Consider the homogeneous ODE system in~\cref{eqn:ODE} with $b(t) \equiv 0$ and time-independent $A(t) \equiv A$.  
    Under the assumptions in~\cref{cor:complexity_homo}, we can prepare an $\epsilon$-approximation of the normalized solution $\ket{u(T)}$ with $\Omega(1)$ probability and a flag indicating success, using
        \begin{equation}
            \widetilde{\mathcal{O}}\left( \frac{\|u_0\|}{\|u(T)\|} \alpha_A T \left(\log\left(\frac{1}{\epsilon}\right)\right)^{1/\beta} \right)
        \end{equation}
        queries to the block-encoding of $A$, and
        \begin{equation}
            \mathcal{O}\left( \frac{\|u_0\|}{ \|u(T)\|}  \right)
        \end{equation}
        queries to the state preparation oracle $O_u$. 
        \REV{Here $\beta \in (0,1)$ is a tunable parameter in the kernel function in~\cref{eqn:kernel_exp}. }
\end{cor}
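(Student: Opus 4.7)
The plan is to retrace the quantum implementation of Section~\ref{sec:algorithm_homo} and the analysis of Theorem~\ref{thm:complexity_inhomo}, keeping the LCHS discretization and the LCU structure exactly as they are, and changing only the Hamiltonian simulation subroutine used inside the SEL oracle. In the general time-dependent setting, the Dyson series cost $\widetilde{\mathcal{O}}((\alpha_L K + \alpha_H) T \log(1/\epsilon_1))$ contributes the multiplicative factor $\log(1/\epsilon_1)$ and, combined with $K = \mathcal{O}((\log(1/\epsilon))^{1/\beta})$, produces the $(\log(1/\epsilon))^{1+1/\beta}$ scaling. In the time-independent case the Hamiltonian $k_j L + H$ is time-independent, so I can simulate its evolution with QSP/QSVT and obtain the additive bound $\mathcal{O}((\alpha_L K + \alpha_H) T + \log(1/\epsilon_1))$ per node, which is precisely what is needed to shave off one factor of $\log(1/\epsilon)$.

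Concretely, the steps are as follows. First, start from the HAM-T oracle reduction in the time-independent case, where HAM-T$_A$ collapses to a single block-encoding $\text{BE}_A$ of $A$ with normalization $\alpha_A$, and similarly $\text{BE}_L,\text{BE}_H$ can be built with $\mathcal{O}(1)$ queries to $\text{BE}_A$. Second, run the controlled-rotation + LCU gadget from equations \eqref{eqn:algorithm_hamt_kLH} to obtain a block-encoding
\begin{equation}
    (\bra{0}_{a'}\otimes I)\,\text{BE}_{kL+H}\,(\ket{0}_{a'}\otimes I) = \sum_{j=0}^{M-1} \ket{j}\bra{j} \otimes \frac{k_j L + H}{\alpha_L K + \alpha_H},
\end{equation}
using $\mathcal{O}(1)$ queries to $\text{BE}_A$. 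Third, invoke QSVT Hamiltonian simulation (Gily\'en--Su--Low--Wiebe, Corollary~60) controlled on $\ket{j}$ to implement the select oracle $\text{SEL} = \sum_j \ket{j}\bra{j}\otimes W_j$, where $W_j$ block-encodes a $V_j$ with $\|V_j - e^{-iT(k_j L + H)}\| \le \epsilon_1$, at a cost of
\begin{equation}
    \mathcal{O}\!\left( (\alpha_L K + \alpha_H) T + \log(1/\epsilon_1) \right)
\end{equation}
queries to $\text{BE}_{kL+H}$, and therefore the same asymptotic count of queries to $\text{BE}_A$. Fourth, sandwich $\text{SEL}$ between $O_{c,r}$ and $O_{c,l}^{\dagger}$, apply the resulting block-encoding to $O_u\ket{0}\ket{0}$, and run $\mathcal{O}(\|u_0\|/\|u(T)\|)$ rounds of amplitude amplification.

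For the error budget, I reuse Lemma~\ref{lem:truncation} and Lemma~\ref{lem:quadrature} to set $K = \mathcal{O}((\log(\|u_0\|/(\|u(T)\|\epsilon)))^{1/\beta})$ and $M = \mathcal{O}(\alpha_A T (\log(\|u_0\|/(\|u(T)\|\epsilon)))^{1+1/\beta})$, bounding the truncation plus quadrature contribution to $\|v - u(T)\|$ by, say, $\|u(T)\|\epsilon/4$. Using Lemma~\ref{lem:coefficient_1norm} (so $\|c\|_1 = \mathcal{O}(1)$), the Hamiltonian simulation tolerance can be chosen as $\epsilon_1 = \Theta(\|u(T)\|\epsilon/(\|c\|_1\|u_0\|))$, giving $\log(1/\epsilon_1) = \mathcal{O}(\log(\|u_0\|/(\|u(T)\|\epsilon)))$. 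Plugging these into the per-call cost yields
\begin{equation}
    \mathcal{O}\!\left( \alpha_A T (\log(1/\epsilon))^{1/\beta} + \log(1/\epsilon) \right)
\end{equation}
queries to $\text{BE}_A$ per single run, and multiplying by the amplitude amplification overhead $\mathcal{O}(\|u_0\|/\|u(T)\|)$ (via the $\|v\| \ge \|u(T)\|(1-\epsilon/2)$ bound reproduced verbatim from the proof of Theorem~\ref{thm:complexity_inhomo}) gives the claimed
\begin{equation}
    \widetilde{\mathcal{O}}\!\left( \frac{\|u_0\|}{\|u(T)\|} \alpha_A T (\log(1/\epsilon))^{1/\beta} \right)
\end{equation}
query complexity, together with $\mathcal{O}(\|u_0\|/\|u(T)\|)$ queries to $O_u$.

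I expect the main obstacle to be purely bookkeeping rather than conceptual: one has to check carefully that the QSVT call is actually controlled on $\ket{j}$ with a \emph{uniform} normalization $\alpha_L K + \alpha_H$ (this is why the extra $R'$ rotation in Section~\ref{sec:algorithm_homo} is crucial in the time-independent setting as well), so that a single QSVT phase sequence suffices for all $j$ simultaneously and the additive $\log(1/\epsilon_1)$ term is not paid $M$ times. Once that uniformity is in place, the rest of the argument is a direct specialization of the proof of Theorem~\ref{thm:complexity_inhomo} with the Dyson-series cost replaced by the QSVT cost, and the error budget of equations \eqref{eqn:proof_complexity_K}--\eqref{eqn:proof_complexity_eps} carries over with $\|b\|_{L^1} = 0$.
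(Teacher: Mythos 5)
Your proposal is correct and follows essentially the same route as the paper's proof: specialize the LCU construction of \cref{sec:algorithm_homo} with the uniform block-encoding factor $\alpha_L K + \alpha_H$, replace the truncated Dyson series by QSVT (\cite[Corollary 60]{GilyenSuLowEtAl2019}) to get the additive cost $\mathcal{O}((\alpha_L K+\alpha_H)T + \log(1/\epsilon_1))$, and reuse the error budget and repetition count from the proof of \cref{thm:complexity_inhomo} with $\|b\|_{L^1}=0$. Your remark about needing a single QSVT phase sequence valid for all $j$ simultaneously (hence the uniform normalization) is exactly the point the paper addresses with the extra $R'$ rotation.
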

\begin{proof}
    This corollary can be proved in the same way as~\cref{thm:complexity_inhomo}. 
    In each run of LCU, we use $1$ query to $O_u$, and by~\cite[Corollary 60]{GilyenSuLowEtAl2019},
     \begin{equation}
        \mathcal{O}\left( (\alpha_LK+\alpha_H) T + \log(1/\epsilon_1)\right) = \mathcal{O} \left( \alpha_A K T + \log(1/\epsilon_1) \right)
    \end{equation}
    queries to the block-encoding of $A$. 
    Here the error tolerance $\epsilon_1$ of Hamiltonian simulation can be chosen as $\epsilon_1 = \frac{\|u(T)\|}{8\|c\|_1 \|u_0\|}\epsilon$ according to~\cref{eqn:proof_complexity_eps}. 
    Furthermore, the truncation threshold is $K = \mathcal{O}\left( \left(\log\left(\frac{\|u_0\|}{\|u(T)\|\epsilon}\right)\right)^{1/\beta} \right)$ according to~\cref{eqn:proof_complexity_K} and~\cref{eqn:proof_complexity_eps}. 
    Therefore, in each run of LCU, the query complexity to the block-encoding of $A$ is 
    \begin{equation}
        \mathcal{O} \left( \alpha_A T  \left(\log\left(\frac{\|u_0\|}{\|u(T)\|\epsilon}\right)\right)^{1/\beta} + \log\left(\frac{\|c\|_1 \|u_0\|}{\|u(T)\|\epsilon}\right) \right) = \mathcal{O} \left( \alpha_A T  \left(\log\left(\frac{\|u_0\|}{\|u(T)\|\epsilon}\right)\right)^{1/\beta} \right). 
    \end{equation}
    The number of repetitions until success is $\mathcal{O}(\|u_0\|/\|u(T)\|)$ according to~\cref{eqn:proof_complexity_repeat}, which contributes to another multiplicative factor and completes the proof. 
\end{proof}

For the inhomogeneous case, we must also prepare the state $\int_0^T e^{-A(T-s)} b(s) \ud s$. 
We follow the algorithm in~\cref{sec:algorithm_inhomo}, with the only difference being that we use QSVT for Hamiltonian simulation to construct the operator in~\cref{eqn:algorithm_inhomo_sel_HS}, where each unitary operator $W_{j,j'}$ is now the block-encoding of $e^{-i (T-s_{j'}) (k_j L + H) }$. 
To deal with different evolution times, we start with the $\text{HAM-T}_{kL+H}$ satisfying 
\begin{equation}
    (\bra{0}_{a}\otimes I) \text{HAM-T}_{kL+H} (\ket{0}_{a} \otimes I) = \sum_{j'=0}^{M'-1}\sum_{j=0}^{M-1}\ket{j'}\bra{j'} \otimes \ket{j}\bra{j} \otimes \frac{k_j L + H}{\alpha_L K + \alpha_H }, 
\end{equation}
and apply, on an additional qubit, a controlled rotation gate that maps the state $\ket{j'}\ket{0}$ to the state $\ket{j'}\left((1-s_{j'}/T)\ket{0} + \sqrt{1-(1-s_{j'}/T)^2}\ket{1}\right)$. 
This gives rise to the HAM-T oracle of the rescaled Hamiltonian $(1-s/T)(kL+H)$, with one additional ancilla qubit, satisfying 
\begin{equation}
    (\bra{0}_{a'}\otimes I) \text{HAM-T}_{(1-s/T)(kL+H)} (\ket{0}_{a'} \otimes I) = \sum_{j'=0}^{M'-1}\sum_{j=0}^{M-1}\ket{j'}\bra{j'} \otimes \ket{j}\bra{j} \otimes \frac{T-s_{j'}}{T(\alpha_L K + \alpha_H) } (k_j L + H). 
\end{equation}
Notice that we need to simulate $\frac{T-s_{j'}}{T(\alpha_L K + \alpha_H) } (k_j L + H)$ up to time $T(\alpha_L K + \alpha_H)$, which is independent of both $j$ and $j'$. 
Thus we may use QSVT with the phase factors associated with the polynomial approximation of $e^{-iT(\alpha_L K + \alpha_H)x}$ to implement the operator in~\cref{eqn:algorithm_inhomo_sel_HS}. 
The overall complexity is still dominated by the Hamiltonian simulation step, which takes advantage of the additive scaling of QSVT and scales as $\widetilde{\mathcal{O}}\left( \left(\log\left(\frac{1}{\epsilon}\right)\right)^{1/\beta} \right)$ in precision.

\subsubsection{Gibbs state preparation}

The goal of Gibbs state preparation is to prepare a density matrix $\frac{1}{Z_{\gamma}} e^{-\gamma L}$. 
Here $L \succeq 0$ is the Hamiltonian, $\gamma$ is the inverse temperature, and $Z_{\gamma} = \mathop{\mathrm{Tr}}(e^{-\gamma L})$ is the partition function. 
The Gibbs state can be obtained by first preparing the purified Gibbs state 
\begin{equation}\label{eqn:purified_gibbs_state}
    \ket{\psi} = \sqrt{\frac{N}{Z_{\gamma}}} \left(I \otimes e^{-\gamma L/2}\right) \left( \frac{1}{\sqrt{N}} \sum_{j=0}^{N-1} \ket{j}\ket{j} \right), 
\end{equation}
and then tracing out the first register. 
Therefore, we may prepare the Gibbs state by constructing the block-encoding of $e^{-\gamma L/2}$ via our LCHS method and applying it to the maximally entangled state.  

The query complexity of preparing the purified Gibbs state is as follows, as a direct consequence of~\cref{cor:complexity_homo_time_independent} (observing that the norm of the unnormalized solution $\left(I \otimes e^{-\gamma L/2}\right) \left( \frac{1}{\sqrt{N}} \sum_{j=0}^{N-1} \ket{j}\ket{j} \right)$ is $\sqrt{Z_{\gamma}/N}$). 

\begin{cor}\label{cor:gibbs_state}
    Suppose that $L \succeq 0$ is a Hamiltonian, and we are given the access to a block-encoding of $L$ with block-encoding factor $\alpha_L \geq \|L\|$. 
    Then we can prepare an $\epsilon$-approximation of the purified Gibbs state~\cref{eqn:purified_gibbs_state} with $\Omega(1)$ probability and a flag indicating success, using 
    \begin{equation}
            \widetilde{\mathcal{O}}\left( \sqrt{\frac{N}{Z_{\gamma}}} \gamma \alpha_L  \left(\log\left(\frac{1}{\epsilon}\right)\right)^{1/\beta} \right)
    \end{equation}
    queries to the block-encoding of $L$. 
    \REV{Here $\beta \in (0,1)$ is a tunable parameter in the kernel function in~\cref{eqn:kernel_exp}, and $\gamma$ is the inverse temperature. }
\end{cor}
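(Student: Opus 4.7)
The plan is to recognize that preparing the purified Gibbs state~\cref{eqn:purified_gibbs_state} is equivalent to solving a time-independent homogeneous linear ODE of the form covered by~\cref{cor:complexity_homo_time_independent}, with a particularly simple initial state. This reduces the corollary to a direct bookkeeping exercise: identify the ODE, compute $\|u_0\|$ and $\|u(T)\|$, and substitute into the complexity bound of~\cref{cor:complexity_homo_time_independent}.

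More concretely, first I would set up the ODE. Working in $\mathbb{C}^N \otimes \mathbb{C}^N$, take $A = I \otimes L$, so $A \succeq 0$ is Hermitian, giving Cartesian decomposition with real part equal to $A$ itself and imaginary part zero. Choose the evolution time $T = \gamma/2$ and the initial state $u_0 = \frac{1}{\sqrt{N}}\sum_{j=0}^{N-1}\ket{j}\ket{j}$, which is a unit vector preparable with $n$ Hadamards and $n$ CNOTs on $2n = 2\log_2 N$ qubits (so the state preparation oracle $O_u$ costs $\mathcal{O}(\log N)$ gates and does not involve any queries to the block-encoding of $L$). The resulting unnormalized solution is $u(T) = (I\otimes e^{-\gamma L/2}) u_0$, and the normalized state $\ket{u(T)} = u(T)/\|u(T)\|$ is precisely the purified Gibbs state $\ket{\psi}$ of~\cref{eqn:purified_gibbs_state}.

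The key numerical ingredient is the identity
\begin{equation}
    \|u(T)\|^2 = \frac{1}{N}\sum_{j,j'=0}^{N-1} \braket{j|j'}\braket{j|e^{-\gamma L}|j'} = \frac{1}{N}\mathop{\mathrm{Tr}}(e^{-\gamma L}) = \frac{Z_\gamma}{N},
\end{equation}
so that $\|u_0\|/\|u(T)\| = \sqrt{N/Z_\gamma}$. A block-encoding of $I\otimes L$ with block-encoding factor $\alpha_L$ is obtained for free from the given block-encoding of $L$ (the ancilla structure is shared and the first register passes through trivially). Applying~\cref{cor:complexity_homo_time_independent} with this $A$, $T=\gamma/2$, and the above norm ratio then produces an $\epsilon$-approximation of $\ket{\psi}$ with $\Omega(1)$ success probability, using
\begin{equation}
    \widetilde{\mathcal{O}}\left(\sqrt{\tfrac{N}{Z_\gamma}} \, \alpha_L \, \tfrac{\gamma}{2} \, \bigl(\log(1/\epsilon)\bigr)^{1/\beta}\right) = \widetilde{\mathcal{O}}\left(\sqrt{\tfrac{N}{Z_\gamma}} \, \gamma\, \alpha_L \bigl(\log(1/\epsilon)\bigr)^{1/\beta}\right)
\end{equation}
queries to the block-encoding of $L$, absorbing the constant factor of $1/2$ into the $\widetilde{\mathcal{O}}$.

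The only mild subtlety, and the step I would spend the most care on, is confirming that the assumptions of~\cref{cor:complexity_homo_time_independent} really are met in this reduction: $I \otimes L \succeq 0$ is automatic from $L \succeq 0$; the block-encoding factor $\alpha_L \geq \|I \otimes L\| = \|L\|$ is inherited; and the initial-state oracle contributes only $\mathcal{O}(\sqrt{N/Z_\gamma})$ queries, which is dominated by the matrix query complexity and hence need not appear separately in the final statement. Everything else is a direct substitution, and no new analytical obstacle arises beyond what was already handled in proving~\cref{thm:complexity_inhomo} and~\cref{cor:complexity_homo_time_independent}.
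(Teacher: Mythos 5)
Your proposal is correct and follows essentially the same route as the paper: the paper also obtains \cref{cor:gibbs_state} as a direct consequence of \cref{cor:complexity_homo_time_independent}, applying the block-encoding of $e^{-\gamma L/2}$ to the maximally entangled state and observing that the unnormalized solution has norm $\sqrt{Z_\gamma/N}$, so the post-selection factor becomes $\sqrt{N/Z_\gamma}$. Your bookkeeping (the choice $T=\gamma/2$, the trace identity for $\|u(T)\|^2$, and the check that the matrix query cost dominates the trivial state preparation) matches the intended argument.
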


\REV{
\subsubsection{Adaptive implementation}\label{sec:adaptive}
So far we have regarded $\beta$ as a fixed parameter in $(0,1)$. 
Now we consider the case where we can choose $\beta$ adaptively, i.e., as a function of the evolution time $T$, target error $\epsilon$, and other parameters of the problem. 
Then we can slightly improve the error dependence of the total query complexity, though the number of queries to the state preparation oracle becomes higher. 
}

\REV{
\begin{cor}\label{cor:complexity_adaptive}
    Consider the inhomogeneous ODE system in~\cref{eqn:ODE}. 
    Suppose that $L(t)$ is positive semi-definite on $[0,T]$, and we are given the oracles described in~\cref{sec:oracles}. 
    Let $\alpha_A \geq \|A(t)\|$ be the block-encoding factor of $A(t)$.  
    Then, by choosing
    \begin{equation}
         \beta = 1 - \mathcal{O}\left( \frac{1}{\log\log\left( \frac{\|u_0\|+\|b\|_{L^1}}{\|u(T)\|\epsilon} \right)} \right), 
    \end{equation}
    we can prepare an $\epsilon$-approximation of the normalized solution $\ket{u(T)}$ with $\Omega(1)$ probability and a flag indicating success using
        \begin{equation}
            \widetilde{\mathcal{O}}\left( \frac{\|u_0\|+\|b\|_{L^1}}{\|u(T)\|} \alpha_A T \left(\log\left(\frac{ 1 }{ \epsilon}\right)\right)^2  \right)
        \end{equation}
        queries to the $\text{HAM-T}$ oracle and 
            \begin{equation}
            \widetilde{\mathcal{O}}\left( \frac{\|u_0\|+\|b\|_{L^1}}{\|u(T)\|} \log\log\left( \frac{1}{\epsilon} \right) \right)
        \end{equation}
        queries to the state preparation oracles $O_u$ and $O_{b}$. 
\end{cor}
}
\begin{proof}
    \REV{The idea is to track the $\beta$ dependence in the constant factor and find an optimal value to minimize the total query complexity. 
    In this proof, we use the big-O notation without hiding any extra $\beta$ dependence.}

    \REV{
    According to~\cref{eqn:proof_total_complexity_with_beta} and~\cref{eqn:proof_complexity_repeat} in the proof of~\cref{thm:complexity_inhomo}, in each run we query the HAM-T oracle $\widetilde{\mathcal{O}}\left( \alpha_A K T \log\left(\frac{ \|c\|_1(\|u_0\| + \|c'\|_1)}{ \|u(T)\| \epsilon}\right) \right)$ times, and the number of repetitions is $\mathcal{O}\left( \frac{\|c\|_1(\|u_0\|+\|c'\|_1)}{\|u(T)\|} \right)$ with amplitude amplification. 
    As a result, the total query complexity of our algorithm is 
    \begin{equation}
        \widetilde{\mathcal{O}}\left( \frac{\|c\|_1(\|u_0\|+\|c'\|_1)}{\|u(T)\|} \alpha_A K T \log\left(\frac{ \|c\|_1(\|u_0\| + \|c'\|_1)}{ \|u(T)\| \epsilon}\right) \right). 
    \end{equation}
    We only need to track the explicit $\beta$ dependence in the parameters $K$, $\|c\|_1$ and $\|c'\|_1$. 
    }

    \REV{According to the proof of~\cref{thm:complexity_inhomo}, the parameter $K$ is chosen such that the quadrature error related to the variable $k$ is bounded by $\mathcal{O}\left( \frac{\|u(T)\|\epsilon}{\|u_0\|+\|b\|_{L^1}} \right)$. 
    Using the first part of~\cref{lem:quadrature}, the choice of $K$ should satisfy 
    \begin{equation}
        \frac{2^{\left\lceil 1/\beta \right\rceil+1} \left\lceil 1/\beta \right\rceil !}{C_{\beta} \left(\cos(\beta\pi/2)\right)^{\left\lceil 1/\beta \right\rceil} } \frac{1}{ K } e^{-\frac{1}{2}K^{\beta} \cos(\beta\pi/2) } \leq \mathcal{O}\left( \frac{\|u(T)\|\epsilon}{\|u_0\|+\|b\|_{L^1}} \right). 
    \end{equation}
    Supposing $\beta \geq 1/2$, it suffices to choose 
    \begin{align}
        K &= \mathcal{O}\left( \frac{1}{(\cos(\beta\pi/2))^{1/\beta}} \left(\log\left(\frac{1}{\cos(\beta\pi/2)}\frac{\|u_0\|+\|b\|_{L^1}}{\|u(T)\|\epsilon}\right)\right)^{1/\beta} \right) \\
        & \leq \mathcal{O}\left( \frac{1}{(1-\beta)^{1/\beta}} \left(\log\left(\frac{1}{1-\beta}\frac{\|u_0\|+\|b\|_{L^1}}{\|u(T)\|\epsilon}\right)\right)^{1/\beta} \right). 
    \end{align}
    Furthermore, according to~\cref{lem:coefficient_1norm} (specifically,~\cref{eqn:bound_int_gabs} in its proof) and~\cref{lem:coefficient_1norm_inhomo}, $\|c\|_1 = \mathcal{O}\left(\frac{1}{1-\beta}\right)$ and $\|c'\|_1 = \mathcal{O}(\|b\|_{L^1})$. 
    Thus the total query complexity is 
    \begin{equation}\label{eqn:proof_adaptive_eq1}
        \widetilde{\mathcal{O}}\left( \frac{1}{(1-\beta)^{1+1/\beta}} \frac{\|u_0\|+\|b\|_{L^1}}{\|u(T)\|} \alpha_A T \left(\log\left(\frac{1}{1-\beta}\frac{\|u_0\|+\|b\|_{L^1}}{\|u(T)\|\epsilon}\right)\right)^{1+1/\beta} \right). 
    \end{equation}
    }
    
    \REV{
    Now we choose 
    \begin{equation}
        \beta = 1-x, \quad x = \mathcal{O}\left( \frac{1}{\log\log\left( \frac{\|u_0\|+\|b\|_{L^1}}{\|u(T)\|\epsilon} \right)} \right). 
    \end{equation}
    Then we have 
    \begin{equation}
        \frac{1}{(1-\beta)^{1+1/\beta}} = \left(\frac{1}{x}\right)^{1+\frac{1}{1-x}} \leq \frac{1}{x^2} \left(\frac{1}{x}\right)^{\mathcal{O}(x)} = \mathcal{O}\left( \frac{1}{x^2} \right) = \mathcal{O}\left( \left(\log\log\left( \frac{\|u_0\|+\|b\|_{L^1}}{\|u(T)\|\epsilon} \right)\right)^2 \right), 
    \end{equation}
    and (letting $\mathcal{L} = \log\left( \frac{\|u_0\|+\|b\|_{L^1}}{\|u(T)\|\epsilon} \right)$)
    \begin{align}
        \left(\log\left(\frac{1}{1-\beta}\frac{\|u_0\|+\|b\|_{L^1}}{\|u(T)\|\epsilon}\right)\right)^{1+1/\beta}
        & = \mathcal{O}\left( \left(\log\left( \frac{\|u_0\|+\|b\|_{L^1}}{\|u(T)\|\epsilon}\right)\right)^{2+\mathcal{O}(x)} \right) \\
        & = \mathcal{O}\left( \left(\log\left( \frac{\|u_0\|+\|b\|_{L^1}}{\|u(T)\|\epsilon}\right)\right)^{2} \mathcal{L}^{\mathcal{O}(1/\log\mathcal{L})} \right) \\
        & = \mathcal{O}\left( \left(\log\left( \frac{\|u_0\|+\|b\|_{L^1}}{\|u(T)\|\epsilon}\right)\right)^{2} \right).  
    \end{align}
    Plugging these two estimates back into~\cref{eqn:proof_adaptive_eq1} yields the claimed total query complexity. 
    }
    
    \REV{For the state preparation cost, notice that in each run of the algorithm we only use $\mathcal{O}(1)$ queries, so the scaling is simply the number of repetitions, which is 
    \begin{equation}
        \mathcal{O}\left( \frac{\|c\|_1(\|u_0\|+\|c'\|_1)}{\|u(T)\|} \right) = \mathcal{O}\left( \frac{1}{1-\beta} \frac{\|u_0\|+\|b\|_{L^1}}{\|u(T)\|} \right) = \mathcal{O}\left( \frac{\|u_0\|+\|b\|_{L^1}}{\|u(T)\|} \log\log\left( \frac{\|u_0\|+\|b\|_{L^1}}{\|u(T)\|\epsilon} \right) \right). 
    \end{equation}
    }
\end{proof}

\section{Hybrid implementation}\label{sec:hybrid}

We have discussed how to use the LCHS formula for general inhomogeneous ODEs with a quantum LCU subroutine. 
\REV{The algorithm achieves near-optimal total query complexity with respect to all parameters and optimal state preparation cost. 
However, to implement the algorithm, we use multiple additional ancilla qubits and coherently controlled Hamiltonian simulation in the construction of the select oracle. 
Though feasible on an ideal quantum computer, both of these ingredients pose computational challenges in the early fault-tolerant regime in which we have few logical qubits and limited capability to handle complicated control structures.}

\REV{For an early fault-tolerant implementation, as discussed in~\cite{AnLiuLin2023}, we may also consider a hybrid quantum-classical algorithm based on importance sampling.}
With the improved LCHS, we can follow the same hybrid approach as discussed in~\cite{AnLiuLin2023}; the only nuance is that now the coefficients $c_j$'s are not real positive, so we need to further split them into real and imaginary parts, estimate two summations separately, and take the sign of the real and imaginary parts into consideration. 
For completeness, in this section we present this hybrid algorithm and describe its sample complexity. 

For simplicity, here we only consider the homogeneous case and suppose that $\|u_0\| = 1$. 
\REV{Unlike the quantum algorithm, the goal in the hybrid algorithm is to estimate the (possibly unnormalized) observable $u(T)^{\dagger}Ou(T)$, where $O$ is a known Hermitian matrix. 
We assume access to a block-encoding $U_O$ of $O$ with block-encoding factor $\alpha_O \geq \|O\|$. }
By~\cref{eqn:LCHS_LCU_composite}, we can approximate the solution as 
\begin{equation}
    u(T) = \mathcal{T} e^{-\int_0^T A(s) \ud s} \ket{u_0} \approx \sum_{j=0}^{M-1} c_j U(T,k_j) \ket{u_0}. 
\end{equation}
The observable $u(T)^{\dagger} O u(T)$ can be written as 
\begin{align}
    u(T)^{\dagger} O u(T) &\approx \sum_{l=0}^{M-1}\sum_{j=0}^{M-1} \overline{c}_l c_j \braket{ u_0 |U(T,k_l)^{\dagger} O U(T,k_j)| u_0 } \\
    & = \sum_{j,l} \abs{\Re(\overline{c}_l c_j)} \sgn(\Re(\overline{c}_l c_j)) \braket{ u_0 |U(T,k_l)^{\dagger} O U(T,k_j)| u_0 } \nonumber \\
    & \quad + i \sum_{j,l} \abs{\Im(\overline{c}_l c_j) }  \sgn(\Im(\overline{c}_l c_j)) \braket{ u_0 |U(T,k_l)^{\dagger} O U(T,k_j)| u_0 }. 
\end{align}
Here $\sgn$ refers to the sign function of a real number. 

\REV{The idea of the hybrid algorithm is to estimate the observable $\braket{ u_0 |U(T,k_l)^{\dagger} O U(T,k_j)| u_0 }$ independently of the index on a quantum computer, and then linearly combine them on a classical computer.
The key steps of the algorithm are as follows: 
\begin{enumerate}
    \item Classically sample $(j,l)$ with probability $\propto \abs{\Re(\overline{c}_l c_j)}/\Gamma$ and $(j',l')$ with probability $\propto \abs{\Im(\overline{c}_{l'} c_{j'})}/\Gamma'$, where $\Gamma = \sum_{j,l} \abs{\Re(\overline{c}_l c_j)}$ and $\Gamma' = \sum_{j',l'} \abs{\Im(\overline{c}_{l'} c_{j'}) } $. 
    \item For sampled indices, estimate $o_{j,l} = \braket{ u_0 |U(T,k_l)^{\dagger} O U(T,k_j)| u_0 }$ and $o_{j',l'} = \braket{ u_0 |U(T,k_{l'})^{\dagger} O U(T,k_{j'})| u_0 }$ on a quantum computer, by the Hadamard test for non-unitary matrices~\cite{TongAnWiebe2021} and amplitude estimation~\cite{BrassardHoyerMoscaEtAl2002}. 
    \item Classically compute $\sigma$ and $\sigma'$, which are the averages of sampled $\sgn(\Re(\overline{c}_l c_j)) o_{j,l}$ and $ \sgn(\Im(\overline{c}_l c_j)) o_{j',l'}$, respectively. 
    \item Output $\Gamma \sigma + i \Gamma' \sigma'$ as an estimate of $u(T)^{\dagger} O u(T)$. 
\end{enumerate}
}

\REV{Compared with the quantum algorithm, this hybrid implementation has the advantage of only using one additional ancilla qubit in each quantum circuit and avoiding $k$-controlled implementation of $U(T,k)$. However, its overall query complexity becomes worse.}
The sample complexity can be characterized as follows (rephrasing~\cite[Theorem 9]{AnLiuLin2023}). 

\begin{prop}\label{prop:LCHS_hybrid_complexity}
    Suppose that $O_{u}$ is a state preparation oracle for $\ket{u_0}$, $U_O$ is a block-encoding of $O$ with block-encoding factor $\alpha_O \geq \norm{O}$, and $\widetilde{U}(T,k)$ is a quantum circuit that approximates $U(T,k)$ with error bounded by $\epsilon_{\text{HS}}$ for any $k$. 
    Then $u(t)^{*} O u(t)$ can be estimated with error at most $\epsilon$ and probability at least $1-\delta$ by choosing $\epsilon_{\text{HS}} = \mathcal{O}(\epsilon/\norm{O})$. Furthermore, 
    \begin{enumerate}
        \item the number of the  samples is
        \begin{equation}
            \mathcal{O}\left( \frac{ \norm{O}^2 }{\epsilon^2} \log\left(\frac{1}{\delta}\right) \right), 
        \end{equation}
        and
        \item each circuit with sampling value $(k,k')$ uses   
        \begin{equation}
            \mathcal{O}\left(\frac{\alpha_O}{\epsilon} \log\left(\frac{\alpha_O}{\epsilon}\right)\log\left(\frac{\norm{O} }{\delta \epsilon} \right)\right) 
        \end{equation}
        queries to $O_{u}$ and controlled versions of $U_O$ and $\widetilde{U}(T,k)$. 
    \end{enumerate}
\end{prop}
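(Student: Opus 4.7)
The plan is to follow the same three-layer error decomposition as in the proof of Theorem 9 of~\cite{AnLiuLin2023}, adapted to accommodate the complex-valued coefficients $c_j$ from the improved LCHS kernel. The total error in estimating $u(T)^* O u(T)$ splits into a deterministic discretization-plus-simulation error, a statistical sampling error from importance sampling, and a per-sample amplitude-estimation error; the task is to balance these so that each contributes at most $\epsilon/3$ and the overall failure probability is bounded by $\delta$ via the union bound.

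For the deterministic part, I would combine \cref{lem:truncation,lem:quadrature,lem:coefficient_1norm} to select $M$ so that $\bigl\|u(T) - \sum_j c_j U(T,k_j)\ket{u_0}\bigr\| = \mathcal{O}(\epsilon/\|O\|)$, which by $|v^*Ov - w^*Ow|\le \|O\|(\|v\|+\|w\|)\|v-w\|$ bounds the induced error on $u(T)^*Ou(T)$ by $\mathcal{O}(\epsilon)$. Then using $\|c\|_1 = \mathcal{O}(1)$ from \cref{lem:coefficient_1norm}, replacing each $U(T,k_j)$ by $\widetilde{U}(T,k_j)$ introduces additional error $\mathcal{O}(\|c\|_1^2\,\|O\|\,\epsilon_{\text{HS}}) = \mathcal{O}(\|O\|\epsilon_{\text{HS}})$, motivating the choice $\epsilon_{\text{HS}} = \mathcal{O}(\epsilon/\|O\|)$.

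For the sampling step, let $C_{\Re}=\sum_{j,l}|\Re(\overline{c}_l c_j)|$ and $C_{\Im}=\sum_{j,l}|\Im(\overline{c}_l c_j)|$, both bounded by $\|c\|_1^2=\mathcal{O}(1)$. Drawing $(j,l)$ with probability $|\Re(\overline{c}_l c_j)|/C_{\Re}$ and forming the random variable $X = C_{\Re}\,\sgn(\Re(\overline{c}_l c_j))\,\widehat{a}_{j,l}$, where $\widehat{a}_{j,l}$ is the amplitude-estimated value of $\braket{u_0|\widetilde{U}(T,k_l)^\dagger O\widetilde{U}(T,k_j)|u_0}$, yields an unbiased estimator bounded by $C_{\Re}\|O\| = \mathcal{O}(\|O\|)$. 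Hoeffding's inequality then gives the claimed sample count $N=\mathcal{O}((\|O\|^2/\epsilon^2)\log(1/\delta))$ for the real part; the imaginary part is handled identically, and the required split into four sign-keyed buckets is the only substantive departure from~\cite{AnLiuLin2023}.

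For the per-sample quantum cost, I would use the Hadamard test for non-unitary matrices of~\cite{TongAnWiebe2021} to express each of $\Re\braket{u_0|\widetilde{U}^\dagger O\widetilde{U}|u_0}$ and its imaginary counterpart as an ancilla-measurement probability rescaled by $\alpha_O$. Amplitude estimation~\cite{BrassardHoyerMoscaEtAl2002} then produces an estimate with additive error $\mathcal{O}(\epsilon)$ on the unnormalized inner product using $\mathcal{O}(\alpha_O/\epsilon)$ queries to $O_u$, controlled $\widetilde{U}(T,k)$, and controlled $U_O$; boosting to success probability $1-\delta/N$ via a standard median-of-means wrapper contributes a factor $\mathcal{O}(\log(N/\delta))=\mathcal{O}(\log(\|O\|/(\delta\epsilon)))$, and the internal phase-estimation step of amplitude estimation contributes another $\mathcal{O}(\log(\alpha_O/\epsilon))$. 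The main obstacle will be the careful bookkeeping of how the three precision budgets (quadrature, Hamiltonian simulation, and amplitude estimation) and the two failure budgets (per-sample amplitude estimation and the overall Hoeffding concentration) compose; once these are tracked, the stated complexities follow, and the complex-coefficient modification only affects constants.
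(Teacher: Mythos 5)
Your proposal is correct and follows essentially the same route as the paper, which itself gives no standalone proof of this proposition but simply rephrases Theorem 9 of the original LCHS paper and notes that the only modification is the split of the complex coefficients $\overline{c}_l c_j$ into real and imaginary parts with their signs absorbed into the sampled estimator --- exactly the structure you lay out. Your three-way error budget (quadrature/truncation plus Hamiltonian-simulation error, Hoeffding concentration over importance samples, and per-sample Hadamard-test-plus-amplitude-estimation cost with median boosting) reproduces the stated complexities and is consistent with the paper's intended argument.
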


\bibliographystyle{unsrt}
\bibliography{LCHS}

\clearpage
\appendix

%

\section{Preliminaries}\label{app:prelim}

\subsection{Stability conditions}\label{app:stability}

Throughout this work, we make the assumption that the real part of $A(t)$, defined as $L(t) = (A(t) + A(t)^{\dagger})/2$, is positive semi-definite. 
This condition implies that the solution is non-increasing. 
Specifically, when $A(t)=A$ is time-independent, the norm of the propagator $\norm{e^{-At}}$ can be bounded by $\norm{e^{-Lt}}$.
\begin{lem}[{\cite[Theorem IX.3.1]{Bhatia1997}}]
\label{lem:expA_bound}
Assume $A=L+iH$ with $L = \frac{A+A^{\dagger}}{2}, \quad H = \frac{A-A^{\dagger}}{2i}$. Then for any $t\in \mathbb{R}$,
\begin{equation}
\norm{e^{-At}} \le \norm{e^{-Lt}}.
\end{equation}
\end{lem}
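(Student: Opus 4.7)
The plan is to prove this via the Lie--Trotter product formula combined with submultiplicativity of the operator norm. First I would write, for any real $t$,
\begin{equation}
e^{-At} = \lim_{n\to\infty}\left(e^{-Lt/n}\, e^{-iHt/n}\right)^{n},
\end{equation}
which holds because $A=L+iH$ and the Lie--Trotter formula converges in operator norm for bounded operators. Taking norms and using continuity together with submultiplicativity,
\begin{equation}
\norm{e^{-At}} = \lim_{n\to\infty}\norm{\left(e^{-Lt/n}\, e^{-iHt/n}\right)^{n}} \le \lim_{n\to\infty} \norm{e^{-Lt/n}}^{n}\,\norm{e^{-iHt/n}}^{n}.
\end{equation}

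Next I would handle the two factors separately. Since $H$ is Hermitian, $e^{-iHt/n}$ is unitary for every $n$ and every real $t$, so $\norm{e^{-iHt/n}}=1$. For the remaining factor, $L$ being Hermitian makes $e^{-Lt/n}$ Hermitian with spectrum $\{e^{-\lambda t/n}:\lambda\in\sigma(L)\}$, so its operator norm equals $\max_{\lambda\in\sigma(L)} e^{-\lambda t/n}$. Raising to the $n$th power then gives
\begin{equation}
\norm{e^{-Lt/n}}^{n} = \max_{\lambda\in\sigma(L)} e^{-\lambda t} = \norm{e^{-Lt}},
\end{equation}
where the last equality again uses that $e^{-Lt}$ is Hermitian (so norm equals spectral radius). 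Combining these observations yields $\norm{e^{-At}} \le \norm{e^{-Lt}}$.

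The only subtle point is justifying the Lie--Trotter limit and the interchange of limit with norm, but both are standard: the convergence in operator norm is a textbook fact for bounded generators (with explicit $O(1/n)$ rate from the Baker--Campbell--Hausdorff expansion), and the operator norm is continuous. I do not anticipate a serious obstacle; the argument is essentially algebraic once the Trotter decomposition is invoked. An alternative route, if desired, would be to derive the differential inequality $\tfrac{d}{dt}\norm{e^{-At}\psi}^{2} = -2\braket{e^{-At}\psi|L|e^{-At}\psi}$ and compare it with the analogous identity for $e^{-Lt}\psi$, but the Trotter argument above is shorter and cleaner.
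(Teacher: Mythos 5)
Your proposal is correct and follows essentially the same route as the paper: a Trotter splitting of $e^{-At}$ into $e^{-Lt/n}e^{-iHt/n}$ factors, submultiplicativity of the operator norm, unitarity of $e^{-iHt/n}$, and the identity $\norm{e^{-Lt/n}}^{n}=\norm{e^{-Lt}}$ for Hermitian $L$. The only cosmetic difference is that the paper tracks the $\Or(t^2/r)$ Trotter error explicitly rather than invoking the limit form of the product formula directly.
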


\begin{proof}
For some $r>0$, we use the Trotter product formula to give
\begin{align}
\norm{e^{-At}}=&\norm{(e^{-A t/r})^r}\\
=&\norm{\left(e^{-Lt/r}e^{-iHt/r}+\Or(t^2/r^2)\right)^r}\\
\le & \norm{\left(e^{-Lt/r}e^{-i Ht/r}\right)^r}+\Or(t^2/r)\\
\le & \norm{e^{-Lt/r}}^r \norm{e^{-iHt/r}}^r+\Or(t^2/r)\\
=& \norm{e^{-Lt}}+\Or(t^2/r).
\end{align}
Here we have used that $\norm{e^{-iHt/r}}=1$, and $\norm{e^{-Lt/r}}^r=\norm{\left(e^{-Lt/r}\right)^r}=\norm{e^{-Lt}}$. The lemma follows by taking $r\to \infty$.
\end{proof}

In particular, when $L\succeq 0$ and $t>0$, $\norm{e^{-At}}\le 1$. This means that when $b=0$, the norm of solution $\norm{u(t)}=\norm{e^{-At}u_0}$ is non-increasing. This result can be generalized to the time-dependent setting.

\begin{lem} \label{lem:expAt_bound}
Let $A(t)$ be decomposed according to~\cref{eqn:A_cartesian_1,eqn:A_cartesian_2}. If $L(s)\succeq 0$ for all $0\le s\le t$, then
\begin{equation}
\norm{\mathcal{T}e^{-\int_0^t A(s) \ud s}}\le 1.
\end{equation}
\end{lem}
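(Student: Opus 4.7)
The plan is to reduce the time-dependent statement to the time-independent estimate already established in \cref{lem:expA_bound}, using a product (Trotter-style) decomposition that matches the proof strategy used for the preceding lemma. Specifically, I will partition $[0,t]$ into $r$ equal subintervals $[t_{j-1},t_j]$ with $t_j = jt/r$, and use the identity
\begin{equation}
\mathcal{T}e^{-\int_0^t A(s)\ud s} = \mathcal{T}e^{-\int_{t_{r-1}}^{t_r} A(s)\ud s}\cdots \mathcal{T}e^{-\int_{t_0}^{t_1} A(s)\ud s}.
\end{equation}

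On each subinterval I will replace the short-time propagator by a frozen-coefficient exponential: picking any sample point $t_j^{*}\in[t_{j-1},t_j]$, standard Dyson-series estimates give
\begin{equation}
\mathcal{T}e^{-\int_{t_{j-1}}^{t_j} A(s)\ud s} = e^{-A(t_j^{*})(t/r)} + \Or(t^2/r^2),
\end{equation}
with a constant depending on $\sup_{s\le t}\|A(s)\|$ and its time-modulus of continuity. Since $L(t_j^{*})\succeq 0$ by hypothesis, \cref{lem:expA_bound} yields $\|e^{-A(t_j^{*})(t/r)}\|\le \|e^{-L(t_j^{*})(t/r)}\|\le 1$, so each true short-time factor has norm $\le 1+\Or(t^2/r^2)$.

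Using submultiplicativity of the operator norm and combining the $r$ factors gives
\begin{equation}
\left\|\mathcal{T}e^{-\int_0^t A(s)\ud s}\right\| \le \left(1+\Or(t^2/r^2)\right)^r = 1 + \Or(t^2/r),
\end{equation}
and letting $r\to\infty$ yields the claimed bound. The only mildly delicate step is controlling the $\Or(t^2/r^2)$ remainder uniformly in $j$; this is routine provided $A(\cdot)$ is (at least) bounded and measurable on $[0,t]$, which is implicit in the existence of the time-ordered exponential. An alternative route, if one prefers to avoid product-formula bookkeeping entirely, is an energy estimate: for $\psi(s) = \mathcal{T}e^{-\int_0^s A(\tau)\ud\tau}\psi_0$ one has $\dot\psi = -A(s)\psi$, hence $\frac{\ud}{\ud s}\|\psi(s)\|^2 = -2\langle \psi,L(s)\psi\rangle \le 0$, so $\|\psi(t)\|\le \|\psi_0\|$ for every $\psi_0$, which is exactly $\|\mathcal{T}e^{-\int_0^t A(s)\ud s}\|\le 1$. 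I expect the authors to prefer the Trotter-style argument to stay consistent with the preceding lemma's proof.
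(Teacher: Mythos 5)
Your argument is correct and follows essentially the same route as the paper: a uniform partition of $[0,t]$, a Trotter-style product of short-time propagators each of norm $1+\Or(t^2/r^2)$, submultiplicativity, and the limit $r\to\infty$. The only difference is cosmetic --- the paper bounds each short-time factor by writing it as $e^{-\int_{t_j}^{t_{j+1}}L(s)\ud s}e^{-i\int_{t_j}^{t_{j+1}}H(s)\ud s}+\Or(t^2/r^2)$ and using $\bigl\|e^{-\int_{t_j}^{t_{j+1}}L(s)\ud s}\bigr\|\le 1$ directly (which avoids the continuity requirement your frozen-coefficient step needs), rather than invoking \cref{lem:expA_bound} at a sample point; your energy-estimate alternative is also valid and arguably the cleanest proof.
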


\begin{proof}
The proof is very similar to that of \cref{lem:expA_bound}. 
We uniformly partition the interval $[0,t]$ into $0=t_0<t_1<\cdots<t_r=t$ so that $t_{j+1}-t_j=t/r$ for $j=0,\ldots,r-1$, and use the time-dependent Trotter product formula to give
\begin{align}
\norm{\mathcal{T}e^{-\int_0^t A(s) \ud s}}&=\norm{\prod_{j=0}^{r-1}\mathcal{T}e^{-\int_{t_j}^{t_{j+1}} A(s) \ud s}}\\
&=\norm{\prod_{j=0}^{r-1}\left(e^{-\int_{t_j}^{t_{j+1}} L(s) \ud s}e^{-i\int_{t_j}^{t_{j+1}} H(s) \ud s}+\Or(t^2/r^2)\right)}\\
&\le \prod_{j=0}^{r-1}\norm{e^{-\int_{t_j}^{t_{j+1}} L(s) \ud s}}\norm{e^{-i\int_{t_j}^{t_{j+1}} H(s) \ud s}}+\Or(t^2/r)\\
&= \prod_{j=0}^{r-1}\norm{e^{-\int_{t_j}^{t_{j+1}} L(s) \ud s}}+\Or(t^2/r)\\
&\le 1 +\Or(t^2/r).
\end{align}
The lemma follows by taking $r\to \infty$.
\end{proof}

\subsection{Block-encoding}\label{app:block_encoding}

Block-encoding is a framework of representing a general matrix on quantum devices. 
We first give its formal definition. 
\begin{defn}[Block-encoding]
    Let $M$ be a $2^{n}\times 2^n$ matrix. 
    Then a $2^{n+a}\times 2^{n+a}$ dimensional unitary $U_M$ is called the block-encoding of $M$, if $M/\alpha_M = \bra{0}^{\otimes a} U_M \ket{0}^{\otimes a}$. 
    Here $\alpha_M$ is called the block-encoding factor and should satisfy $\alpha_M \geq \|M\|$. 
\end{defn}
Intuitively, the block-encoding $U_M$ of $M$ encodes the rescaled matrix $M/\alpha_M$ in its upper left block as 
\begin{equation}
    U_M = \left( \begin{array}{cc}
        \frac{M}{\alpha_M} & * \\
        * & *
    \end{array} \right). 
\end{equation}
Block-encoding is a powerful model for quantum algorithm design as it enables us to perform matrix operations beyond unitary matrices. 
While it remains an open question how to construct the concrete circuit for the block-encoding of an arbitrarily given matrix, there have been substantial efforts on implementing the block-encoding of specific matrices of practical interest~\cite{GilyenSuLowEtAl2019,CladerDalzellStamatopoulosEtAl2022,NguyenKianiLloyd2022,camps2023explicit,sunderhauf2023blockencoding}. 
In this work, we simply assume the query access to the block-encoding of certain matrices as our input model.

\subsection{Linear combination of unitaries}\label{app:LCU}

LCU is a powerful quantum primitive which, as its name suggests, implements a block-encoding of linear combination of unitary operators. 
Here we follow~\cite{GilyenSuLowEtAl2019} and present an improved version of LCU that is applicable to possibly complex coefficients. 

Specifically, the goal of LCU is to implement the operator $\sum_{j=0}^{J-1} c_j U_j$, where $U_j$'s are unitary operators and $c_j$'s are complex numbers. 
For the coefficients, we assume access to a pair of state preparation oracles $(O_{c,l}, O_{c,r})$ acting as 
\begin{align}
    O_{c,l}&: \ket{0} \rightarrow \frac{1}{\sqrt{\|c\|_1}} \sum_{j=0}^{J-1} \overline{\sqrt{c_j}} \ket{j}, \\
    O_{c,r}&: \ket{0} \rightarrow \frac{1}{\sqrt{\|c\|_1}} \sum_{j=0}^{J-1} \sqrt{c_j} \ket{j}. 
\end{align}
Here $\sqrt{z}$ refers to the principal value of the square root of $z$, $\overline{z}$ denotes the conjugate of $z$, and $\|c\|_1$ denotes the $1$-norm of the vector $c = (c_0,\cdots,c_{J-1})$. 
For the unitaries, we assume so-called the \emph{select oracle} $\mathop{\mathrm{SEL}}$ as 
\begin{equation}
    \mathop{\mathrm{SEL}} = \sum_{j=0}^{J-1} \ket{j}\bra{j} \otimes U_j. 
\end{equation}
As shown in~\cite[Lemma 52]{GilyenSuLowEtAl2019}, we may implement $\sum_{j=0}^{J-1} c_j U_j$ as follows. 
\begin{lem}[LCU]
    Let $W =  (O_{c,l}^{\dagger} \otimes I) \mathop{\mathrm{SEL}} (O_{c,r}\otimes I)$. Then 
    \begin{equation}
        W \ket{0}\ket{\psi} = \frac{1}{\|c\|_1} \ket{0} \left(\sum_{j=0}^{J-1} c_j U_j\right) \ket{\psi} + \ket{\perp}. 
    \end{equation}
    Here $\ket{\perp}$ is an unnormalized vector such that $(\ket{0}\bra{0} \otimes I)\ket{\perp} = 0$. 
\end{lem}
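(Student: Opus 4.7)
The plan is a direct computation following the circuit action on $\ket{0}\ket{\psi}$ step by step, together with a short calculation verifying that the $\ket{0}$-component in the ancilla register extracts exactly $\sum_j c_j U_j \ket{\psi}/\|c\|_1$.

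First, I would apply $O_{c,r}\otimes I$ to the input state to get
\begin{equation}
(O_{c,r}\otimes I)\ket{0}\ket{\psi} = \frac{1}{\sqrt{\|c\|_1}} \sum_{j=0}^{J-1} \sqrt{c_j}\,\ket{j}\ket{\psi}.
\end{equation}
Next, applying $\mathop{\mathrm{SEL}} = \sum_j \ket{j}\bra{j}\otimes U_j$ produces
\begin{equation}
\mathop{\mathrm{SEL}}\,(O_{c,r}\otimes I)\ket{0}\ket{\psi} = \frac{1}{\sqrt{\|c\|_1}} \sum_{j=0}^{J-1} \sqrt{c_j}\,\ket{j}\, U_j\ket{\psi}.
\end{equation}
So far everything is just unfolding the definitions of the oracles.

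The only substantive step is to extract the $\ket{0}$-component after applying $O_{c,l}^{\dagger}\otimes I$. I would compute this by taking the inner product with $\bra{0}\otimes I$ and using the identity $\bra{0}O_{c,l}^{\dagger} = (O_{c,l}\ket{0})^{\dagger}$. From the definition of $O_{c,l}$, this gives
\begin{equation}
\bra{0}O_{c,l}^{\dagger} = \frac{1}{\sqrt{\|c\|_1}} \sum_{j'=0}^{J-1} \overline{\overline{\sqrt{c_{j'}}}}\,\bra{j'} = \frac{1}{\sqrt{\|c\|_1}} \sum_{j'=0}^{J-1} \sqrt{c_{j'}}\,\bra{j'},
\end{equation}
where I use $\overline{\overline{z}}=z$ for the principal square root. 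Then
\begin{equation}
(\bra{0}\otimes I)\,W\ket{0}\ket{\psi} = \frac{1}{\|c\|_1} \sum_{j,j'} \sqrt{c_{j'}}\sqrt{c_j}\braket{j'|j}\,U_j\ket{\psi} = \frac{1}{\|c\|_1} \sum_{j=0}^{J-1} c_j\, U_j\ket{\psi},
\end{equation}
where I used $\sqrt{c_{j'}}\sqrt{c_j}$ under the Kronecker delta equals $c_j$ (since $(\sqrt{c_j})^2 = c_j$ by the definition of principal square root).

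Finally, I would write $W\ket{0}\ket{\psi} = \ket{0}\otimes \bigl(\tfrac{1}{\|c\|_1}\sum_j c_j U_j\ket{\psi}\bigr) + \ket{\perp}$ simply by decomposing the output along the $\ket{0}/\ket{0}^\perp$ split of the ancilla register; the component I just computed gives the $\ket{0}$ part, and $\ket{\perp}$ is defined to be the remainder, which automatically satisfies $(\ket{0}\bra{0}\otimes I)\ket{\perp}=0$. There is no real obstacle in this proof; the only point requiring care is the conjugation convention in $O_{c,l}$ (needing $\overline{\sqrt{c_j}}$ rather than $\sqrt{c_j}$) so that after the Hermitian conjugate in $O_{c,l}^\dagger$ one recovers $\sqrt{c_j}$ and combines with the $\sqrt{c_j}$ from $O_{c,r}$ to produce $c_j$ (rather than $|c_j|$), which is what makes the construction work for complex coefficients.
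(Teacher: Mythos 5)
Your computation is correct: the paper itself does not prove this lemma but defers to [GilyenSuLowEtAl2019, Lemma 52], and your direct unfolding of the oracles is exactly the standard argument, including the key point that the $\overline{\sqrt{c_j}}$ convention in $O_{c,l}$ combines with the $\sqrt{c_j}$ from $O_{c,r}$ to yield $c_j$ (not $|c_j|$) after taking the adjoint. Nothing is missing.
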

According to the definition of block-encoding, $W$ is indeed a block-encoding of $\sum_{j=0}^{J-1} c_j U_j$ with block-encoding factor $\|c\|_1$. 
The complexity of LCU is mainly affected by two factors. 
First, constructing the operator $W$ requires one query to each oracle of $O_{c,l},O_{c,r}$, and $\mathop{\mathrm{SEL}}$, and constructing the select oracle usually dominates the computational cost. 
Second, to extract the information of $\sum_{j=0}^{J-1} c_j U_j$, we need to post-select the ancilla register on $0$. 
For a constant-level success probability, we need to run $\mathcal{O}(\|c\|_1)$ rounds of the amplitude amplification, and thus the operator $W$ needs to be implemented for $\mathcal{O}(\|c\|_1)$ times.

\subsection{Numerical quadrature}\label{app:quadrature}

Numerical quadrature is used to numerically approximate the integral $\int_a^b g(x) \ud x$. 
In general, it takes the form as a weighted summation of the function evaluated at a finite set of nodes $x_j\in [a,b]$, i.e., 
\begin{equation}
    \int_a^b g(x) \approx \sum_{j=0}^{m-1} w_j g(x_j). 
\end{equation}
The Riemann sum, one of the simplest quadrature rules, calculates the average of a function evaluated at equidistant nodes. To achieve better convergence in this work, we employ the composite Gaussian quadrature for discretizing the integrals.

We first state the Gaussian quadrature for integral over $[-1,1]$. 
Let $P_m(x)$ denote the Legendre polynomial of degree $m$. 
In Gaussian quadrature with $m$ nodes, we choose the nodes $x_j$ to be the roots of $P_m(x)$, and the weights $w_j = \frac{2}{(1-x_j^2) (P_m'(x_j))^2 }$. 
Gaussian quadrature becomes exact if $g(x)$ is a polynomial of degree at most $2m-1$, and thus it exhibits high-order convergence if the integrand can be well approximated by polynomials. 

For integral over a long interval $[a,b]$, we may use the composite rule that we divide the interval $[a,b]$ into $K$ segments $[a+(k-1)h, a+kh]$, where $h = \frac{b-a}{K}$ is the step size. 
We use the Gaussian quadrature to approximate the integral on each short interval $[a+(k-1)h, a+kh]$, and sum them up to approximate the integral on the entire interval $[a,b]$. 
Notice that the Gaussian quadrature for each short interval $[a+(k-1)h, a+kh]$ can be constructed from that on $[-1,1]$ simply by the change of variable, resulting in the change of weights $w_j \rightarrow \frac{h}{2}w_j$ and nodes $x_j \rightarrow \frac{h}{2} x_j + a + \frac{2k-1}{2}h$.

\section{Related works on Gibbs state preparation}\label{app:Gibbs}

In this section we summarize some previous quantum algorithms for Gibbs state preparation and compare them with our LCHS algorithm. 
A summary is presented in~\cref{tab:Gibbs}. 
\REV{We note that the approaches in~\cite{BerryChildsOstranderEtAl2017,Krovi2022} (and other quantum ODE solvers based on linear systems of equations) can also solve the Gibbs state preparation problem, by treating~\cref{eqn:purified_gibbs_state} as an ODE problem. 
However, they would have worse query complexity to the state preparation oracle as in the general case that we have discussed in the main text, so we exclude them from the comparison here. 
Additionally, the linear-system-based approaches first prepare a history state that encodes unwanted intermediate states, while LCHS is a more direct approach that avoids this history state.}

\begin{table}[t]
    \renewcommand{\arraystretch}{2}
    \centering
    \scalebox{0.95}{
    \begin{tabular}{c|c|c|c}\hline\hline
        \textbf{Method/Work} & \textbf{Assumption} & \textbf{Query access} & \textbf{Complexity} \\\hline
        Phase estimation~\cite{poulin2009sampling} 
        & $L \succ 0$ 
        & Hamiltonian simulation of $L$
        & $\widetilde{\mathcal{O}}\left( \sqrt{\frac{N}{Z_{\gamma}}} \gamma \alpha_L / \epsilon \right)$ \\\hline
        QSVT~\cite{GilyenSuLowEtAl2019} 
        & $L \succeq 0$ 
        & Block-encoding of $I-L/\alpha_L$
        & $\widetilde{\mathcal{O}}\left( \sqrt{\frac{N}{Z_{\gamma}}} \sqrt{\gamma \alpha_L} \left(\log\left(\frac{1}{\epsilon}\right)\right)^2  \right)$ \\\hline
        QSVT~\cite{GilyenSuLowEtAl2019,AnLiuWangEtAl2022} 
        & $L \succeq 0$
        & Block-encoding of $\sqrt{L}$
        & $\widetilde{\mathcal{O}}\left( \sqrt{\frac{N}{Z_{\gamma}}} \sqrt{\gamma \alpha_L \log\left(\frac{1}{\epsilon}\right)}  \right)$ \\\hline
        LCU~\cite{vanApeldoorn2020quantum} 
        & $L \succeq I$ 
        & Hamiltonian simulation of $L$
        & $\widetilde{\mathcal{O}}\left( \sqrt{\frac{N}{Z_{\gamma}}} \gamma \alpha_L \log\left(\frac{1}{\epsilon}\right) \right)$  \\\hline
LCU~\cite{ChowdhurySomma2016,ApersChakrabortyNovoEtAl2022}
        & $L \succeq 0$
        & Hamiltonian simulation of $\sqrt{L}$
        & $\widetilde{\mathcal{O}}\left( \sqrt{\frac{N}{Z_{\gamma}}} \sqrt{\gamma \alpha_L \log\left(\frac{1}{\epsilon}\right)}  \right)$ \\\hline
        \multirow{2}{4em}{LCU~\cite{HolmesMuraleedharanSommaEtAl2022}} 
        & general 
        & Hamiltonian simulation of $L$
        & $\widetilde{\mathcal{O}}\left( e^{\left( \frac{3}{\epsilon}+\frac{1}{2}\right) \gamma \alpha_L} \right)$ \\
        \cline{2-4} 
         & $L \succeq 0$ 
         & Hamiltonian simulation of $L$
         & $\widetilde{\mathcal{O}}\left( \sqrt{\frac{N}{Z_{\gamma}}}  (\gamma\alpha_L)^{3/2} e^{\sqrt{\log(1/\epsilon)}} \right)$ \\\hline 
        This work (\cref{cor:gibbs_state}) 
        & $L \succeq 0$ 
        & Hamiltonian simulation of $L$
        & $\widetilde{\mathcal{O}}\left( \sqrt{\frac{N}{Z_{\gamma}}} \gamma \alpha_L \left(\log\left(\frac{1}{\epsilon}\right)\right)^{1/\beta} \right)$ \\\hline\hline 
    \end{tabular}}
    \caption{Comparison among the improved LCHS and previous methods for Gibbs state preparation. Here $L$ is the Hamiltonian, $\alpha_L \geq \|L\|$ is the normalization factor of $L$, $\gamma$ is the inverse temperature, $N$ is the dimension of $L$, $Z_{\gamma}$ is the partition function, and $\epsilon$ is the allowed error. 
    Note that since~\cite{vanApeldoorn2020quantum} assumes $L \succeq I$, the post-selection factor $\sqrt{N/Z_{\gamma}}$ scales as $\mathcal{O}(\sqrt{N} e^{\gamma/2} )$ in the best case. 
    }
    \label{tab:Gibbs}
\end{table}

References~\cite{poulin2009sampling,vanApeldoorn2020quantum} propose quantum algorithms for Gibbs state preparation by applying the operator $e^{-\gamma L/2}$ to the maximally entangled state. 
The exponential operator is implemented by quantum phase estimation in~\cite{poulin2009sampling}, and by LCU in~\cite{vanApeldoorn2020quantum}. 
Both works require the Hamiltonian $L$ to be positive definite and efficiently simulated, and scale linearly in the inverse time $\gamma$ and the spectral norm $\alpha_L$ of the Hamiltonian. 
As a comparison, our algorithm works under a weaker assumption that $L$ only needs to be positive semi-definite and achieves comparable complexity. 

After these two early works, there have been several improved Gibbs state preparation algorithms that achieve sublinear scalings in $\alpha_L$ and/or $\gamma$. 
For positive semi-definite Hamiltonians, reference~\cite{GilyenSuLowEtAl2019} gives an algorithm with quadratic improvement in terms of $\gamma$ and $\alpha_L$. 
The algorithm assumes the block-encoding of $I-L/\alpha_L$, and implements $e^{-\gamma L} = e^{-\gamma\alpha_L (I-(I-L/\alpha_L))}$ using QSVT for a polynomial that approximates $e^{- \gamma\alpha_L (1-x)}$. 
The quadratic improvement arises because there is such a polynomial with degree only $\mathcal{O}(\sqrt{\gamma\alpha_L})$. 
Such a quadratic improvement is also possible when we have query access to the square root $\sqrt{L}$ of the Hamiltonian. 
In this case, we need to implement the function $e^{-\gamma x^2}$. 
This can be done by applying the Hubbard-Stratonovich transformation to write it as a linear combination of unitaries~\cite{ChowdhurySomma2016,ApersChakrabortyNovoEtAl2022}, or by directly approximating it by an $\mathcal{O}(\sqrt{\gamma})$-degree polynomial and applying QSVT~\cite{GilyenSuLowEtAl2019,AnLiuWangEtAl2022}. 
Compared to these algorithms, our LCHS algorithm has worse scaling with $\gamma$ and $\alpha_L$, but can be more versatile as we only require black-box access to the Hamiltonian simulation of $L$ itself. 

A recent work~\cite{HolmesMuraleedharanSommaEtAl2022} proposes a new Gibbs state preparation algorithm inspired by fluctuation theorems.  
This algorithm works well when the Hamiltonian can be decomposed as $L = L_0+L_1$ such that $\gamma\|L_1\|$ is moderate and the goal is to prepare the Gibbs state of $L$ from that of $L_0$. 
The algorithm also works for a general Hamiltonian $L$, although then the overall complexity scales badly as $\widetilde{\mathcal{O}}\left( e^{\left( \frac{3}{\epsilon}+\frac{1}{2}\right) \gamma \alpha_L} \right)$~\cite[Corollary 1.7]{HolmesMuraleedharanSommaEtAl2022}. 
We remark that when applied to positive semi-definite Hamiltonians, the complexity of the algorithm in~\cite{HolmesMuraleedharanSommaEtAl2022} is significantly better than its worst-case analysis. 
To see this, we restate the key technical result~\cite[Lemma 3.4]{HolmesMuraleedharanSommaEtAl2022}: letting $\Delta = \sqrt{\log(6/\epsilon)}$, $z = \gamma (w_{\max}-w_l) + 2\Delta^2$, and $\delta = 2\pi/z$, then for any $w_{\max} I \succeq L \succeq w_l I$, we have $\left\|e^{-\gamma L/2} - \sum_{j=-J}^J c_j e^{i j \delta\gamma L/2}\right\| \leq \frac{\epsilon}{3} e^{-\gamma w_l/2}$, where the coefficients satisfy $\|c\|_1 = \mathcal{O}(e^{\Delta} e^{-\gamma w_l/2})$ and the highest order is $J = \mathcal{O}(z^{3/2})$. 
In the positive semi-definite case, we may choose the parameter $w_{\max} = \alpha_L $ and $w_l = 0$. 
The overall query complexity of using LCU to implement $\sum_{j=-J}^J c_j e^{i j \delta\gamma L/2}$ becomes 
\begin{equation}
    \mathcal{O}\left( \|c\|_1 J \delta \gamma \|L\| \right) = \mathcal{O}\left(e^{\Delta} z^{3/2} z^{-1} \gamma \|L\| \right) =  \mathcal{O}\left(e^{\sqrt{\log(1/\epsilon)}} \gamma^{3/2} \|L\|^{3/2} \right). 
\end{equation}
Therefore, our LCHS algorithm has better dependence on all the parameters for positive semi-definite Hamiltonians. 

Finally, we describe two more related works that are not included in~\cref{tab:Gibbs} since their setup is somewhat different. 
Reference~\cite{TongAnWiebe2021} designs a Gibbs state preparation algorithm based on fast inversion. 
It considers a positive definite Hamiltonian $L = L_0 + L_1$ such that $\|L_1\| \ll \|L_0\|$ and $L_0^{-1}$ can be efficiently constructed. 
The algorithm removes the explicit dependence on the spectral norm $\|L\|$ and only depends linearly on $\|L_1\|$. 
Another reference~\cite{ChowdhurySommaSubasi2021} shows how to approximate the matrix function $e^{-\gamma L}$ for any $\|L\| \leq 1$ using an $\mathcal{O}(\gamma+\log(1/\epsilon))$-degree polynomial. 
However, such a polynomial cannot be efficiently implemented by QSVT, because even if we only care about the regime where $L\succeq 0$, the normalization factor of the matrix polynomial constructed by QSVT depends on the worst case over the entire interval $[-1,1]$, and in this example we encounter an exponentially large $e^{\gamma}$ normalization factor due to the bad scaling of the function $e^{-\gamma x}$ at $x = -1$.

\section{Other kernel functions}\label{app:kernel}

Here we present a few other examples of kernel functions $f(z)$ that satisfy the assumptions in~\cref{thm:LCHS_improved}. 

We have discussed that 
\begin{equation}
    f(z) = \frac{1}{\pi(1+iz)}
\end{equation}
gives the original LCHS formula in~\cref{eqn:LCHS_original}. 
It satisfies the assumptions in~\cref{thm:LCHS_improved} with $\alpha = 1$. 
An immediate generalization is to consider a high-order polynomial of $z$ on the denominator, resulting in the kernel function
\begin{equation}
    f(z) = \frac{2^{p-1}}{\pi(1+iz)^p}
\end{equation}
for a positive integer $p$. 
Noting that $1/(1+iz)^p = 1/e^{p\log(1+iz)}$, a slightly better choice is to replace $p\log(1+iz)$ by $(\log(1+iz))^p$. 
This gives the kernel function 
\begin{equation}
    f(z) = \frac{1}{2\pi e^{-(\log 2)^p} e^{(\log(1+iz))^p}}, 
\end{equation}
which decays as $1/\exp(\poly(\log(|z|)))$. 
To get even better convergence, we can try to replace the power of $\log(1+iz)$ by a power of $1+iz$, but the exponent $\beta$ must be smaller than $1$ according to the no-go result in~\cref{prop:prop_non_existence}. 
This gives the function 
\begin{equation}
    f(z) = \frac{1}{ 2\pi e^{-2^{\beta}} e^{(1+iz)^{\beta}} }, \quad 0 < \beta < 1, \label{eqn:kernel_exp_restated}
\end{equation}
which is exactly what we use in the main text. 

Now we provide more numerical illustrations of the kernel function in~\cref{eqn:kernel_exp_restated}. 
\Cref{fig:kernel} shows $\frac{f(k)}{1-ik}$ for real $k$ with different choices of $\beta$. 
From~\cref{fig:kernel}, we see that, as $\beta$ approaches $1$, the function decays slowly within the intermediate region $k\in(-30,30)$, even though the asymptotic decay rate increases with $\beta$. 
One way to understand this is that in the limit of $\beta=1$, the function $\frac{f(k)}{1-ik}$ becomes $\frac{1}{C_1(1-ik)e^{1+ik}}$, which only decays linearly for large $k$. 
Numerically, the function with a value of $\beta$ away from $0$ and $1$ has much better decay properties compared to the Cauchy distribution $1/(\pi(1+k^2))$. 
A practical choice of $\beta$ may be from $0.7$ to $0.8$, as suggested by our numerical tests in~\cref{fig:kernel_errors,fig:kernel}.

\begin{figure}
    \centering
    \includegraphics[width = 0.45\textwidth]{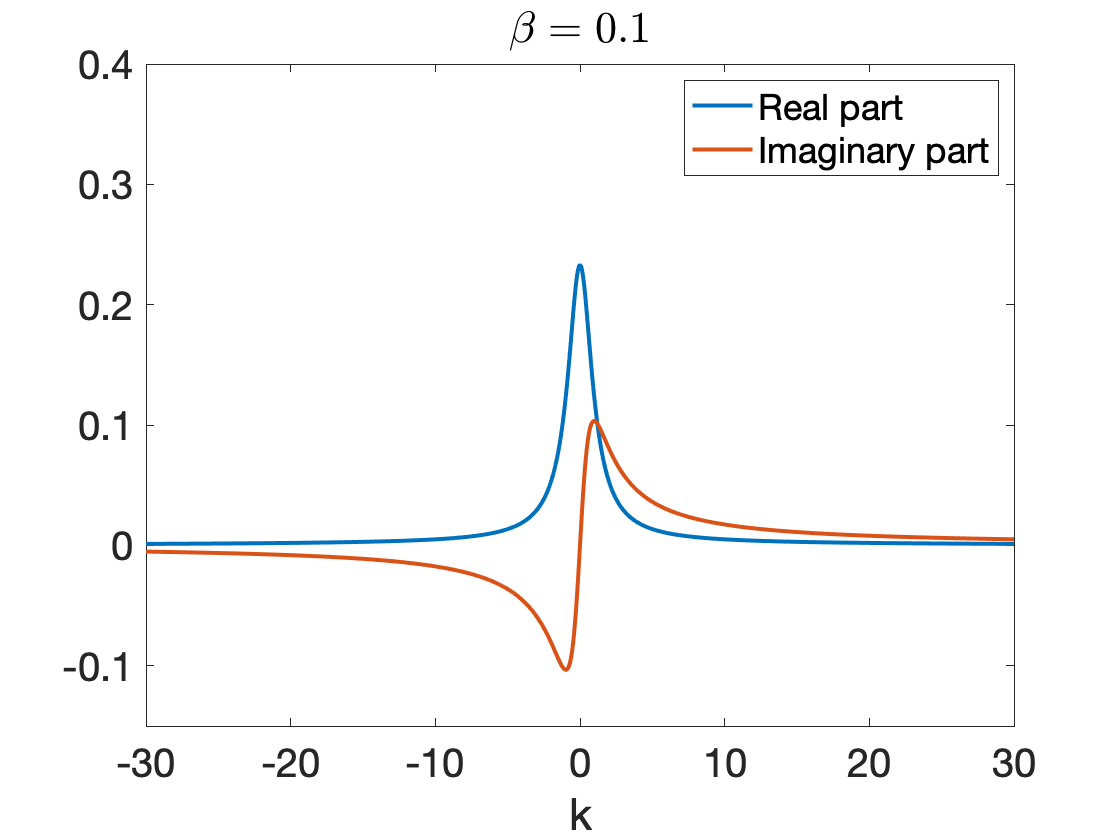} 
    \includegraphics[width = 0.45\textwidth]{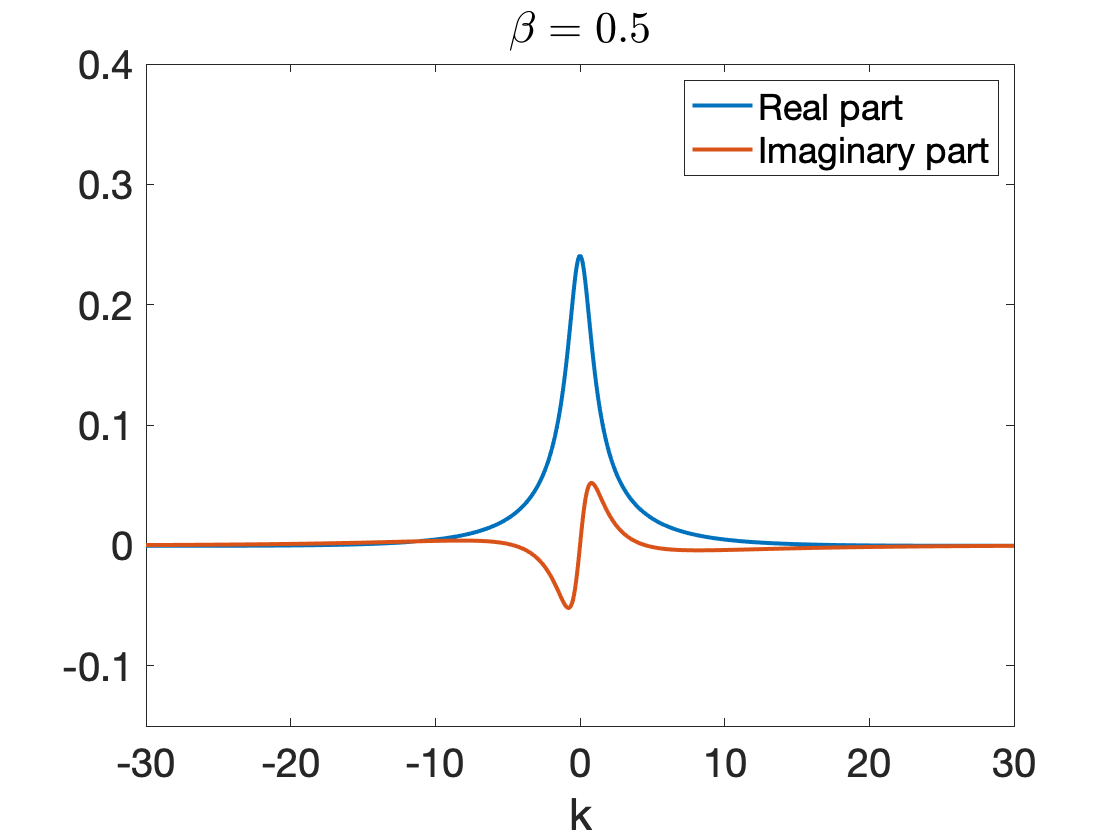} \\
    \includegraphics[width = 0.45\textwidth]{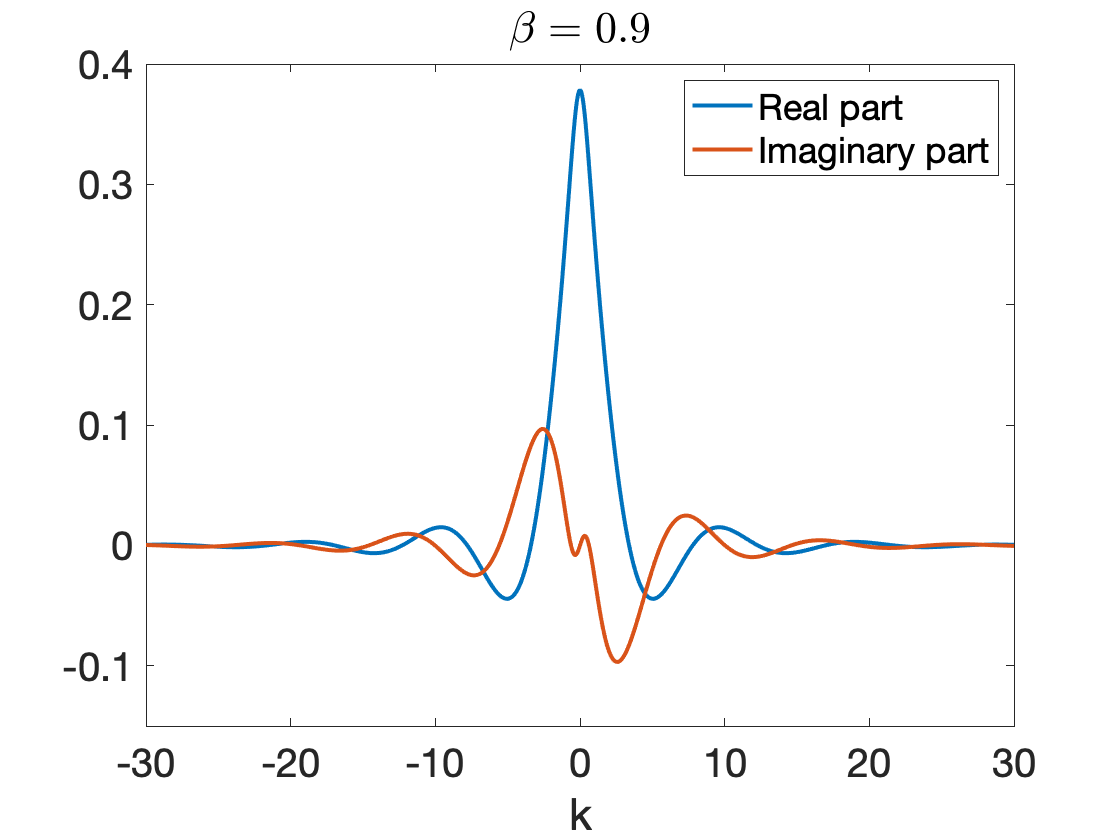} 
    \includegraphics[width = 0.45\textwidth]{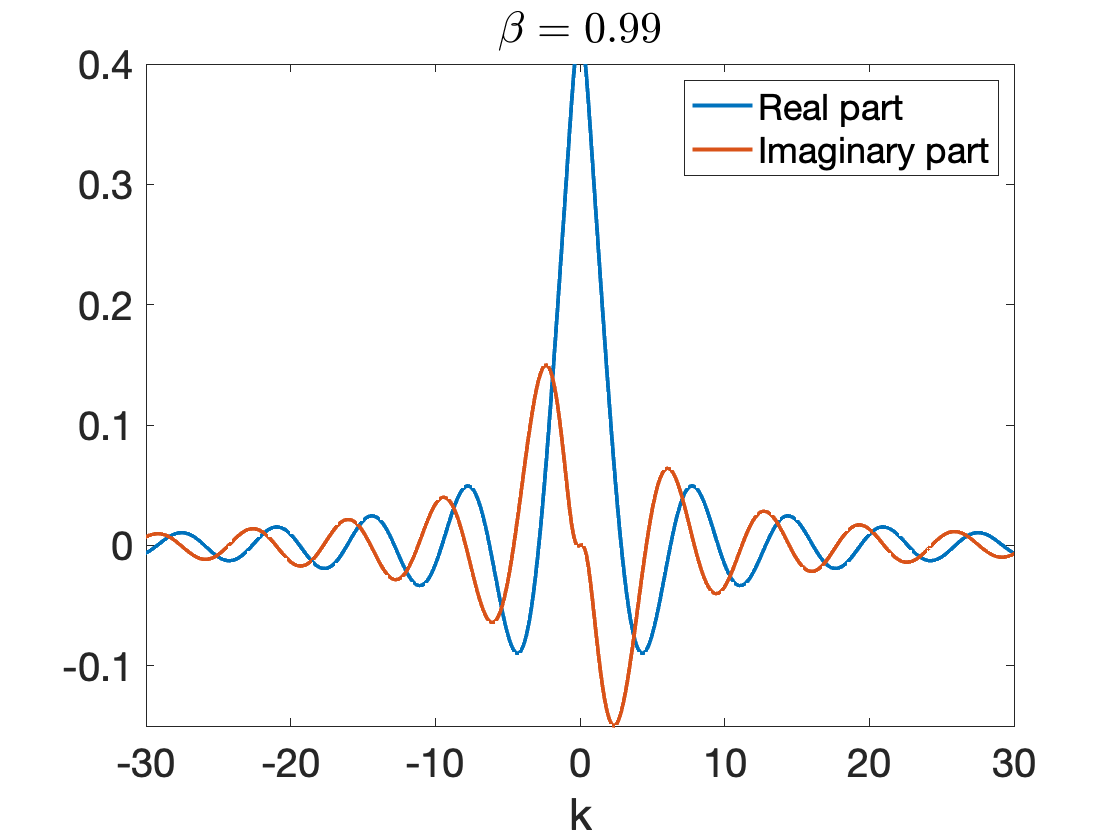} \\
    \includegraphics[width = 0.45\textwidth]{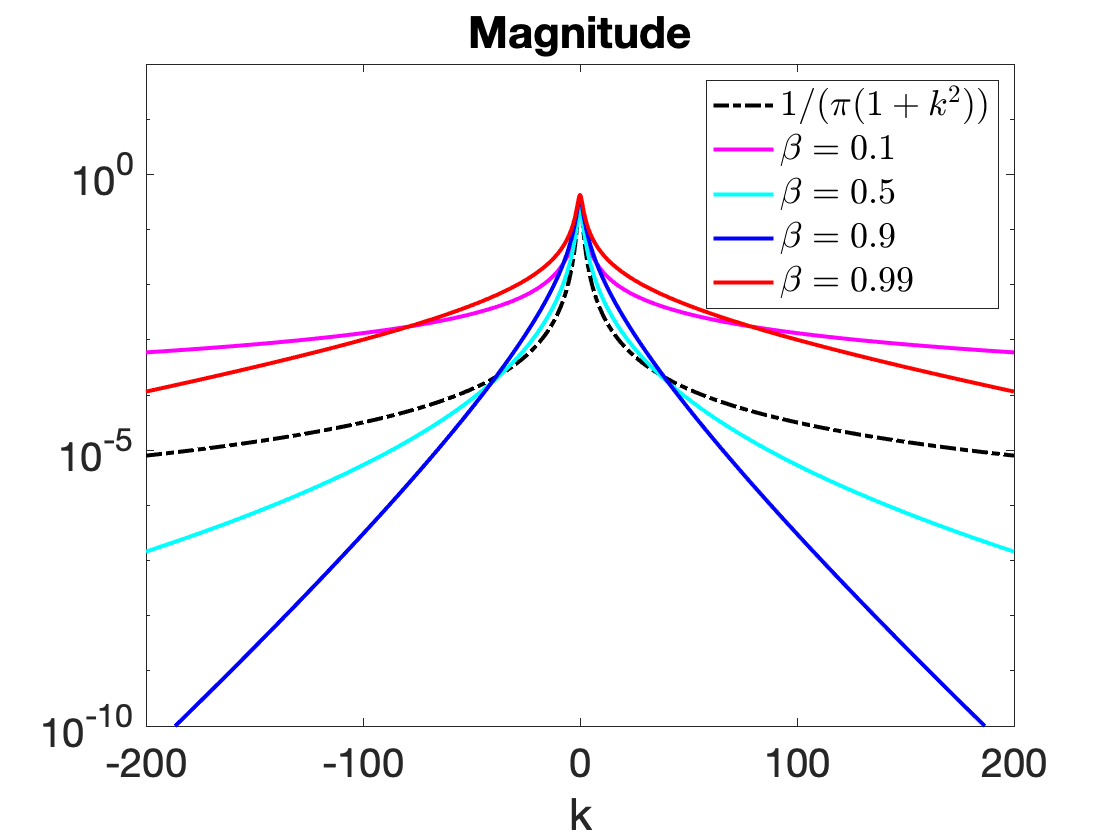} 
    \caption{Plots of $f(k)/(1-ik)$ where $f(k)$ is the kernel function defined in~\cref{eqn:kernel_exp_restated} with $\beta = 0.1, 0.5, 0.9, 0.99$. Top four: real parts and imaginary parts of the functions. Bottom: magnitudes of the functions, compared to the Cauchy distribution used in the original LCHS formula.}
    \label{fig:kernel}
\end{figure}

\section{Proof of~\texorpdfstring{\cref{lem:truncation}}{Lemma 10}}
\label{app:truncation_error}

\begin{proof}[Proof of~\cref{lem:truncation}]
    We start with 
    \begin{equation}\label{eqn:truncation_error_proof_eq1}
        \left\| \int_{\mathbb{R}} g(k) U(T,k) \ud k - \int_{-K}^{K} g(k) U(T,k) \ud k  \right\| \leq \int_{-\infty}^{-K} |g(k)| \ud k + \int_{K}^{\infty} |g(k)| \ud k. 
    \end{equation}
    We now estimate $|g(k)|$. 
    Since $(1+ik)^{\beta} = (\sqrt{k^2+1} e^{i\arctan k})^{\beta} = (k^2+1)^{\beta/2} (\cos(\beta\arctan k) + i \sin(\beta\arctan k))$, we have 
    \begin{equation}
        |g(k)| = \frac{1}{C_{\beta} \sqrt{k^2+1} e^{(k^2+1)^{\beta/2}\cos(\beta\arctan k) }  } \leq \frac{1}{C_{\beta} |k| e^{|k|^{\beta}\cos(\beta\pi/2)} }.  
    \end{equation}
    Let $B = \left\lceil 1/\beta \right\rceil$. 
    For $|k| \geq 1$, by only taking the $B$th-order term in the Taylor series of an exponential function, we have 
    \begin{equation}
        e^{\frac{1}{2}|k|^{\beta} \cos(\beta\pi/2) } \geq \frac{1}{B!} \frac{1}{2^B} |k|^{B\beta} \left(\cos(\beta\pi/2)\right)^B \geq \frac{1}{B!} \frac{1}{2^B} |k| \left(\cos(\beta\pi/2)\right)^B. 
    \end{equation}
    Then, 
    \begin{equation}
        |g(k)| \leq \frac{1}{C_{\beta}|k|} \frac{2^B B!}{\left(\cos(\beta\pi/2)\right)^B |k|} e^{-\frac{1}{2}|k|^{\beta} \cos(\beta\pi/2) } \leq \frac{2^B B!}{C_{\beta} \left(\cos(\beta\pi/2)\right)^B } \frac{1}{ |k|^2} e^{-\frac{1}{2} K^{\beta} \cos(\beta\pi/2) }. 
    \end{equation}
    Notice that in the last inequality, we keep the variable $k$ in the denominator, and replace the one in the exponential function by the truncation parameter $K$. 
    By taking the integral with respect to $k$ in~\cref{eqn:truncation_error_proof_eq1}, we obtain 
    \begin{equation}
        \left\| \int_{\mathbb{R}} g(k) U(T,k) \ud k - \int_{-K}^{K} g(k) U(T,k) \ud k  \right\| \leq \frac{2^{B+1} B !}{C_{\beta} \left(\cos(\beta\pi/2)\right)^{B} } \frac{1}{ K } e^{-\frac{1}{2}K^{\beta} \cos(\beta\pi/2) }. 
    \end{equation}

    To estimate the scaling of $K$, we absorb $\beta$-dependent factors into the big-$\mathcal{O}$ notation and bound $1/K$ by $1$ to further bound the error by $\mathcal{O}\left( e^{-\frac{1}{2}K^{\beta} \cos(\beta\pi/2) } \right) $. 
    By further bounding this by $\mathcal{O}(\epsilon)$ and solving the inequality, we may choose $K = \mathcal{O}\left( (\log(1/\epsilon))^{1/\beta}\right)$. 
\end{proof}

\section{Proof of~\texorpdfstring{\cref{lem:quadrature}}{Lemma 11}}
\label{app:quadrature_error}

\begin{proof}[Proof of~\cref{lem:quadrature}]
    For each interval $[mh_1,(m+1)h_1]$, by~\cite[p.~98]{DavisRabinowitz1984}, we can bound the quadrature error by 
    \begin{equation}\label{eqn:quadrature_error_proof_eq1}
        \left\| \int_{mh_1}^{(m+1)h_1} g(k) U(T,k) \ud k -  \sum_{q=0}^{Q-1} c_{q,m} U(T,k_{q,m}) \right\| \leq \frac{(Q!)^4 h_1^{2Q+1} }{ (2Q+1) ((2Q)!)^3 } \left\| (g(k)U(T,k))^{(2Q)} \right\|. 
    \end{equation}
    Here the superscript $(q)$ refers to the $q$th-order partial derivative with respect to $k$. 

    For notational simplicity, we omit the explicit argument dependence on $k$ and $T$ whenever the text is clear. 
    Now we estimate the high-order derivative of the function $g$ and $U$. 
    From the definition of $U$, we have 
    \begin{equation}
        \frac{\ud U(t,k)}{\ud t} = -i(kL(t)+H(t)) U(t,k), \quad U(0,k) = I. 
    \end{equation}
    Differentiating this equation with respect to $k$ for $q$ times yields 
    \begin{equation}
        \frac{\ud U^{(q)}}{\ud t} = -i(kL(t)+H(t)) U^{(q)} - iq L(t) U^{(q-1)}, \quad U^{(q)}(0,k) = 0. 
    \end{equation}
    By the variation of parameters formula\REV{~\cite[Lemma A.1]{ChildsSuTranEtAl2020}, regarding $-i(kL(t)+H(t))$ as the coefficient matrix of the homogeneous part and $- iq L(t) U^{(q-1)}$ as the inhomogeneous term, we have }
    \begin{equation}
        U^{(q)}(t,k) = \int_0^t \mathcal{T} e^{-i\int_s^t (k L(s') + H(s')) \ud s' } (-iq) L(s) U^{(q-1)} (s,k) \ud s, 
    \end{equation}
    and thus 
    \begin{equation}
        \|U^{(q)}(t,k)\| \leq T  q \max_t \|L(t)\| \max_t\|U^{(q-1)}\|. 
    \end{equation}
    By taking the maximum over time on the left-hand side and iterating this inequality, we have 
    \begin{equation}\label{eqn:quadrature_error_proof_U_est}
        \|U^{(q)}\| \leq \max_t\|U^{(q)}\| \leq (Tq\max_t \|L(t)\|)^q. 
    \end{equation}
    The derivatives of $g$ are more involved. 
    We first use the high-order chain rule (a.k.a.~Fa\`a di Bruno's formula) to compute 
    \begin{align}
        \left( e^{-(1+ik)^{\beta}} \right)^{(p)} &= \sum_{\sum_{j=1}^p jq_j = p} \frac{p!}{q_1! (1!)^{q_1} q_2! (2!)^{q_2} \cdots q_p! (p!)^{q_p}} e^{-(1+ik)^{\beta}} \prod_{j=1}^p \left( \left(-(1+ik)^{\beta}\right)^{(j)} \right)^{q_j} \\
        & = \sum_{\sum_{j=1}^p jq_j = p} \frac{p!}{q_1! (1!)^{q_1} q_2! (2!)^{q_2} \cdots q_p! (p!)^{q_p}} e^{-(1+ik)^{\beta}} \nonumber\\
        &\quad\quad\quad\quad\quad\quad \times \prod_{j=1}^p \left( -i^j\beta(\beta-1)\cdots (\beta-j+1) (1+ik)^{\beta-j} \right)^{q_j}. 
    \end{align}
    Then we can estimate 
    \begin{align}
        \left|  \left( e^{-(1+ik)^{\beta}} \right)^{(p)} \right| & \leq \sum_{\sum_{j=1}^p jq_j = p} \frac{p!}{q_1! (1!)^{q_1} q_2! (2!)^{q_2} \cdots q_p! (p!)^{q_p}}  \prod_{j=1}^p \left( (j-1)! \right)^{q_j} \\
        & \leq \sum_{\sum_{j=1}^p jq_j = p} \frac{p!}{q_1!  q_2! \cdots q_p! }  . 
    \end{align}
    Notice that, for the tuples $(q_1,\ldots,q_p)$ satisfying $\sum_{j=1}^p jq_j = p$, as each $q_j$ is within $[0,p/j]$, the number of such tuples must be smaller than $p(p/2+1)(p/3+1)\cdots(p/p+1) = \binom{2p}{p}$. 
    Therefore 
    \begin{equation}\label{eqn:bound_FdB_sum}
        \left|  \left( e^{-(1+ik)^{\beta}} \right)^{(p)} \right| \leq p! \binom{2p}{p} = \frac{(2p)!}{p!}. 
    \end{equation}
    \REV{Next, for the function $g(k) = \frac{1}{C_{\beta} (1-ik) e^{(1+ik)^{\beta}} }$, which is the product of $\frac{1}{C_{\beta} (1-ik)}$ and $e^{-(1+ik)^{\beta}}$, we can use the product rule to compute its derivatives as }
    \begin{align}
        g^{(q)} &= \frac{1}{C_{\beta}} \sum_{p=0}^q \binom{q}{p} \left(\frac{1}{1-ik}\right)^{(q-p)} \left( e^{-(1+ik)^{\beta}} \right)^{(p)} \\
        & = \frac{1}{C_{\beta}} \sum_{p=0}^q \binom{q}{p} \frac{i^{q-p}(q-p)!}{(1-ik)^{q-p+1}} \left( e^{-(1+ik)^{\beta}} \right)^{(p)}, 
    \end{align}
    and 
    \begin{align}\label{eqn:quadrature_error_proof_g_est}
        |g^{(q)}| \leq \frac{1}{C_{\beta}} \sum_{p=0}^q \binom{q}{p} (q-p)! \frac{(2p)!}{p!} 
        = \frac{q!}{C_{\beta}} \sum_{p=0}^q \binom{2p}{p} \leq \frac{q!}{C_{\beta}} \sum_{p=0}^q 2^{2p} \leq \frac{4}{3C_{\beta}} 4^q q!. 
    \end{align}
    Now, we use the product formula again, as well as~\cref{eqn:quadrature_error_proof_g_est} and~\cref{eqn:quadrature_error_proof_U_est}, to obtain 
    \begin{align}
        \|(gU)^{(2Q)}\| & \leq \sum_{q=0}^{2Q} \binom{2Q}{q} |g^{(2Q-q)}| \|U^{(q)}\| \\
        & \leq \sum_{q=0}^{2Q} \binom{2Q}{q} \frac{4}{3C_{\beta}} 4^{2Q-q} (2Q-q)! (T\max_t \|L(t)\|)^q q^q. 
    \end{align}
    Since $q^q \leq e^q q!$, 
    \begin{align}
        \|(gU)^{(2Q)}\| & \leq \sum_{q=0}^{2Q} \binom{2Q}{q} \frac{4}{3C_{\beta}} 4^{2Q-q} (2Q-q)! (T\max_t \|L(t)\|)^q e^q q! \\
        & = \frac{4}{3C_{\beta}} 4^{2Q} \sum_{q=0}^{2Q} (2Q)! \left(\frac{eT\max_t \|L(t)\|}{4}\right)^q  \\
        & \leq \frac{4}{3C_{\beta}} 4^{2Q}  (2Q)! (2Q+1) \left(\frac{eT\max_t \|L(t)\|}{4}\right)^{2Q} \\
        & = \frac{4}{3C_{\beta}} (2Q)! (2Q+1) \left(eT\max_t \|L(t)\|\right)^{2Q}. 
    \end{align}
    Plugging this estimate into the quadrature error formula~\cref{eqn:quadrature_error_proof_eq1}, we have 
    \begin{align}
        & \quad \left\| \int_{mh_1}^{(m+1)h_1} g(k) U(T,k) \ud k -  \sum_{q=0}^{Q-1} c_{q,m} U(T,k_{q,m}) \right\| \\
        & \leq \frac{(Q!)^4 h_1^{2Q+1} }{ (2Q+1) ((2Q)!)^3 } \frac{4}{3C_{\beta}} (2Q)! (2Q+1) \left(eT\max_t \|L(t)\|\right)^{2Q} \\
        & = \frac{(Q!)^4 h_1^{2Q+1} }{ ((2Q)!)^2 } \frac{4}{3C_{\beta}} \left(eT\max_t \|L(t)\|\right)^{2Q} \\
        & \leq  \frac{4 }{3C_{\beta}} h_1^{2Q+1} \left(\frac{eT\max_t \|L(t)\|}{2}\right)^{2Q}. 
    \end{align}
    By summing over all the short intervals, we have 
    \begin{align}
        & \left\| \int_{-K}^{K} g(k) U(T,k) \ud k - \sum_{m = -K/h_1}^{K/h_1-1} \sum_{q=0}^{Q-1} c_{q,m} U(T,k_{q,m}) \right\| \nonumber\\
        &\quad \leq \sum_{m = -K/h_1}^{K/h_1-1} \left\| \int_{mh_1}^{(m+1)h_1} g(k) U(T,k) \ud k -  \sum_{q=0}^{Q-1} c_{q,m} U(T,k_{q,m}) \right\| \\
        &\quad \leq  \frac{8 }{3C_{\beta}} K h_1^{2Q} \left(\frac{eT\max_t \|L(t)\|}{2}\right)^{2Q}. 
    \end{align}
    This completes the proof of the first part. 

    For the second part, according to the choice of $h_1 = 1/(eT\max_t\|L(t)\|)$, we further bound 
    \begin{equation}
        \left\| \int_{-K}^{K} g(k) U(T,k) \ud k - \sum_{m = -K/h_1}^{K/h_1-1} \sum_{q=0}^{Q-1} c_{q,m} U(T,k_{q,m}) \right\|  \leq \frac{8 }{3C_{\beta}} K \frac{1}{2^{2Q}}. 
    \end{equation}
    In order to bound this by $\epsilon$, it suffices to choose 
    \begin{equation}
        Q = \left\lceil \frac{1}{\log 4} \log\left( \frac{8}{3C_{\beta}} \frac{K}{\epsilon} \right) \right\rceil = \mathcal{O}\left( \log\left( \frac{K}{\epsilon} \right) \right) = \mathcal{O}\left( \log\left( \frac{1}{\epsilon} \right) \right), 
    \end{equation}
    where the last equation is due to $K = \mathcal{O}\left(\left(\log(1/\epsilon)\right)^{1/\beta}\right)$ from~\cref{lem:truncation}. 
\end{proof}

\section{Proof of~\texorpdfstring{\cref{lem:coefficient_1norm}}{Lemma 12}}
\label{app:coefficient_1norm}

\begin{proof}[Proof of~\cref{lem:coefficient_1norm}]
    Notice that 
    \begin{equation}
        \sum_{q,m} |c_{q,m}| = \sum_{m=-K/h_1}^{K/h_1-1} \sum_q w_q |g(k_{q,m})|, 
    \end{equation}
    which is the composite Gaussian quadrature formula for the integral $\int_{-K}^{K} |g(k)| \ud k $. 
    Then 
    \REV{\begin{equation}
        \sum_{q,m} |c_{q,m}| \leq \int_{-\infty}^{\infty} |g(k)| \ud k + \mathcal{E}, 
    \end{equation}
    where $\mathcal{E}$ is the corresponding quadrature error. 
    It suffices to show that both the integral and the quadrature error are at most $\mathcal{O}(1)$. }

    \REV{From~\cref{eqn:bound_f_abs}, we have 
    \begin{align}
        |g(k)| \leq \frac{1}{C_{\beta} |1-ik| e^{|k|^{\beta}\cos(\beta\pi/2)} } \leq \frac{e^2}{2\pi \sqrt{k^2+1} (1+|k|^{\beta} \cos(\beta \pi /2)) }, 
    \end{align}
    where in the second inequality we use $C_{\beta} \geq 2\pi e^{-2}$ and $e^{x} \geq 1+x$ for $x \geq 0$. 
    To bound the integral $\int_{-\infty}^{\infty} |g(k)| \ud k = 2 \int_{0}^{\infty} |g(k)| \ud k$, we divide the interval $(0,\infty)$ into $(0,1)$ and $(1,\infty)$. 
    On the interval $(0,1)$, we simply bound 
    \begin{equation}
        |g(k)| \leq |g(0)| = \frac{e^2}{2\pi}, 
    \end{equation}
    and on the interval $(1,\infty)$, we use 
    \begin{equation}
        |g(k)| \leq \frac{e^2}{2\pi \sqrt{k^2+1} (1+|k|^{\beta} \cos(\beta \pi /2)) } \leq \frac{e^2}{2\pi k^{\beta+1} \cos(\beta \pi /2) }. 
    \end{equation}
    Then we have 
    \begin{align}
        \int_{-\infty}^{\infty} |g(k)| \ud k & = 2\int_0^1 |g(k)| \ud k + 2 \int_1^{\infty} |g(k)| \ud k \\
        & \leq 2\int_0^1 \frac{e^2}{2\pi} \ud k + 2 \int_1^{\infty} \frac{e^2}{2\pi k^{\beta+1} \cos(\beta \pi /2) } \ud k \\
        & = \frac{e^2}{\pi}\left( 1 + \frac{1}{\beta \cos(\beta\pi/2)} \right). 
    \end{align}
    Using $\cos(\beta\pi/2) = \sin((1-\beta)\pi/2) \geq 1-\beta$, we can further bound 
    \begin{equation}\label{eqn:bound_int_gabs}
        \int_{-\infty}^{\infty} |g(k)| \ud k \leq \frac{e^2}{\pi}\left( 1 + \frac{1}{\beta (1-\beta) } \right)
    \end{equation}
    which is $\mathcal{O}(1)$ for a fixed value of $\beta$. 
    }

    \REV{Now we estimate the quadrature error 
    \begin{equation}
        \mathcal{E} = \left|\int_{-K}^K |g(k)| \ud k - \sum_{m=-K/h_1}^{K/h_1-1} \sum_q w_q |g(k_{q,m})| \right| \leq \sum_{m=-K/h_1}^{K/h_1-1} \left| \int_{mh_1}^{(m+1)h_1} |g(k)| \ud k - \sum_q w_q |g(k_{q,m})| \right|, 
    \end{equation}
    which can be bounded in a similar way as the proof of~\cref{lem:quadrature} in~\cref{app:quadrature_error}. 
    Specifically, for each interval $[mh_1,(m+1)h_1]$, by~\cite[p.~98]{DavisRabinowitz1984}, we can bound the quadrature error as 
    \begin{equation}\label{eqn:proof_c1norm_eq4}
        \left| \int_{mh_1}^{(m+1)h_1} |g(k)| \ud k - \sum_q w_q |g(k_{q,m})| \right| \leq \frac{(Q!)^4 h_1^{2Q+1} }{ (2Q+1) ((2Q)!)^3 } \max_{k} \left| |g(k)|^{(2Q)} \right|. 
    \end{equation}
    Let us estimate the derivatives of $|g(k)| = \frac{1}{C_{\beta} \sqrt{k^2+1} e^{(k^2+1)^{\beta/2}\cos(\beta\arctan k) } }$. 
    Define 
    \begin{equation}
        h(k) = \log |g(k)| = -\log C_{\beta} - \frac{1}{2}\log(k^2+1) - (k^2+1)^{\beta/2}\cos(\beta\arctan k). 
    \end{equation}
    Then the high-order chain rule (a.k.a.~Fa\`a di Bruno's formula) gives 
    \begin{align}
        |g(k)|^{(p)} = \left( e^{h(k)} \right)^{(p)} = \sum_{\sum_{j=1}^p jq_j = p} \frac{p!}{q_1! (1!)^{q_1} q_2! (2!)^{q_2} \cdots q_p! (p!)^{q_p}} e^{h(k)} \prod_{j=1}^p \left( \left(h(k)\right)^{(j)} \right)^{q_j}, 
    \end{align}
    and thus 
    \begin{equation}\label{eqn:proof_c1norm_eq3}
        \left||g(k)|^{(p)} \right| \leq |g(k)| \sum_{\sum_{j=1}^p jq_j = p} \frac{p!}{q_1! (1!)^{q_1} q_2! (2!)^{q_2} \cdots q_p! (p!)^{q_p}}  \prod_{j=1}^p \left| \left(h(k)\right)^{(j)} \right|^{q_j}. 
    \end{equation}
    Let $r(k) = \sqrt{k^2+1}$ and $\theta(k) = \arctan k$, and then 
    \begin{equation}
        h(k) = -\log C_{\beta} - \log r(k) - r(k)^{\beta}\cos(\beta\theta(k)). 
    \end{equation}
    We handle the latter two terms separately. 
    }

    \REV{
    For $\log r(k)$, we have 
    \begin{equation}
        \frac{\ud}{\ud k} \log r(k) = \frac{k}{k^2+1} = \frac{1}{2}\left( \frac{1}{k+i} + \frac{1}{k-i} \right),
    \end{equation}
    so 
    \begin{equation}
        \frac{\ud^p}{\ud k^p} \log r(k) = \frac{(-1)^{p-1}(p-1)!}{2} \left( (k+i)^{-p} + (k-i)^{-p} \right) 
    \end{equation}
    and 
    \begin{equation}\label{eqn:proof_c1norm_eq1}
        \left|\frac{\ud^p}{\ud k^p} \log r(k)\right| \leq (p-1)!. 
    \end{equation}
    For $r(k)^{\beta}\cos(\beta\theta(k))$, we first use the high-order chain rule to obtain 
    \begin{align}
        \frac{\ud^p}{\ud k^p} r(k)^{\beta} &= \frac{\ud^p}{\ud k^p} (k^2+1)^{\beta/2}\\
        &= \sum_{\sum_{j=1}^p jq_j = p} \frac{p!}{q_1! (1!)^{q_1} q_2! (2!)^{q_2} \cdots q_p! (p!)^{q_p}} \left( \frac{\beta}{2}\left(\frac{\beta}{2}-1\right)\cdots \left(\frac{\beta}{2}-\sum q_j + 1\right) \right) \nonumber\\
        & \quad\quad\quad\quad\quad\quad \times (k^2+1)^{\beta/2-\sum q_j} \prod_{j=1}^p \left( \left(k^2+1\right)^{(j)} \right)^{q_j} \\
        & = \sum_{ q_1+2q_2 = p} \frac{p!}{q_1! q_2! 2^{q_2} } \left( \frac{\beta}{2}\left(\frac{\beta}{2}-1\right)\cdots \left(\frac{\beta}{2}-q_1-q_2 + 1\right) \right) (k^2+1)^{\beta/2-q_1-q_2} \left( 2k \right)^{q_1} 2^{q_2}. 
    \end{align}
    Using $|(k^2+1)^{\beta/2-q_1-q_2}  k^{q_1} | \leq 1$, we have 
    \begin{align}
        \left| \frac{\ud^p}{\ud k^p} r(k)^{\beta} \right| &\leq \sum_{ q_1+2q_2 = p} \frac{p! 2^{q_1} (q_1+q_2-1)! }{q_1! q_2! } = p! 2^p \sum_{ q=0}^{\lfloor p/2 \rfloor } \frac{ 2^{-2q} (p-q-1)! }{q! (p-2q)! } \\
        & \leq p! 2^p \sum_{ q=0}^{\lfloor p/2 \rfloor } \frac{ 2^{-2q} p^{q-1}}{q!} \leq (p-1)! 2^p e^{p/4} \leq (p-1)! 3^p. 
    \end{align}
    Similarly, 
    \begin{align}
        \frac{\ud^p}{\ud k^p} \cos(\beta \theta(k)) = \sum_{\sum_{j=1}^p jq_j = p} \frac{p!}{q_1! (1!)^{q_1} q_2! (2!)^{q_2} \cdots q_p! (p!)^{q_p}} \left( \frac{\ud^{\sum q_j}}{\ud \theta^{\sum q_j}} \cos(\beta \theta) \right) \prod_{j=1}^p \left( \left(\theta(k)\right)^{(j)} \right)^{q_j}. 
    \end{align}
    Using $\left|\left( \frac{\ud^{\sum q_j}}{\ud \theta^{\sum q_j}} \cos(\beta \theta) \right)\right| \leq 1$ and $|\left(\theta(k)\right)^{(j)}| \leq (j-1)!$, we have 
    \begin{align}
        \left| \frac{\ud^p}{\ud k^p} \cos(\beta \theta(k)) \right| \leq \sum_{\sum_{j=1}^p jq_j = p} \frac{p!}{q_1! (1!)^{q_1} q_2! (2!)^{q_2} \cdots q_p! (p!)^{q_p}} \prod_{j=1}^p \left( (j-1)! \right)^{q_j} \leq \frac{(2p)!}{p!} \leq p! 4^p,  
    \end{align}
    where the second inequality follows from the same reasoning as obtaining~\cref{eqn:bound_FdB_sum}. 
    Therefore, by the product rule, we have 
    \begin{align}
        \left| \frac{\ud^p}{\ud k^p} \left(r(k)^{\beta}\cos(\beta \theta(k))\right) \right| & \leq \sum_{j=0}^{p} \binom{p}{j} \left| \frac{\ud^j}{\ud k^j} r(k)^{\beta} \right| \left|  \frac{\ud^{p-j}}{\ud k^{p-j}} \cos(\beta \theta(k)) \right| \\
        & \leq \sum_{j=0}^{p} \frac{p!}{j!(p-j)!} (j-1)! 3^j (p-j)! 4^{p-j} \leq p! 4^p p. \label{eqn:proof_c1norm_eq2}
    \end{align}
    By~\cref{eqn:proof_c1norm_eq1} and~\cref{eqn:proof_c1norm_eq2}, we have 
    \begin{equation}
        \left| (h(k))^{(j)} \right| \leq (j-1)! + j! 4^j j \leq (j+1)! 4^j. 
    \end{equation}
    Plugging this back into~\cref{eqn:proof_c1norm_eq3}, we have 
    \begin{align}
        \left||g(k)|^{(p)} \right| &\leq \frac{1}{C_{\beta}} \sum_{\sum_{j=1}^p jq_j = p} \frac{p!}{q_1! (1!)^{q_1} q_2! (2!)^{q_2} \cdots q_p! (p!)^{q_p}}  \prod_{j=1}^p \left( (j+1)! 4^j \right)^{q_j} \\
        & = \frac{1}{C_{\beta}} \sum_{\sum_{j=1}^p jq_j = p} \frac{p! 4^p }{q_1! q_2! \cdots q_p!}  \prod_{j=1}^p \left( (j+1)^{1/j} \right)^{j q_j} \\
        & \leq \frac{1}{C_{\beta}} \sum_{\sum_{j=1}^p jq_j = p} \frac{p! 8^p }{q_1! q_2! \cdots q_p!} \leq \frac{8^p}{C_{\beta}} \frac{(2p)!}{p!} \leq \frac{p! 32^p}{C_{\beta}}. 
    \end{align}
    }

    \REV{Now, according to the quadrature error bound in~\cref{eqn:proof_c1norm_eq4}, we have 
    \begin{align}
        \left| \int_{mh_1}^{(m+1)h_1} |g(k)| \ud k - \sum_q w_q |g(k_{q,m})| \right| & \leq \frac{1}{C_{\beta}} \frac{(Q!)^4 h_1^{2Q+1} }{ (2Q+1) ((2Q)!)^2 } 32^{2Q} \leq \frac{1}{3 C_{\beta}}  h_1^{2Q+1} 16^{2Q}, 
    \end{align}
    and 
    \begin{equation}
        \mathcal{E} \leq \frac{2K}{3 C_{\beta}}  (16 h_1)^{2Q} . 
    \end{equation}
    According to the choices of $h_1 = 1/(eT\max_t\|L(t)\|)$ in~\cref{lem:quadrature}, for sufficiently large $T$ and $\max_t\|L(t)\|$, we can bound $h_1 \leq 1/32$ and 
    \begin{equation}
        \mathcal{E} \leq  \frac{2K}{3 C_{\beta}} \frac{1}{2^{2Q}}. 
    \end{equation}
    Furthermore, the choice of $Q$ in~\cref{lem:quadrature} assures that $\frac{8K}{3 C_{\beta}} \frac{1}{2^{2Q}} \leq \epsilon$, so we have $\mathcal{E} \leq \epsilon/4 \leq \mathcal{O}(1)$, which completes the proof. 
    }
\end{proof}

\section{Proof of~\texorpdfstring{\cref{lem:quadrature_error_inhomo}}{Lemma 13}}
\label{app:discretization_error_inhomo}

\begin{proof}[Proof of~\cref{lem:quadrature_error_inhomo}]
    We first bound the discretization error of the variable $k$. 
    Notice that 
    \begin{align}
        & \left\| \int_0^T \mathcal{T}e^{-\int_s^T A(s')\ud s'} b(s) \ud s - \int_0^T \sum_{m_1 = -K/h_1}^{K/h_1-1} \sum_{q_1=0}^{Q_1-1} c_{q_1,m_1} U(T,s,k_{q_1,m_1})  b(s)  \ud s \right\| \\
        &\quad \leq \|b\|_{L^1} \max_s \left\| \mathcal{T}e^{-\int_s^T A(s')\ud s'} - \sum_{m_1 = -K/h_1}^{K/h_1-1} \sum_{q_1=0}^{Q_1-1} c_{q_1,m_1} U(T,s,k_{q_1,m_1})  \right\|. 
    \end{align}
    It suffices to bound the worst-case discretization error by $\mathcal{O}(\epsilon/\|b\|_{L^1})$. 
    According to~\cref{lem:truncation} and~\cref{lem:quadrature}, we may choose 
    \begin{equation}
        K = \mathcal{O}\left( \left(\log\left(1+\frac{\|b\|_{L^1}}{\epsilon}\right)\right)^{1/\beta} \right), \quad h_1 = \frac{1}{eT \max_t\|L(t)\|}, \quad Q_1 = \mathcal{O}\left( \log\left( 1+\frac{\|b\|_{L^1}}{\epsilon} \right) \right).  
    \end{equation}

    The second part of the error can be bounded as 
    \begin{align}
            &\left\| \int_0^T \sum_{m_1 = -K/h_1}^{K/h_1-1} \sum_{q_1=0}^{Q_1-1} c_{q_1,m_1} U(T,s,k_{q_1,m_1})  b(s)  \ud s  \right. \nonumber \\
            &\qquad \left. - \sum^{T/h_2-1}_{m_2 = 0} \sum_{q_2=0}^{Q_2-1} \sum_{m_1 = -K/h_1}^{K/h_1-1} \sum_{q_1=0}^{Q_1-1} c'_{q_2,m_2} c_{q_1,m_1} U(T,s_{q_2,m_2},k_{q_1,m_1})  \ket{b(s_{q_2,m_2})} \right\| \\
            &\quad \leq \|c\|_1 \max_k \left\| \int_0^T U(T,s,k)  b(s)  \ud s - \sum^{T/h_2-1}_{m_2 = 0} \sum_{q_2=0}^{Q_2-1}  c'_{q_2,m_2} U(T,s_{q_2,m_2},k)  \ket{b(s_{q_2,m_2})} \right\|.  \label{eqn:proof_inhomo_error_eq3}
    \end{align}
    Let $h(s) = U(T,s,k)b(s) $. Then by~\cite[p.~98]{DavisRabinowitz1984}, we have 
    \begin{equation}\label{eqn:proof_inhomo_error_eq2}
        \left\| \int_0^T U(T,s,k)  b(s)  \ud s - \sum^{T/h_2-1}_{m_2 = 0} \sum_{q_2=0}^{Q_2-1}  c'_{q_2,m_2} U(T,s_{q_2,m_2},k)  \ket{b(s_{q_2,m_2})} \right\| \leq \frac{T (Q_2!)^4 h_2^{2Q_2} }{ (2Q_2+1) ((2Q_2)!)^3 } \max\left\| h^{(2Q_2)} \right\|. 
    \end{equation}
    Here the superscript $(q)$ denotes the $q$th-order derivative with respect to $s$, and we omit the explicit dependence on the arguments whenever the text is clear. 
    We now bound the derivatives of $h$. 
    Notice that $U(T,s,k) = \mathcal{T} e^{-i \int_0^{T-s} (k \widetilde{L}(s') + \widetilde{H}(s') ) \ud s' }$, where $\widetilde{L}(s) = L(T-s)$ and $\widetilde{H}(s) = H(T-s)$. 
    Consider $V(s) = \mathcal{T} e^{-i \int_0^{s} (k \widetilde{L}(s') + \widetilde{H}(s') ) \ud s' }$. 
    Notice that $\max_s \|V^{(p)}\| = \max_s \|U^{(p)}(T,s,k)\|$ for each $p$, and similarly for $L,\widetilde{L}$ and $H,\widetilde{H}$, respectively. 
    By the definition of $V$, 
    \begin{equation}
        \frac{\ud V}{\ud s} = - i( k \widetilde{L} + \widetilde{H} ) V, 
    \end{equation}
    we have 
    \begin{equation}
        V^{(p)} = -i \sum_{j=0}^{p-1} \binom{p-1}{j} (k\widetilde{L} + \widetilde{H})^{(p-1-j)} V^{(j)},
    \end{equation}
    and thus 
    \begin{align}
        \max \|V^{(p)}\| & \leq \sum_{j=0}^{p-1} \binom{p-1}{j} \left( K \max \|L^{(p-1-j)}\| + \max \|H^{(p-1-j)}\| \right) \max \|V^{(j)}\| \\
        & \leq 2K \sum_{j=0}^{p-1} \binom{p-1}{j} \max \|A^{(p-1-j)}\|  \max \|V^{(j)}\|. 
    \end{align}
    By the definition of $\Lambda = \sup_{p \geq 0, t \in [0,T]} \|A^{(p)}\|^{1/(p+1)} $, we have 
    \begin{equation}
        \max \|V^{(p)}\| \leq 2K\Lambda \sum_{j=0}^{p-1} \binom{p-1}{j} \Lambda^{p-1-j}  \max \|V^{(j)}\|. 
    \end{equation}
    By induction, we can prove that 
    \begin{equation}\label{eqn:proof_inhomo_error_eq1}
        \max \|V^{(p)}\| \leq \left( (p+1) K\Lambda \right)^p. 
    \end{equation}
    Specifically, the cases $p=0,1$ are straightforward. 
    Supposing that the estimate holds for values less than $p$, we have 
    \begin{align}
            \max \|V^{(p)}\| &\leq 2K\Lambda \sum_{j=0}^{p-1} \binom{p-1}{j} \Lambda^{p-1-j}   \left( (j+1) K\Lambda \right)^j \\
            & \leq 2K\Lambda \sum_{j=0}^{p-1} \binom{p-1}{j} \Lambda^{p-1-j}   \left( p K\Lambda \right)^j \\
            & = 2K\Lambda \left(p K\Lambda + \Lambda \right)^{p-1} \\
            & \leq \left((p+1) K\Lambda\right)^p. 
    \end{align}
    Using the product rule, we have
    \begin{equation}
            \max \|h^{(2Q_2)}\| \leq \sum_{j=0}^{2Q_2} \binom{2Q_2}{j} \max \|b^{(2Q_2-j)}\| \max \|V^{(j)}\|. 
    \end{equation}
    By the definition $\Xi = \sup_{ p\geq 0, t \in [0,T] } \|b^{(p)}\|^{1/(p+1)} $ and~\cref{eqn:proof_inhomo_error_eq1}, we have 
    \begin{align}
            \max \|h^{(2Q_2)}\| &\leq \sum_{j=0}^{2Q_2} \binom{2Q_2}{j} \Xi^{2Q_2-j+1} \left( (j+1) K\Lambda \right)^j \\
            & \leq \Xi \sum_{j=0}^{2Q_2} \binom{2Q_2}{j} \Xi^{2Q_2-j} \left( (2Q_2+1) K\Lambda \right)^j \\
            & \leq \Xi \left( (2Q_2+1) K\Lambda + \Xi \right)^{2Q_2} \\
            & \leq (2Q_2+1)^{2Q_2} K^{2Q_2} (\Lambda+\Xi)^{2Q_2+1}. 
    \end{align}
    Plugging this back into~\cref{eqn:proof_inhomo_error_eq2} and using $q! \leq e^q q! $ and $(2q)! \geq 2^q (q!)^2 $, we have 
    \begin{align}
             & \quad \left\| \int_0^T U(T,s,k)  b(s)  \ud s - \sum^{T/h_2-1}_{m_2 = 0} \sum_{q_2=0}^{Q_2-1}  c'_{q_2,m_2} U(T,s_{q_2,m_2},k)  \ket{b(s_{q_2,m_2})} \right\| \\
             & \leq \frac{T (Q_2!)^4 h_2^{2Q_2} }{ (2Q_2+1) ((2Q_2)!)^3 } (2Q_2+1)^{2Q_2} K^{2Q_2} (\Lambda+\Xi)^{2Q_2+1}  \\
             & \leq \frac{T (Q_2!)^4 h_2^{2Q_2} }{ (2Q_2+1) ((2Q_2)!)^3 } e (2Q_2)^{2Q_2} K^{2Q_2} (\Lambda+\Xi)^{2Q_2+1} \\
             & \leq \frac{T (Q_2!)^4 h_2^{2Q_2} }{ (2Q_2+1) ((2Q_2)!)^2 } e^{2Q_2+1} K^{2Q_2} (\Lambda+\Xi)^{2Q_2+1} \\
             & \leq  \frac{e T (\Lambda+\Xi)}{ 2Q_2+1 } \left(\frac{e h_2 K (\Lambda+\Xi)}{2} \right)^{2Q_2}. 
    \end{align}
    By choosing 
    \begin{equation}
        h_2 = \frac{1}{eK(\Lambda+\Xi)}
    \end{equation}
    and dropping the term $2Q_2+1$ in the denominator, we have 
    \begin{equation}
        \left\| \int_0^T U(T,s,k)  b(s)  \ud s - \sum^{T/h_2-1}_{m_2 = 0} \sum_{q_2=0}^{Q_2-1}  c'_{q_2,m_2} U(T,s_{q_2,m_2},k)  \ket{b(s_{q_2,m_2})} \right\| \leq \frac{e T (\Lambda+\Xi)}{ 4^{Q_2} }. 
    \end{equation}
    Plugging this back into~\cref{eqn:proof_inhomo_error_eq3}, we can bound the second part of the discretization error by 
    \begin{equation}
        \|c\|_1 \frac{e T (\Lambda+\Xi)}{ 4^{Q_2} } = \mathcal{O}\left( \frac{T (\Lambda+\Xi)}{ 4^{Q_2} } \right). 
    \end{equation}
    In order to bound it by $\epsilon/2$, it suffices to choose 
    \begin{equation}
        Q_2 = \mathcal{O}\left( \log\left(\frac{T(\Lambda+\Xi)}{\epsilon}\right) \right),
    \end{equation}
which completes the proof.
\end{proof}

\end{document}